\DeclareMathOperator*{\trans}{\mathsf{trans}}
\DeclareMathOperator{\phitolambda}{\trans_{\varphi\to\lambda}}
\DeclareMathOperator{\inc}{\mathsf{inc}}
\DeclareMathOperator{\redto}{\rightsquigarrow}
\DeclareMathOperator{\redtomany}{\stackrel{\ast}{\rightsquigarrow}}
\DeclareMathOperator{\parredtomany}{\stackrel{\ast}{\Rrightarrow}}
\DeclareMathOperator{\headred}{\stackrel{\textit{h}}{\rightsquigarrow}}
\DeclareMathOperator{\leftmost}{\stackrel{\textit{n.o.}}{\rightsquigarrow}}
\DeclareMathOperator{\innerred}{\stackrel{\textit{i}}{\rightsquigarrow}}
\DeclareMathOperator{\headredmany}{\stackrel{\textit{h}^\ast}{\rightsquigarrow}}
\DeclareMathOperator{\innerredmany}{\stackrel{\textit{i}^\ast}{\rightsquigarrow}}
\DeclareMathOperator{\leftmostmany}{\stackrel{\textit{n.o.}^\ast}{\rightsquigarrow}}
\DeclareMathOperator{\innerparred}{\stackrel{\textit{i}}{\Rrightarrow}}
\newcommand{\reccat}[2]{#1\;\|\;#2}
\newcommand{\recext}[2]{#1\;\text{\textbf{with}}\;#2}
\newcommand{\mkObject}[1]{\llbracket #1 \rrbracket}
\newcommand{\inct}[2]{#2\uparrow{}^{#1}}
\newcommand{\mkState}[1]{\langle #1 \rangle}
\DeclareMathOperator{\attr}{\mathsf{attr}}
\DeclareMathOperator{\fix}{\mathsf{fix}}
\begin{document}

\title[Formalizing $\varphi$-calculus]{Formalizing $\varphi$-calculus: a purely object-oriented calculus of decorated objects}         


\author{Nikolai Kudasov}
\orcid{0000-0001-6572-7292}             
\affiliation{
  \institution{Innopolis University}            
  \streetaddress{Universitetskaya 1}
  \city{Innopolis}
  \state{Tatarstan Republic}
  \postcode{420500}
  \country{Russia}
}
\email{n.kudasov@innopolis.ru}          

\author{Violetta Sim}
\affiliation{
  \institution{Innopolis University}            
  \streetaddress{Universitetskaya 1}
  \city{Innopolis}
  \state{Tatarstan Republic}
  \postcode{420500}
  \country{Russia}
}
\email{v.sim@innopolis.university}         

\begin{abstract}
  Many calculi exist for modelling various features of object-oriented languages. Many of them are based on $\lambda$-calculus and focus either on statically typed class-based languages or dynamic prototype-based languages.
  We formalize untyped calculus of decorated objects, informally presented by \citeauthor{bugayenko/online}, which is defined in terms of \emph{objects} and relies on \emph{decoration} as a primary mechanism of object extension. It is not based on $\lambda$-calculus, yet with only four basic syntactic constructions is just as complete. We prove the calculus is confluent (i.e. possesses Church-Rosser property), and introduce an abstract machine for call-by-name evaluation. Finally, we provide a sound translation to $\lambda$-calculus with records.
\end{abstract}

\begin{CCSXML}
<ccs2012>
<concept>
<concept_id>10003752.10003753</concept_id>
<concept_desc>Theory of computation~Models of computation</concept_desc>
<concept_significance>500</concept_significance>
</concept>
</ccs2012>
\end{CCSXML}

\ccsdesc[500]{Theory of computation~Models of computation}

\keywords{models of computation, $\varphi$-calculus, object-oriented programming}

\maketitle

\section{Introduction}



Recently, Bugayenko \cite{bugayenko/online} introduced a new programming language EO and its semantics in terms of an informally specified calculus which he calls $\varphi$-calculus. The EO language incorporates Decorator pattern \cite[Chapter 4]{GoF1995} as the only mechanism of extending objects. This is somewhat similar to delegation-based inheritance, which makes EO close in spirit to Self~\cite{UngarSmith1987}. Bugayenko's paper lays out interesting ideas, but suffers from inaccuracies and insufficient formalization of the calculus, as the paper lacks reduction semantics and any soundness results. In this paper, we formalize the key ideas of Bugayenko's $\varphi$-calculus.

\subsection{Existing formalisations}

Formalizing object-oriented features of programming languages is an old but still vibrant topic in computer science. Many of the formal models of object-oriented languages, extensions of which are used and studied today, have emerged in the 1990s.

Early formalisations intended for immediate use in software development go at least to VDM++~\cite{durr_vdm_1992} and Object-Z~\cite{DUKE1995511} which both allow some formal reasoning about classes, objects, and inheritance. Object-Z also has support for multiple inheritance and polymorphism (method overloading).

At the same time, type theoretic models for object-oriented features appear from variations on $\lambda$-calculus. Pierce and Turner's \emph{``Simple type-theoretic foundations for object-oriented languages''}~\cite{pierce_simple_1993} and Cardelli's \emph{pure calculus of subtyping}~\cite{Cardelli1994ExtensibleRI} develop a formal type-theoretic account for the basic features of object-oriented programming in a purely functional setting of typed $\lambda$-calculus.

Abadi and Cardelli's \emph{theory of primitive objects}~\cite{ABADI199678} offers $\varsigma$-calculus, which is a $\lambda$-calculus-like formalism that relies on an external memory to store and modify states of \emph{mutable} objects. The calculus has a sound type system with basic subtyping via subsumption. This work has been expanded into $\text{\textbf{FOb}}_{<:\mu}$ calculus, a variation of System F~\cite[Chapter 11]{Girard1989}.

Featherweight Java (FJ)~\cite{IgarashiPierceWadler2001} introduces a minimal core calculus for Java, using a \emph{nominal} type system. FJ focuses on representing the minimal core of Java, omitting many features, including mutable state. Welterweight Java~\cite{OstlundWrigstad2010} adds a few extra pounds to FJ making it imperative, stateful, thread-based concurrency and lock synchronization. Welterweight Java positions itself as a good starting point for extensions. A recent such extension, OOlong~\cite{CastegrenWrigstad2019} presents a concurrent object calculus aimed at extensibility and reuse. As such OOlong provides mechanised version for rigorous proofs in Coq.

Castagna, Ghelli and Longo introduced a \emph{calculus for overloaded functions with subtyping}~\cite{Castagna1992ACF} (based on $\lambda$-calculus), offerring a model able to represent object-oriented languages with multiple dispatch (mutli-methods). More recent work on formalisations of multiple dispatch includes variations of Featherweight Generic Fortress language~\cite{AllenLuchangcoRyuSteele2007, AllenHilburnKilpatrickLuchangcoRyuChaseSteele2011, ParkHongSteeleRyu2019}, Featherweight Hierarchical Java~\cite{WangZhangOliveiraServetto2018}, and \emph{prototypes with multiple dispatch}~\cite{SalzmanAldrich2005}.

Ababi's semantics of Baby Modula-3~\cite{Abadi1994} and Mitchell, Honsell and Fisher's \emph{Lambda Calculus of Objects, $\lambda\mathrm{Obj}$}~\cite{Fisher1993ALC} both extended $\lambda$-calculus with new syntactic forms to model delegation-based inheritance (a la Self~\cite{UngarSmith1987}). $\lambda\mathrm{Obj}$ supports destructive operations on objects: method addition and method override. The operational semantics of the calculus allows objects to modify themselves, which is a self-inflicted operation~\cite{Cardelli1995}. The most recent work in this direction is by Ciaffaglione, Di Gianantonio, Honsell and Liquori~\cite{Ciaffaglione2021}, who introduce $\lambda\mathrm{Obj}^\oplus$ system, an extension of $\lambda\mathrm{Obj}$ with additional type system features enabling reasoning about \emph{object evolution} and \emph{object reclassification}.

Most of the above models differ in their approaches to subtyping and subclasses, but rely significantly on $\lambda$-calculus as their foundation. At the same time, many modern programming languages do not properly support $\lambda$-abstraction. For example, Java provides lambda expressions which are essentially instances of anonymous classes with a single method. Also, when modelling object-oriented languages, many formalizations focus on statically typed class-based languages while a few prefer prototype-based approach \cite{Ciaffaglione21,SalzmanAldrich2005}.

\subsection{Contribution}

In this paper, we extract the key ideas from Bugayenko's paper and formalize $\varphi$-calculus, a calculus of objects with decoration as the main mechanism for object extension, in a more formal way. Moreover, our interpretation of $\varphi$-calculus is not based on $\lambda$-calculus, so object methods are also represented as objects.

In contrast with Bugayenko's work, we focus on the object-oriented core of the calculus, and do not introduce primitives such as numbers, booleans, mutable memory. Similarly to $\lambda$-calculus, these primitives can either be added via a straightforward extension, or by using an encoding, such as Church numerals. We leave details of such extensions and encodings outside the scope of this paper.

Our specific contributions are the following:


\begin{itemize}
  \item We present syntax and introduce reduction semantics of $\varphi$-calculus, a calculus of objects with decoration.
  \item We prove that $\varphi$-calculus possesses Church-Rosser property.
  \item We define normal order evaluation of $\varphi$-terms and prove its completeness, i.e. any term will be reduced to its normal form under such evaluation order, whenever the normal form exists.
  \item We define an abstract machine for call-by-name evaluation of $\varphi$-terms.
  \item We introduce a translation into $\lambda$-calculus with records that maps objects almost directly into records and prove this translation to be sound. We also show that $\varphi$-calculus is Turing-complete by sketching an embedding of pure $\lambda$-calculus (without records) into $\varphi$-calculus.
  \item We show how to extend $\varphi$-calculus with primitives and syntactic sugar to match capabilities of EO programming language.
\end{itemize}

We note that although we are trying to preserve the original notation and terminology of Bugayenko~\cite{bugayenko/online}, in his paper he uses custom terminology (such as \emph{free} and \emph{bound} attributes) that conflicts the established conventions in the computer science literature. In such cases, we use different terminology (such as \emph{void} and \emph{attached} attributes).

\section{Calculus of decorated objects}
\label{section:calculus}

In this section, we introduce syntax and evaluation rules, providing the intuition behind those. The central concept in $\varphi$-calculus is that of an object. In fact, every term in $\varphi$-calculus is essentially an object.

\begin{definition}
  \label{definition:object}
    Assuming a set of labels $\mathcal{L}$ and a set of terms $T$, an \emph{object} $X$ is a mapping from $\mathcal{L}$ to $T \cup \{\varnothing, \bot\}$. The labels that map to $\bot$ are called \emph{missing}, the labels that map to $\varnothing$ are called \emph{void attributes} of $X$ and labels that map to terms are called \emph{attached attributes} of $X$. Void and attached attributes of $X$ are collectively called \emph{attributes} of $X$ and are denoted $\attr(X) \subseteq \mathcal{L}$. \\
    An object with an non-empty set of void attributes is called \emph{abstract}. Otherwise an object is called \emph{concrete}.
\end{definition}

\begin{example}
  A constant mapping from $\mathcal{L}$ to $\bot$ is an \emph{empty object}
  and is denoted $\mkObject{}$.
\end{example}

\begin{example}
  \label{example:object}
  Let $t_1, t_2$ be terms. Then the following is an object:
  \begin{align*}
    X &: \mathcal{L} \to T \cup \{\varnothing, \bot\} \\
    x &\mapsto \varnothing; y \mapsto t_1; z \mapsto \mkObject{}; \ell \mapsto \bot
  \end{align*}
\end{example}

For convenience we will denote objects by listing void and attached attributes inside double brackets. For example, the object from Example~\ref{example:object} can be written succinctly as $\mkObject{x \mapsto \varnothing, y \mapsto t_1, z \mapsto \mkObject{}}$.

\subsection{Syntax}

The entire syntax of $\varphi$-calculus has only four syntactic constructions:

\begin{definition}
  Let $\mathcal{L}$ be the set of attribute names extended with \emph{decorator attribute} $\varphi$. Then the set of $\varphi$-terms $T$ is defined inductively as following:
  \begin{enumerate}
    \item if $n \in \mathbb{N}$ then $\rho^{n} \in T$; here $\rho$ is merely a symbol used in the syntax, it is not a variable or a meta variable;
    \item if $t \in T$ and $a \in \mathcal{L}$ then $t.a \in T$;
    \item if $t, u \in T$ and $a \in \mathcal{L}$ then $t(a \mapsto u) \in T$;
    \item if $t_1, \ldots, t_n \in T$ and $a_1, \ldots, a_k, b_1, \ldots, b_n \in \mathcal{L}$ then $\mkObject{a_1 \mapsto \varnothing, \ldots, a_k \mapsto \varnothing, b_1 \mapsto t_1, \ldots b_n \mapsto t_n} \in T$.
  \end{enumerate}

  The instance of rule~1 is called \emph{locator}. The instance of rule~2 is called \emph{attribute access}. The instance of rule~3 is called \emph{application} or \emph{attribute instantiation}. The instance of rule~4 is called \emph{object term} (or just \emph{object}).
\end{definition}
An object term is essentially the same as object from Definition~\ref{definition:object} with the restriction that mapping has finitely many attributes.

Locators allow to reference enclosing objects, and consequently, their attributes.
For example, consider object $\mkObject{x \mapsto \rho^0.y, y \mapsto \mkObject{}}$. Here $\rho^0$ references the closest enclosing object, which is the entire term, and so attribute $x$ is attached to a term that essentially references attribute $y$ of the same object. The index in the locator tells us how many levels of enclosing objects one needs to go to arrive at the referenced object. In fact, locators are effectively de Bruijn indices \cite{deBruijn1972} of nested objects. For example, locator $\rho^2$ in the term $\mkObject{x \mapsto \mkObject{y \mapsto \mkObject{z \mapsto \rho^2}}}$ references the entire term. This correspondence is made more precise in Section~\ref{section:translation-to-lambda-calculus} where we give a translation to lambda calculus.

The idea behind attribute access terms is fairly straightforward: we simply intend to extract the associated value. For example, evaluating $\mkObject{x \mapsto \mkObject{}}.x$ should produce an empty object. However, because of locators, we cannot do a simple extraction in general and have to perform locator substitution: evaluating $\mkObject{x \mapsto \rho^0}.x$ requires understanding what object $\rho^0$ references. This limitation makes it very different from syntactically similar construction in $\lambda$-calculus with records.

Attribute instantiation attaches terms to void attributes. Importantly, attached attributes may reference void attributes as in $\mkObject{x \mapsto \varnothing, y \mapsto \rho^0.x}$. Obviously, accessing attribute $y$ of such an object would require accessing void attribute $x$ which does not have an attached value. However, after instantiating the attribute $\mkObject{x \mapsto \varnothing, y \mapsto \rho^0.x}(x \mapsto \mkObject{})$, we can now access attribute $y$ and get the empty object. This example shows that locators are not just references to syntactic objects, but can also reference the result of attribute instantiation.

Decorator attribute $\varphi$ plays an important role in evaluation. This attribute contains the component of the decorator, and an object with $\varphi$ will redirect attribute access to its component whenever the object does not possess the attribute itself. For example, accessing attribute $.x$ of the object $\mkObject{\varphi \mapsto \mkObject{x \mapsto t_1}, y \mapsto t_2}$ will be redirected to accessing $.\varphi.x$ of the object since the original object does not have $x$ as its attribute.

\subsection{Locators}

Before we can properly talk about evaluation, we need to define how locators work. As locators are de Bruijn indices of nested objects, we have to be able to adjust locator indices when moving terms in and out of objects, and replace locators with an actual object they reference when that object disappears.

Attribute instantiation requires putting a term in an object. This requirement may demand updating certain locators. Consider term $\mkObject{x \mapsto \varnothing}(x \mapsto \rho^0)$ where $\rho^0$ references some outer object. Simply replacing $\varnothing$ with $\rho^0$ will result in $\mkObject{x \mapsto \rho^0}$ which would change the meaning of $\rho^0$. Instead, since we are placing $\rho^0$ inside of an object, we have to increment its index. In general though, given $t(a \mapsto u)$ we do not increment all locators in $u$, but only those referencing outside of $u$. For example, if $u \equiv \mkObject{y \mapsto \rho^0, z \mapsto \rho^1}$ then we only increment $\rho^1$, since otherwise its reference object will change when we put $u$ in an object. So we define locator increment as follows:

\begin{definition}
    \emph{Locator increment} $\inct{n}{t}$ is defined inductively on $\varphi$-terms:
    \begin{align}
        \inct{n}{\rho^m} &:= \rho^{m} \quad \text{if $m < n$} \\
        \inct{n}{\rho^m} &:= \rho^{m+1} \quad \text{if $n \leq m$} \\
        \inct{n}{t.a} &:= \inct{n}{t}.a \\
        \inct{n}{t_1(a \mapsto t_2)} &:= \inct{n}{t_1}(a \mapsto \inct{n}{t_2}) \\
        \inct{n}{\mkObject{a_1 \mapsto \varnothing, \ldots, b_1 \mapsto t_1, \ldots, b_n \mapsto t_n}} \span \notag \\
        := \mkObject{a_1 \mapsto \varnothing, \ldots, b_1 \mapsto \inct{n+1}{t_1}, \ldots, b_n \mapsto \inct{n+1}{t_n}} \span
    \end{align}

    We write $\inct{}{t}$ short for $\inct{0}{t}$.
\end{definition}

When accessing an attribute $a$ of an object $u \equiv \mkObject{\ldots, a \mapsto t, \ldots}$, we cannot simply return $t$ as it may contain locators that need adjustment. We can split locators in $t$ into three groups:
\begin{enumerate}
  \item Locators that reference some object \emph{inside} of $t$ require no adjustment as corresponding objects are still present after extraction.
  \item Locators that reference object $u$, which we are extracting from, have to be substituted with $u$ itself, to avoid losing information when we extract the subterm $t$.
  \item Locators that reference some object \emph{outside} of $u$ have to be decremented since one nested object ($u$) disappeared.
\end{enumerate}

We define locator substitution with the usual notation $t[\rho^n \mapsto u]$ meaning that we intend to substitute all locators referencing the same object as $\rho^n$ in $t$ with term $u$ (with all locators updated properly).

\begin{definition}
  \emph{Locator substitution} $t[\rho^n \mapsto u]$ is defined inductively on $\varphi$-terms (see Figure~\ref{fig:core-phi-calculus}).
\end{definition}

\begin{definition}
  A $\varphi$-term $t$ is called \emph{closed} if all its locators reference some object in $t$. Otherwise it is \emph{open}.
\end{definition}

\subsection{Evaluation}

In this section we define small step reduction semantics for $\varphi$-calculus. We introduce four congruence rules, to enable reduction in subterms, as well as three main reduction rules. Attribute access and attribute instantiation provide two reduction rules given an object term. Then, considering the presence of the decorator attribute $\varphi$, we get one extra rule for attribute access. Figure~\ref{fig:core-phi-calculus} shows the complete set of reduction rules.

Rule DOT$_c$ formalizes the idea that extracting attribute $c$ from an object $t$ is straightforward as long as we substitute all locators that reference to $t$ in the resulting term. Rule DOT$_c^\varphi$ specifies further that whenever $c$ is not an attribute of $t$ but $\varphi$ is, we should extract $c$ from $t.\varphi$. Importantly, we do extract the term attached to $\varphi$ immediately. Such extraction would require to check recursively whether we should go to $t.\varphi.\varphi$ and further. Instead we want our rules to perform just a single step of reduction.

Rule APP$_c$ shows that attaching a term $u$ to the attribute $c$ requires incrementing locators in $u$.

\subsubsection{Decorated instantiation}

As we are following the idea of Decorator pattern \cite[Chapter 4]{GoF1995}, in $\varphi$-calculus only concrete objects are supposed to be decorated. That said, one could consider a variation of our calculus where decoration and later instantiation of abstract objects is possible as well. For example, we can introduce the following evaluation rule:

\begin{prooftree}
  \AxiomC{$t \equiv \mkObject{\ldots, \textcolor{black}{\varphi \mapsto t_\varphi}, \ldots}$}
  \AxiomC{$c \notin \attr(t)$}
  \RightLabel{APP$_{c}^\varphi$}
  \BinaryInfC{$t (c \mapsto u) \rightsquigarrow \mkObject{\ldots, \textcolor{black}{\varphi \mapsto t_\varphi (c \mapsto \inct{}{u})}, \ldots}$}
\end{prooftree}

For the rest of this paper we will assume $\varphi$-calculus without APP$_c^\varphi$ rule. However, all results still hold for a variation of the calculus with it.

\begin{figure*}
  \begin{mdframed}

    \emph{Syntax}
    \begin{align*}
        t &:=  \tag{terms} \\
          &\phantom{\mid\;\;}   t.a \tag{attribute} \\
          &\mid   t_1(a \mapsto t_2) \tag{application} \\
          &\mid   \mkObject{ a_1 \mapsto \varnothing, \ldots, a_k \mapsto \varnothing, b_1 \mapsto t_1, \ldots, b_n \mapsto t_n } \tag{object} \\
          &\mid   \rho^n \tag{$n$-th parent object locator}
    \end{align*}
    \hfill\hrule\hfill

    \emph{Evaluation}
    \begin{prooftree}
      \AxiomC{$t_i \rightsquigarrow t_i'$}
      \RightLabel{cong$_{\text{OBJ}}$}
      \UnaryInfC{$\mkObject{\ldots, b_i \mapsto t_i, \ldots} \rightsquigarrow \mkObject{\ldots, b_i \mapsto t_i', \ldots}$}
      \DisplayProof\quad\quad
      \AxiomC{$t \rightsquigarrow t'$}
      \RightLabel{cong$_{\text{DOT}}$}
      \UnaryInfC{$t.a \rightsquigarrow t'.a$}
    \end{prooftree}
    \begin{prooftree}
      \AxiomC{$t \rightsquigarrow t'$}
      \RightLabel{cong$_{\text{APP}^\text{L}}$}
      \UnaryInfC{$t(a \mapsto u) \rightsquigarrow t'(a \mapsto u)$}
      \DisplayProof\quad\quad
      \AxiomC{$u \rightsquigarrow u'$}
      \RightLabel{cong$_{\text{APP}^\text{R}}$}
      \UnaryInfC{$t(a \mapsto u) \rightsquigarrow t'(a \mapsto u')$}
    \end{prooftree}
    \begin{prooftree}
      \AxiomC{$t \equiv \mkObject{\ldots, \textcolor{black}{c \mapsto t_c}, \ldots}$}
      \RightLabel{DOT$_{c}$}
      \UnaryInfC{$t.c \rightsquigarrow \textcolor{black}{t_c} \left[ \rho^0 \mapsto t \right]$}
      \DisplayProof\quad\quad\quad
      \AxiomC{$t \equiv \mkObject{\ldots}$}
      \AxiomC{$c \notin \attr(t)$}
      \AxiomC{$\varphi \in \attr(t)$}
      \RightLabel{DOT$_{c}^\varphi$}
      \TrinaryInfC{$t.c \rightsquigarrow t.\varphi.c$}
    \end{prooftree}
    \begin{prooftree}
      \AxiomC{$t \equiv \mkObject{a_1 \mapsto \varnothing, \ldots, a_k \mapsto \varnothing, \textcolor{black}{c \mapsto \varnothing}, b_1 \mapsto t_1, \ldots, b_n \mapsto t_n}$}
      \RightLabel{APP$_{c}$}
      \UnaryInfC{$t (c \mapsto u) \rightsquigarrow \mkObject{a_1 \mapsto \varnothing, \ldots, a_k \mapsto \varnothing, \textcolor{black}{c \mapsto \inct{}{u}}, b_1 \mapsto t_1, \ldots, b_n \mapsto t_n}$}
    \end{prooftree}
    \hfill\hrule\hfill

    \emph{Locator substitution}
    \begin{align*}
        \rho^n[\rho^m \mapsto u] &:= \rho^n \quad\text{if $n < m$} \\
        \rho^n[\rho^n \mapsto u] &:= u \\
        \rho^n[\rho^m \mapsto u] &:= \rho^{n-1} \quad\text{if $n > m$} \\
        t.a[\rho^n \mapsto u] &:= t[\rho^n \mapsto u].a \\
        t_1(a \mapsto t_2)[\rho^n \mapsto u] &:= t_1[\rho^n \mapsto u](a \mapsto t_2[\rho^n \mapsto u]) \\
        \mkObject{a_1 \mapsto \varnothing, \ldots, b_1 \mapsto t_1, \ldots}[\rho^n \mapsto u]
            &:= \mkObject{a_1 \mapsto \varnothing, \ldots, b_1 \mapsto t_1[\rho^{n+1} \mapsto \inct{}{u}], \ldots}
    \end{align*}
  \end{mdframed}

    \caption{Syntax and evaluation rules for $\varphi$-calculus.}
    \label{fig:core-phi-calculus}
\end{figure*}

\subsection{Modelling object-oriented languages}

Bugayenko \cite{bugayenko/online} introduced $\varphi$-calculus as a semi-formal mathematical model for EO programming language. EO language is an object-oriented programming language that relies on decoration instead of inheritance to work with object hierarchies. In \cite{bugayenko2021reducing}, \citeauthor{bugayenko2021reducing} showed approaches to encode advanced features of mainstream object-oriented languages in EO.
In this subsection, we revisit \citeauthor{bugayenko2021reducing}'s ideas \cite{bugayenko2021reducing} regarding encoding mechanisms of code reuse in class-based and prototype-based constructs in terms of $\varphi$-calculus.

\subsubsection{Class-based}

Classes can be modelled as the so called ``object factories''~--- objects with a special method $\mathsf{new}$, that produces an instance of the class. To model inheritance, this method can take an object as input and extend it with all the necessary methods using decoration.

\begin{example}
  \label{example:phi-classes}
  Consider the following code snippet in Java:
\begin{minted}{java}
class Base { 
    Integer g() { return h(); }
    Integer h() { return 3; }
}
class Derived extends Base {
  Integer f() { return 2 + this.g(); }
  Integer h() { return 5; }
}
Derived d = new Derived();
\end{minted}

In the following translation scheme from Java to $\varphi$-calculus, it is made explicit that methods accept $\mathsf{this}$ argument. It allows to distinguish derived class, when it calls a method that is defined in a subclass.

\begin{align*}
    \mathsf{Base} &:= \mkObject{\mathsf{new} \mapsto  \\
                  &\phantom{:= \llbracket \mathsf{new}} \mkObject{ \mathsf{g} \mapsto \mkObject{\mathsf{this} \mapsto \varnothing, \;\varphi \mapsto \mathsf{\rho^0.this.h(this \mapsto \rho^0.this).\varphi}},\\
                  &\phantom{:= \llbracket \mathsf{new} \llbracket} \mathsf{h} \mapsto \mkObject{\mathsf{this} \mapsto \varnothing, \;\varphi \mapsto 3}}\\
                  &\phantom{:= \llbracket} }\\
 \mathsf{Derived} &:= \mkObject{\mathsf{new} \mapsto  \\
                  &\phantom{:= \llbracket \mathsf{new}} \mkObject{ \varphi \mapsto \mathsf{Base.new},\\
                  &\phantom{:= \llbracket \mathsf{new} \llbracket} \mathsf{f} \mapsto \mkObject{\mathsf{this} \mapsto \varnothing, \;\varphi \mapsto \mathsf{2.add(n\mapsto\rho^0.this.g(this \mapsto \rho^0.this).\varphi)}},\\
                  &\phantom{:= \llbracket \mathsf{new} \llbracket} \mathsf{h} \mapsto \mkObject{\mathsf{this} \mapsto \varnothing, \;\varphi \mapsto 5}}\\
                  &\phantom{:= \llbracket} } \\
d &:= \mathsf{Derived}.\mathsf{new}\span
\end{align*}

In Java, the method invocation \texttt{d.f()} computes to 7: the method \texttt{f()} of the derived class calls \texttt{g()}, defined in the base class, which, in turn, calls \texttt{h()}, which is overridden in the the derived class.
In $\varphi$-terms, with explicit $\mathsf{this}$ arguments, the chain of calls is the same, so the semantics is preserved:
\begin{align}
    \mathsf{d.f(this \mapsto d).\varphi} &\redtomany \mathsf{2.add(n \mapsto d.g(this \mapsto d).\varphi)} \\
    & \redtomany \mathsf{2.add(n \mapsto d.\varphi.g(this \mapsto d).\varphi)}\\
    & \redtomany \mathsf{2.add(n \mapsto Base.new.g(this \mapsto d).\varphi)} \label{eq:dynamic-dispatch}\\
    & \redtomany \mathsf{2.add(n \mapsto d.h(this \mapsto d).\varphi)}\\
    &\redtomany \mathsf{2.add(n \mapsto 5)}
\end{align}

Note that in (\ref{eq:dynamic-dispatch}), `method' $\mathsf{g}$ is called from the $\mathsf{Base}$ `class' with $(\mathsf{this \mapsto d})$. This is what provides proper support of the dynamic dispatch (open recursion) in modeling of classes in $\varphi$-calculus.

\end{example}

The above example aims to provide intuition for $\varphi$-calculus. A proper mapping from Java to $\varphi$-calculus would require more technical details, such as dealing with mutable attributes, generics, interfaces, and other features.

\subsubsection{Prototype-based}

Prototypes in object-oriented languages, such as JavaScript, work similarly to decorators in $\varphi$-calculus: when looking for a method in a JavaScript object, the interpreter checks object's \emph{own properties} first and then, if such properties are absent, the interpreter proceeds to look for the method in the object's \emph{prototype}, unless the prototype is \emph{null}. Importantly, JavaScript's objects are mutable allowing for dynamic prototypes and properties, whereas in $\varphi$-calculus objects are immutable.

\begin{example}
Consider the following code snippet in JavaScript:

\begin{minted}{javascript}
let A = function() { this.x = 3; }
A.prototype.f = function() { return this.x; }
let b = new A();
let c = b.f()
\end{minted}

This snippet would translate to $\varphi$-calculus as follows:
\begin{align*}
  \mathsf{A} := & \mkObject{
    \mathsf{new} \mapsto \mkObject{
      \mathsf{x} \mapsto 3,
      \varphi \mapsto \rho^1.\mathsf{prototype}
    }, \\
    &\;\;\mathsf{prototype} \mapsto \mkObject{
      \mathsf{f} \mapsto \mkObject{
        \mathsf{this} \mapsto \varnothing,
        \varphi \mapsto \rho^0.\mathsf{this}.\mathsf{x}
      }
    }
  } \\
  \mathsf{b} := & \mathsf{A}.\mathsf{new}\\
  \mathsf{c} := &\mathsf{b}.f (\mathsf{this} \mapsto \mathsf{b})
\end{align*}

Similarly to the translation of Java's class-based constructs, here $\mathsf{b}$ is passed as $\mathsf{this}$ argument in the method call (application of) $\mathsf{b.f}$ in order to preserve information about an object that originally called $f$; this is essential whenever attributes are resolved in decoratees (higher in the hierarchy of objects, representing prototypes).
\end{example}
\section{Confluence}

Intuitively, we think of $\varphi$-terms as programs in terms of objects. Moreover, we assume a single meaning to each such program. In other words, every program either diverges (e.g. falls into an infinite loop) or produces the final object (normal form). To justify the use of ``the'' in ``the normal form'' we require the uniqueness of normal form.

In this section we prove a more general result. We show that $\varphi$-calculus possesses the following property: if some term $t$ can be reduced in different ways to terms $u$ and $v$ then there exists some term $w$ such that both $u$ and $v$ reduce to $w$.
This is known as Church-Rosser property:

\begin{definition}
    A relation $\xrightarrow[]{X}$ on terms is said to satisfy \emph{Church-Rosser property} if for any terms $t$, $u$ and $v$, if $t \xrightarrow[]{X} u$ and $t \xrightarrow[]{X} v$, then there exists some term $w$ such that $u \xrightarrow[]{X} w$ and $v \xrightarrow[]{X} w$.
\end{definition}

In general, a $\varphi$-term can be rewritten in different ways, since it may contain several redexes. For example, here is a graph with all possible reductions for a term:
\[
\begin{tikzcd}[column sep=-17ex]
    & \mkObject{x \mapsto \mkObject{y \mapsto \varnothing}}.x(y \mapsto \mkObject{z \mapsto w}.z) \ar[dl] \ar[dr] & \\
    \mkObject{y \mapsto \varnothing}(y \mapsto \mkObject{z \mapsto w}.z)
    \ar[d] \ar[drr]
        & & \mkObject{x \mapsto \mkObject{y \mapsto \varnothing}}.x(y \mapsto w) \ar[d] \\
    \mkObject{y \mapsto \mkObject{z \mapsto w}.z} \ar[dr]
        & & \mkObject{y \mapsto \varnothing}(y \mapsto w) \ar[dl] \\
    & \mkObject{y \mapsto w} & 
\end{tikzcd}
\]

Other terms may have infinite rewrite sequences, e.g.:
\[
\begin{tikzcd}
    \mkObject{x \mapsto \rho^0.y, y \mapsto \rho^0.x}.x
    \ar[d, bend right] \\
    \mkObject{x \mapsto \rho^0.y, y \mapsto \rho^0.x}.y
    \ar[u, bend right]
\end{tikzcd}
\]

In these two examples, we can see diamond property satisfied for $\rightsquigarrow$, but unfortunately, the property does not hold in general.

\begin{example}
    \label{example:diamond-property-counterexample}
    Consider the following $\varphi$-terms $A$ and $B$:
    \begin{alignat*}{3}
        A &\equiv \mkObject{x \mapsto \mkObject{a \mapsto 
        & &\mkObject{z \mapsto \varnothing}}.a
        &, y \mapsto \rho^0.x(z \mapsto \rho^0.x)} \\
        B &\equiv \mkObject{x \mapsto 
        & &\mkObject{z \mapsto \varnothing} 
        &, y \mapsto \rho^0.x(z \mapsto \rho^0.x)}
    \end{alignat*}
    Note that $A \redto B$ by $\text{cong}_{\text{OBJ}}$ and $\text{DOT}_a$. Reduction of $A.y$ illustrates that substitution can introduce multiple redexes, that cannot be reduced in a single step of $\redto$:
    \[
    \begin{tikzcd}
        & A.y \ar[dl, bend right=7] \ar[drr, bend left=7] &\\
        B.y \ar[ddr, bend right=30]
        & & & A.x(z \mapsto A.x) \ar[dll, bend right=20] \ar[d] \ar[ddll, dashed, "\text{\textcolor{red}{no $\redto$ here}}" description]\\
        & B.x(z\mapsto A.x) \ar[d]
        & & B.x(z\mapsto A.x) \ar[dll, bend left=20] \\
        & B.x(z\mapsto B.x) & & 
    \end{tikzcd}
    \]
\end{example}

To prove confluence for $\varphi$-calculus we follow these steps:
\begin{enumerate}
    \item We introduce parallel reduction on $\varphi$-terms; this kind of reduction possesses the diamond property. Using parallel reduction in a proof of confluence is due to Tait and L\"of \cite[Section 3.2]{Barendregt1984}
    \item We show that parallel reduction is equivalent to regular reduction.
    \item We show that parallel reduction possesses the diamond property via complete development, following Takahashi's technique \cite{Takahashi1995}.
    \item We prove confluence for regular reduction via equivalence with parallel reduction, also using the fact that confluence for parallel reduction follows from its diamond property via \cite[Lemma 1.17]{Krivine1993}.
\end{enumerate}

While regular reduction performs exactly one reduction step somewhere in a term, the idea of parallel reduction is to perform arbitrary number of reductions \emph{in parallel}. Performing reductions in parallel intuitively means that we do not reduce a term after performing substitution or adding $.\varphi$. Figure~\ref{fig:parallel-reduction} gives the rules for parallel reduction.
Observe that $\varphi$-terms in Example~\ref{example:diamond-property-counterexample} satisfy the diamond property with single step parallel reduction: $A.x(z \mapsto A.x) \Rrightarrow B.x(z \mapsto B.x)$ by cong$_{\text{APP}}^{\Rrightarrow}$.

\begin{figure*}
  \begin{mdframed}

    \begin{prooftree}
      \AxiomC{$t_1 \Rrightarrow t_1'\quad\ldots\quad t_n \Rrightarrow t_n'$}
      \RightLabel{cong$_{\text{OBJ}}^{\Rrightarrow}$}
      \UnaryInfC{$\mkObject{a_i \mapsto \varnothing, b_j \mapsto t_j} \Rrightarrow \mkObject{a_i \mapsto \varnothing, b_j \mapsto t_j'}$ for $i \in \{1, \ldots, k\}, j \in \{1, \ldots, n\}$}
    \end{prooftree}
    
    \begin{prooftree}
      \AxiomC{}
      \RightLabel{cong$_{\rho}^{\Rrightarrow}$}
      \UnaryInfC{$\rho^n \Rrightarrow \rho^n$}
      \DisplayProof\quad\quad
      \AxiomC{$t \Rrightarrow t'$}
      \RightLabel{cong$_{\text{DOT}}^{\Rrightarrow}$}
      \UnaryInfC{$t.a \Rrightarrow t'.a$}
      \DisplayProof\quad\quad
      \AxiomC{$t \Rrightarrow t'$}
      \AxiomC{$u \Rrightarrow u'$}
      \RightLabel{cong$_{\text{APP}}^{\Rrightarrow}$}
      \BinaryInfC{$t(a \mapsto u) \Rrightarrow t'(a \mapsto u')$}
    \end{prooftree}

    \begin{prooftree}
      \AxiomC{$t \Rrightarrow t' \quad t' \equiv \mkObject{\ldots, \textcolor{black}{c \mapsto t_c}, \ldots}$}
      \RightLabel{DOT$_{c}^{\Rrightarrow}$}
      \UnaryInfC{$t.c \Rrightarrow \textcolor{black}{t_c} \left[ \rho^0 \mapsto t' \right]$}
      \DisplayProof\quad\quad\quad
      \AxiomC{$t \Rrightarrow t' \quad t' \equiv \mkObject{\ldots}$}
      \AxiomC{$c \notin \attr(t') \quad \varphi \in \attr(t')$}
      \RightLabel{DOT$_{c}^{\varphi\Rrightarrow}$}
      \BinaryInfC{$t.c \Rrightarrow t'.\varphi.c$}
    \end{prooftree}

    \begin{prooftree}
      \AxiomC{$t \Rrightarrow t' \quad t' \equiv \mkObject{a_1 \mapsto \varnothing, \ldots, a_k \mapsto \varnothing, \textcolor{black}{c \mapsto \varnothing}, b_1 \mapsto t_1, \ldots, b_n \mapsto t_n}$}
      \AxiomC{$u \Rrightarrow u'$}
      \RightLabel{APP$_{c}^{\Rrightarrow}$}
      \BinaryInfC{$t (c \mapsto u) \Rrightarrow \mkObject{a_1 \mapsto \varnothing, \ldots, a_k \mapsto \varnothing, \textcolor{black}{c \mapsto \inct{}{u'}}, b_1 \mapsto t_1, \ldots, b_n \mapsto t_n}$}
    \end{prooftree}
  \end{mdframed}

    \caption{Parallel reduction rules for $\varphi$-calculus.}
    \label{fig:parallel-reduction}
\end{figure*}

One important property of parallel reduction is that ``doing nothing'' is also a parallel reduction. We justify this with the following proposition:

\begin{proposition}[Reflexivity of parallel reduction]
    Let $t$ be a $\varphi$-term. Then $t \Rrightarrow t$.
\end{proposition}
\begin{proof}
    Straightforward by structural induction on $t$.
\end{proof}

Since our goal is to prove confluence for regular reduction via parallel reduction, we need to establish that the two kinds of reduction are equivalent, meaning that if one term reduces regularly to another term, then those terms are also related via parallel reduction and vice versa.

\begin{proposition}[Equivalence of $\Rrightarrow$ and $\rightsquigarrow$]
\label{proposition:equivalence-parallel-regular-reduction}
    Parallel reduction ($\Rrightarrow$) is equivalent to regular reduction ($\rightsquigarrow$):
    \begin{enumerate}
        \item $t \rightsquigarrow t'$ implies $t \Rrightarrow t'$
        \item $t \redtomany t'$ implies $t \parredtomany t'$
        \item $t \Rrightarrow t'$ implies $t \redtomany t'$
        \item $t \parredtomany t'$ implies $t \redtomany t'$
    \end{enumerate}
\end{proposition}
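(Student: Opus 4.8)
The plan is to prove the four implications in two groups: (1) and (3) are the single-step facts, each proved by induction on a reduction derivation, and (2) and (4) then follow by iterating (1) and (3) along a reduction sequence. For (1), I would induct on the derivation of $t \rightsquigarrow t'$. Each regular congruence rule (cong$_{\text{OBJ}}$, cong$_{\text{DOT}}$, cong$_{\text{APP}^{\text{L}}}$, cong$_{\text{APP}^{\text{R}}}$) is matched by applying the corresponding parallel congruence rule, supplying reflexivity of $\Rrightarrow$ (the preceding proposition) for every subterm left untouched. For each of the three base rules, the parallel analogue has the same shape but takes a premise $t \Rrightarrow t'$ with $t' \equiv \mkObject{\ldots}$ and the same side conditions on $\attr(t')$; here I instantiate $t' := t$ via reflexivity, and the side conditions transfer verbatim because they are literally the hypotheses of the regular rule. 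So DOT$_c$, DOT$_c^\varphi$, APP$_c$ map onto DOT$_c^{\Rrightarrow}$, DOT$_c^{\varphi\Rrightarrow}$, APP$_c^{\Rrightarrow}$, and the resulting term is identical (note $\inct{}{u}$ occurs on both sides of the APP rules). Part (2) then follows by induction on the number of steps in $t \redtomany t'$: the empty sequence is handled by reflexivity of $\Rrightarrow$, and each step is promoted by (1) and chained.

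For (3) I would first record a routine \emph{congruence lemma} for multi-step regular reduction: if $t \redtomany t'$ then $t.a \redtomany t'.a$, $t(a \mapsto u) \redtomany t'(a \mapsto u)$, $u(a \mapsto t) \redtomany u(a \mapsto t')$, and $\mkObject{\ldots, b_i \mapsto t, \ldots} \redtomany \mkObject{\ldots, b_i \mapsto t', \ldots}$, each obtained by applying the matching single-step congruence rule at every step of the given sequence. Then I induct on the derivation of $t \Rrightarrow t'$. For cong$_{\text{OBJ}}^{\Rrightarrow}$, the induction hypothesis gives $t_j \redtomany t_j'$ for each attached attribute, and I reduce the object one component at a time using the object clause of the lemma, so $\mkObject{\ldots, b_j \mapsto t_j, \ldots} \redtomany \mkObject{\ldots, b_j \mapsto t_j', \ldots}$; the other congruence cases are the same but simpler. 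For DOT$_c^{\Rrightarrow}$, from $t \Rrightarrow t'$ the hypothesis yields $t \redtomany t'$, hence $t.c \redtomany t'.c$ by the lemma, and since $t' \equiv \mkObject{\ldots, c \mapsto t_c, \ldots}$ I finish with one step $t'.c \rightsquigarrow t_c[\rho^0 \mapsto t']$ by DOT$_c$. DOT$_c^{\varphi\Rrightarrow}$ and APP$_c^{\Rrightarrow}$ are handled identically, closing with a single application of DOT$_c^\varphi$ (resp.\ APP$_c$) to $t'$ — the parallel rule's conditions on $\attr(t')$ are exactly what those regular rules demand — and for APP I additionally pre-reduce the argument $u \redtomany u'$ via the right-hand application clause before the final step. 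Part (4) then follows from (3) by induction on the length of $t \parredtomany t'$, using transitivity of $\redtomany$.

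The main obstacle is part (3): one must be careful that the congruence lemma is phrased with the correct one-hole contexts, and that when serializing cong$_{\text{OBJ}}^{\Rrightarrow}$ the intermediate objects remain well-formed terms with the same void attributes, so each partial reduction is legitimate. Everything else is bookkeeping: the side conditions on the parallel base rules were deliberately chosen to coincide with those of the regular base rules applied to the already-reduced object $t'$, which is precisely what makes the ``one extra regular step at the end'' argument go through.
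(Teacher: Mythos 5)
Your proposal is correct and follows essentially the same route as the paper: part (1) via reflexivity plus the matching parallel rules, part (3) via a multi-step congruence lemma for $\redtomany$ together with transitivity and a final single regular step (serializing cong$_{\text{OBJ}}^{\Rrightarrow}$ one attribute at a time), and parts (2) and (4) by iterating along the closure. No gaps.
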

\begin{proof}
    Straightforward by structural induction.
\end{proof}

In the following proofs we need to know that if $t \Rrightarrow t'$ and $u \Rrightarrow u'$ then $t[\rho^0 \mapsto u] \Rrightarrow t'[\rho^0 \mapsto u']$. We prove a slightly more general lemma:

\begin{lemma}[Substitution lemma]
    Let $t, t', u, u'$ be $\varphi$-terms and $t \Rrightarrow t'$ and $u \Rrightarrow u'$.
    Then $t[\rho^i \mapsto u] \Rrightarrow t'[\rho^i \mapsto u']$.
\end{lemma}

To show that parallel reduction satisfies the diamond property, we adapt Takahashi's technique \cite{Takahashi1995} and define complete development of a term, which is intuitively the maximum possible one-step parallel reduction of a term. The idea is that if $t \Rrightarrow t'$ then simply by performing all parallel reductions that were ``skipped'' when producing $t'$, we get from $t'$ to the complete development of $t$. More formally:

\begin{definition}
    Let $t$ be a $\varphi$-term. Then a term $t^{+}$ denotes the \emph{complete development} of $t$, defined recursively as follows:
    \begin{align*}
        (\rho^n)^{+}
          &:= \rho^n \\
        (t.a)^{+}
          &:=
          \begin{cases}
            t_a[\rho^0 \mapsto t^{+}], & \text{if $t^{+} \equiv \mkObject{\ldots, a \mapsto t_a, \ldots}$} \\
            t^{+}.\varphi.a, & \text{if $a \notin \attr(t^{+})$ and $\varphi \in \attr(t^{+})$} \\
            t^{+}.a & \text{otherwise}
          \end{cases} \\
        (t(a \mapsto u))^{+}
          &:=
          \begin{cases}
            \mkObject{a \mapsto \inct{}{u^{+}}, \ldots}, & \text{if $t^{+} \equiv \mkObject{a \mapsto \varnothing, \ldots}$} \\
            t^{+}(a \mapsto u^{+}) & \text{otherwise}
          \end{cases} \\
        (\mkObject{a_1 \mapsto \varnothing, \ldots, a_k \mapsto \varnothing, b_1 \mapsto t_1, \ldots, b_n \mapsto t_n})^{+} \span \\
          := \mkObject{a_1 \mapsto \varnothing, \ldots, a_k \mapsto \varnothing, b_1 \mapsto t_1^{+}, \ldots, b_n \mapsto t_n^{+}} \span
    \end{align*}
\end{definition}

The definition clearly shows that we basically follow rules of parallel reduction, but always choose the rule that does the most work. In particular, if both $\text{APP}_c^{\Rrightarrow}$ and $\text{cong}_{\text{APP}}^{\Rrightarrow}$ are applicable, we prioritize rule $\text{APP}_c^{\Rrightarrow}$ since it performs strictly more reductions. Consequently, a term can always be parallel reduced to its complete development:

\begin{proposition}
    Let $t$ be a $\varphi$-term. Then $t \Rrightarrow t^{+}$.
\end{proposition}
\begin{proof}
    Straightforward by induction on $t$.
\end{proof}

Before we prove the diamond property, we need a simpler result for just one half of the diamond:

\begin{proposition}
    \label{prop:reduction-to-cd}
    Let $t, t'$ be $\varphi$-terms and $t \Rrightarrow t'$. Then $t'~\Rrightarrow~t^{+}$.
\end{proposition}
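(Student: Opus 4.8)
The plan is to prove this ``triangle lemma'' of Takahashi's method by structural induction on $t$, case-splitting at each step on the last rule used to derive $t \Rrightarrow t'$ (equivalently, on the shape of $t$) and matching that case analysis against the recursive definition of the complete development $(\cdot)^{+}$. The two workhorses will be the Substitution Lemma and a small auxiliary fact that increment preserves parallel reduction, i.e.\ $u \Rrightarrow u'$ implies $\inct{n}{u} \Rrightarrow \inct{n}{u'}$ (a routine induction on $u$, using the usual de Bruijn identities commuting increment with substitution, and mirroring the proof of the Substitution Lemma). I will also isolate a canonical-forms observation from Figure~\ref{fig:parallel-reduction}: the only rule whose left-hand side is an object term is $\text{cong}_{\text{OBJ}}^{\Rrightarrow}$, so whenever $\mkObject{a_i \mapsto \varnothing, b_j \mapsto s_j} \Rrightarrow r$ we have $r \equiv \mkObject{a_i \mapsto \varnothing, b_j \mapsto s_j'}$ with $s_j \Rrightarrow s_j'$ and with exactly the same missing, void, and attached attribute names; likewise $\rho^n$ parallel-reduces only to $\rho^n$.

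The base cases are immediate. If $t \equiv \rho^n$ then $t' \equiv \rho^n \equiv t^{+}$ and reflexivity of parallel reduction closes it. If $t \equiv \mkObject{a_i \mapsto \varnothing, b_j \mapsto t_j}$ then $t' \equiv \mkObject{a_i \mapsto \varnothing, b_j \mapsto t_j'}$ with $t_j \Rrightarrow t_j'$; the induction hypothesis gives $t_j' \Rrightarrow t_j^{+}$, and $\text{cong}_{\text{OBJ}}^{\Rrightarrow}$ yields $t' \Rrightarrow \mkObject{a_i \mapsto \varnothing, b_j \mapsto t_j^{+}} \equiv t^{+}$.

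For $t \equiv s.a$ there are three subcases. If the step is $\text{cong}_{\text{DOT}}^{\Rrightarrow}$, then $t' \equiv s'.a$ with $s \Rrightarrow s'$ and, by the induction hypothesis, $s' \Rrightarrow s^{+}$; I then split on the shape of $s^{+}$ exactly as in the definition of $(s.a)^{+}$: if $s^{+} \equiv \mkObject{\ldots, a \mapsto s_a, \ldots}$ with $a$ attached, apply $\text{DOT}_{a}^{\Rrightarrow}$ to $s' \Rrightarrow s^{+}$ to get $s'.a \Rrightarrow s_a[\rho^0 \mapsto s^{+}] \equiv t^{+}$; if $a \notin \attr(s^{+})$ and $\varphi \in \attr(s^{+})$, apply $\text{DOT}_{a}^{\varphi\Rrightarrow}$ to get $s'.a \Rrightarrow s^{+}.\varphi.a \equiv t^{+}$; otherwise apply $\text{cong}_{\text{DOT}}^{\Rrightarrow}$ to get $s'.a \Rrightarrow s^{+}.a \equiv t^{+}$. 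If the step is $\text{DOT}_{a}^{\Rrightarrow}$, then $s \Rrightarrow \bar s$ with $\bar s \equiv \mkObject{\ldots, a \mapsto \bar s_a, \ldots}$ and $t' \equiv \bar s_a[\rho^0 \mapsto \bar s]$; the induction hypothesis gives $\bar s \Rrightarrow s^{+}$, so by canonical forms $s^{+} \equiv \mkObject{\ldots, a \mapsto s_a^{+}, \ldots}$ with $\bar s_a \Rrightarrow s_a^{+}$, hence $t^{+} \equiv s_a^{+}[\rho^0 \mapsto s^{+}]$, and the Substitution Lemma applied with $\bar s_a \Rrightarrow s_a^{+}$ and $\bar s \Rrightarrow s^{+}$ gives $t' \Rrightarrow t^{+}$. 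If the step is $\text{DOT}_{a}^{\varphi\Rrightarrow}$, then $s \Rrightarrow \bar s$ with $\bar s$ an object term having $a$ missing and $\varphi$ present, and $t' \equiv \bar s.\varphi.a$; the induction hypothesis plus canonical forms give $\bar s \Rrightarrow s^{+}$ with $s^{+}$ an object term with the same attributes, so $t^{+} \equiv s^{+}.\varphi.a$ and two applications of $\text{cong}_{\text{DOT}}^{\Rrightarrow}$ yield $t' \Rrightarrow t^{+}$. The case $t \equiv s(a \mapsto v)$ is handled the same way: for $\text{cong}_{\text{APP}}^{\Rrightarrow}$ use the induction hypotheses $s' \Rrightarrow s^{+}$ and $v' \Rrightarrow v^{+}$ and split on whether $a$ is a void attribute of $s^{+}$, applying $\text{APP}_{a}^{\Rrightarrow}$ in that case and $\text{cong}_{\text{APP}}^{\Rrightarrow}$ otherwise; for $\text{APP}_{a}^{\Rrightarrow}$, canonical forms show $s^{+}$ is an object term with $a$ void, so $t^{+}$ is the object obtained by filling $a$ with $\inct{}{v^{+}}$, and $t' \Rrightarrow t^{+}$ follows attribute-wise by $\text{cong}_{\text{OBJ}}^{\Rrightarrow}$, closing the other components with the canonical-forms data $w_j \Rrightarrow \tilde w_j$ and the $a$-component with the increment fact applied to $\bar v \Rrightarrow v^{+}$.

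I expect the only delicate points to be: (i) keeping the case analysis on $s^{+}$ in the congruence subcases in lock-step with the branches of the definition of $(\cdot)^{+}$ — in particular checking that ``otherwise'' in that definition corresponds precisely to the situation in which only a congruence rule applies; and (ii) the $\text{DOT}_{a}^{\Rrightarrow}$ subcase, where the reduct $t'$ already contains a locator substitution, so establishing $t' \Rrightarrow t^{+}$ genuinely requires the Substitution Lemma together with the canonical-forms description of how an object term parallel-reduces. Everything else is bookkeeping with the congruence rules, reflexivity, and the increment fact.
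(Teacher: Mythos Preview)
Your proposal is correct and follows essentially the same route as the paper's proof: induction with a case split over the parallel-reduction rules, using the canonical-forms observation that object terms only reduce via $\text{cong}_{\text{OBJ}}^{\Rrightarrow}$, the Substitution Lemma for the $\text{DOT}_{a}^{\Rrightarrow}$ case, and $\text{cong}_{\text{OBJ}}^{\Rrightarrow}$ attribute-wise for the $\text{APP}_{a}^{\Rrightarrow}$ case. You are in fact slightly more explicit than the paper in isolating the auxiliary fact that $\inct{}{}$ preserves $\Rrightarrow$, which the paper uses silently in its $\text{APP}_{c}^{\Rrightarrow}$ case.
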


\begin{corollary}[Diamond property of parallel reduction]
    Let $t, u, v$ be $\varphi$-terms and $t \Rrightarrow u$ and $t \Rrightarrow v$. Then there exists a $\varphi$-term $w$ such that $u \Rrightarrow w$ and $v \Rrightarrow w$.
\end{corollary}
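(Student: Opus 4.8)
The plan is to obtain the diamond property as an immediate corollary of Proposition~\ref{prop:reduction-to-cd}, using the complete development $t^{+}$ as the common reduct $w$. This is the standard ``the triangle closes the diamond'' move from Takahashi's technique: the one-step diamond is recovered from the fact that \emph{every} parallel reduct of $t$ can itself be parallel-reduced, in one step, to the single canonical term $t^{+}$.

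Concretely, given $t \Rrightarrow u$ and $t \Rrightarrow v$, I would set $w := t^{+}$. Applying Proposition~\ref{prop:reduction-to-cd} to $t \Rrightarrow u$ yields $u \Rrightarrow t^{+}$, and applying it to $t \Rrightarrow v$ yields $v \Rrightarrow t^{+}$. Together these give $u \Rrightarrow w$ and $v \Rrightarrow w$, which is exactly the claim. Note that since Proposition~\ref{prop:reduction-to-cd} delivers a single-step parallel reduction, the diamond obtained is also single-step, as required for $\Rrightarrow$.

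At this stage there is essentially no obstacle left: all the real work lives in Proposition~\ref{prop:reduction-to-cd}. The genuine difficulty of the whole construction sits there — one must show, by induction on the derivation of $t \Rrightarrow t'$ (equivalently, on the structure of $t$ with a case split on the parallel-reduction rule used at the root), that $t' \Rrightarrow t^{+}$, carefully matching each rule against the corresponding branch in the definition of $t^{+}$: in particular $\text{cong}_{\text{DOT}}^{\Rrightarrow}$ versus $\text{DOT}_{c}^{\Rrightarrow}$ versus $\text{DOT}_{c}^{\varphi\Rrightarrow}$, and $\text{cong}_{\text{APP}}^{\Rrightarrow}$ versus $\text{APP}_{c}^{\Rrightarrow}$ (the case where the complete development performs strictly more work than $t'$ did). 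The Substitution Lemma is what carries $t \Rrightarrow t'$ and $u \Rrightarrow u'$ through the substitution $t_c[\rho^0 \mapsto t']$ and the increment $\inct{}{u'}$ appearing in those branches. Once Proposition~\ref{prop:reduction-to-cd} is in hand, the corollary above is the two-line argument just described.
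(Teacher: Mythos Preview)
Your proposal is correct and matches the paper's proof exactly: set $w := t^{+}$ and apply Proposition~\ref{prop:reduction-to-cd} to each of $t \Rrightarrow u$ and $t \Rrightarrow v$. The additional commentary about where the real work lives is accurate but not needed for this corollary itself.
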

\begin{proof}
    Let $w\equiv t^+$. With \ref{prop:reduction-to-cd}, $u \Rrightarrow w$ and $v \Rrightarrow w$.
\end{proof}

\begin{corollary}[Confluence of parallel reduction]
\label{corollary:parallel-confluence}
    Let $t, u, v$ be $\varphi$-terms and $t \parredtomany u$ and $t \parredtomany v$. Then there exists a $\varphi$-term $w$ such that $u \parredtomany w$ and $v \parredtomany w$.
\end{corollary}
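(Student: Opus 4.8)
The plan is to obtain this directly from the diamond property of single‑step parallel reduction (the preceding corollary) by the standard two‑layer induction that lifts a diamond to its reflexive–transitive closure, exactly the reasoning behind \cite[Lemma 1.17]{Krivine1993}. Equivalently, one could simply invoke that lemma, since it asserts precisely that any relation with the diamond property is confluent; but it is worth spelling out the argument so that the paper is self‑contained.

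First I would prove a \emph{strip lemma}: if $t \Rrightarrow u$ and $t \parredtomany v$, then there is a term $w$ with $u \parredtomany w$ and $v \Rrightarrow w$. This goes by induction on the length $n$ of the sequence $t \parredtomany v$. If $n = 0$ then $v \equiv t$ and we take $w \equiv u$. Otherwise decompose the sequence as $t \Rrightarrow v_1 \parredtomany v$ with the tail of length $n-1$. Applying the diamond property to $t \Rrightarrow u$ and $t \Rrightarrow v_1$ yields $w_1$ with $u \Rrightarrow w_1$ and $v_1 \Rrightarrow w_1$. Applying the induction hypothesis to $v_1 \Rrightarrow w_1$ and $v_1 \parredtomany v$ yields $w$ with $w_1 \parredtomany w$ and $v \Rrightarrow w$. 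Then $u \Rrightarrow w_1 \parredtomany w$ gives $u \parredtomany w$, as required.

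The statement itself then follows by a second induction, on the length $m$ of $t \parredtomany u$. For $m = 0$ take $w \equiv v$. For the inductive step decompose $t \Rrightarrow u_1 \parredtomany u$; the strip lemma applied to $t \Rrightarrow u_1$ and $t \parredtomany v$ gives $w_1$ with $u_1 \parredtomany w_1$ and $v \Rrightarrow w_1$; the induction hypothesis applied to $u_1 \parredtomany u$ and $u_1 \parredtomany w_1$ gives $w$ with $u \parredtomany w$ and $w_1 \parredtomany w$; finally $v \Rrightarrow w_1 \parredtomany w$ gives $v \parredtomany w$. I do not expect any genuine obstacle: the argument is purely mechanical once the one‑step diamond property is available. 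The only points requiring care are that each induction must be on the \emph{length of a reduction sequence} rather than on term structure, and that the reflexive cases are handled correctly in the base steps.
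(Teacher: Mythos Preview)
Your proposal is correct and follows exactly the paper's approach: the paper simply cites \cite[Lemma 1.17]{Krivine1993} to conclude confluence from the diamond property, and you have spelled out the standard strip-lemma argument that constitutes that lemma. There is no substantive difference beyond your making the cited reasoning explicit.
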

\begin{proof}
    Follows from the diamond property (see \cite[Lemma 1.17]{Krivine1993}).
\end{proof}

\begin{theorem}[Confluence]
\label{theorem:confluence}
    Let $t, u, v$ be $\varphi$-terms and $t \redtomany u$ and $t \redtomany v$. Then there exists a $\varphi$-term $w$ such that $u \redtomany w$ and $v \redtomany w$.
\end{theorem}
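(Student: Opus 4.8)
The plan is to transfer confluence from parallel reduction to regular reduction, exploiting that the two relations generate the same many-step rewriting. All of the genuine combinatorial work has already been carried out: Example~\ref{example:diamond-property-counterexample} shows that $\rightsquigarrow$ itself does not enjoy the diamond property, which is precisely why parallel reduction $\Rrightarrow$ was introduced; the Substitution lemma and the complete-development construction yield Proposition~\ref{prop:reduction-to-cd} (if $t \Rrightarrow t'$ then $t' \Rrightarrow t^{+}$), hence the diamond property for $\Rrightarrow$, and from there Corollary~\ref{corollary:parallel-confluence} gives full confluence of $\parredtomany$. So the theorem should follow as a short diagram chase.

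Concretely, suppose $t \redtomany u$ and $t \redtomany v$. Applying part~(2) of Proposition~\ref{proposition:equivalence-parallel-regular-reduction} to each of these, we get $t \parredtomany u$ and $t \parredtomany v$. By Corollary~\ref{corollary:parallel-confluence}, there exists a $\varphi$-term $w$ with $u \parredtomany w$ and $v \parredtomany w$. Finally, applying part~(4) of Proposition~\ref{proposition:equivalence-parallel-regular-reduction} to both of these, we obtain $u \redtomany w$ and $v \redtomany w$, which is exactly the claim. Equivalently, one can phrase the whole argument as: the inclusions $\rightsquigarrow\ \subseteq\ \Rrightarrow\ \subseteq\ \redtomany$ (parts~(1) and~(3) of the equivalence proposition) imply that the transitive–reflexive closures of $\rightsquigarrow$ and of $\Rrightarrow$ coincide, so confluence of one is confluence of the other.

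I do not expect a real obstacle here — this statement is essentially a corollary of the preceding development. The only point that warrants care is making sure the equivalence proposition is invoked at the right "multiplicity": it is the many-step versions $\redtomany$ and $\parredtomany$ that must be shown to correspond (a single $\Rrightarrow$ step expands to several $\rightsquigarrow$ steps, so one genuinely needs parts~(2) and~(4), not just (1) and (3) pointwise), and the composition of these inclusions must be checked to land back in $\redtomany$ rather than in something stronger. Once that bookkeeping is in place, confluence of $\rightsquigarrow$ is immediate.
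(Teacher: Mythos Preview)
Your proposal is correct and is exactly the approach the paper takes: the paper's proof simply says the theorem ``follows from Proposition~\ref{proposition:equivalence-parallel-regular-reduction} and Corollary~\ref{corollary:parallel-confluence},'' which is precisely the diagram chase you spell out (parts~(2) and~(4) of the equivalence to move between $\redtomany$ and $\parredtomany$, with parallel confluence supplying the common reduct $w$).
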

\begin{proof}
    Follows from Proposition~\ref{proposition:equivalence-parallel-regular-reduction} and Corollary~\ref{corollary:parallel-confluence}.
\end{proof}

As usual, a term $t$ is in \emph{normal form} if it has no redexes. In $\varphi$-calculus, it means:
\begin{definition}
  $\varphi$-term $t$ is said to have the normal form if 
  \begin{align*}
        t \equiv
        \begin{cases}
            \rho^n\\
            \mkObject{a_1 \mapsto \varnothing, \ldots, a_k \mapsto \varnothing, b_1 \mapsto t_1, \ldots, b_n \mapsto t_n}, &\text{if $t_j$ is in NF}\\
            s.a, &\text{if $s$ is in NF and $s.a$ is not a redex}\\
            s(a\mapsto u), &\text{if $s$ and $u$ are in NF and $s(a\mapsto u)$ is not a redex}\\
        \end{cases} 
    \end{align*}
\end{definition}

With this definition we have a trivial corollary of Theorem~\ref{theorem:confluence}:

\begin{corollary}
    Every $\varphi$-term has at most one normal form.
\end{corollary}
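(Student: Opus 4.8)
The plan is to obtain the statement as an immediate consequence of Theorem~\ref{theorem:confluence}. Suppose a $\varphi$-term $t$ has two normal forms $u$ and $v$; unfolding what this means, $t \redtomany u$ and $t \redtomany v$, and both $u$ and $v$ contain no redexes. Applying confluence to these two reduction sequences produces a $\varphi$-term $w$ with $u \redtomany w$ and $v \redtomany w$. It then remains only to note that a term in normal form reduces solely to itself, so that $u \equiv w \equiv v$, which is the desired uniqueness.

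The one ingredient that is not completely immediate is the claim that $u$ in normal form and $u \redtomany w$ force $u \equiv w$. Since $\redtomany$ is the reflexive–transitive closure of $\rightsquigarrow$, it suffices to show that no normal form admits a one-step reduction $u \rightsquigarrow u'$. I would prove this by structural induction on $u$, following the case split in the definition of normal form: a locator $\rho^n$ has no reduction at all; for an object term $\mkObject{\ldots, b_j \mapsto t_j, \ldots}$ the only applicable rule is $\text{cong}_{\text{OBJ}}$, which requires some $t_j \rightsquigarrow t_j'$, impossible since each $t_j$ is in NF by hypothesis; for $s.a$ the rules $\text{DOT}_c$ and $\text{DOT}_c^\varphi$ are excluded precisely because $s.a$ is stipulated not to be a redex, while $\text{cong}_{\text{DOT}}$ would need $s \rightsquigarrow s'$, impossible by the induction hypothesis; and symmetrically for $s(a \mapsto u)$, where $\text{APP}_c$ is excluded by the ``not a redex'' condition and the two congruence rules $\text{cong}_{\text{APP}^{\mathrm{L}}}$, $\text{cong}_{\text{APP}^{\mathrm{R}}}$ are ruled out by the induction hypothesis applied to $s$ and to $u$.

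I expect the only real (and still minor) obstacle to be phrasing this induction cleanly, in particular being explicit that the ``$s.a$ is not a redex'' / ``$s(a\mapsto u)$ is not a redex'' clauses in the definition of normal form are exactly what forbids the head-reduction rules $\text{DOT}_c$, $\text{DOT}_c^\varphi$, $\text{APP}_c$ from firing. Once that observation is recorded, everything else is routine bookkeeping and the uniqueness conclusion via Theorem~\ref{theorem:confluence} is a single line.
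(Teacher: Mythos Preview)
Your proposal is correct and follows exactly the paper's approach: the paper states this result as a ``trivial corollary of Theorem~\ref{theorem:confluence}'' with no further proof, and your argument simply spells out the standard confluence-implies-uniqueness reasoning (plus the easy lemma that normal forms admit no $\rightsquigarrow$-step) that the paper leaves implicit.
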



For the reasoning about the abstract machine in the section \ref{section:abstract-machine}, weak head normal form needs to be defined:
\begin{definition}
\label{def:weak-head-normal-form}
  $\varphi$-term $t$ is said to be in a weak head normal form if 
  \begin{align*}
        t \equiv
        \begin{cases}
            \rho^n\\
            \mkObject{a_1 \mapsto \varnothing, \ldots, a_k \mapsto \varnothing, b_1 \mapsto t_1, \ldots, b_n \mapsto t_n}\\
            s.a, &\text{if $s$ is in WHNF and $s.a$ is not a redex}\\
            s(a\mapsto u), &\text{if $s$ is in WHNF and $s(a\mapsto u)$ is not a redex}\\
        \end{cases} 
    \end{align*}
    Equivalently, one might have said that $t$ is in a weak head normal if no head reductions (defined in the figure \ref{figure:normal-reduction}) starting from $t$ are possible.
\end{definition}

\begin{example}
    Term $\mkObject{z \mapsto \mkObject{x \mapsto \rho^0.y, y \mapsto \rho^0.x}.x}$ is in weak head normal form, but it does not have a normal form.
\end{example}

\subsection{Completeness of normal order evaluation}

In this subsection, we define normal order reduction strategy and prove that this strategy reduces a term to its normal form, if such a form exists.

Similarly to $\lambda$-calculus, normal order reduction in $\varphi$-calculus is defined in terms of head reduction. Head reduction does not apply to redexes inside objects and to arguments of object application. If possible, normal order reduction performs head reduction, otherwise it performs reduction in the leftmost subterm that can be reduced.

Our proof of completeness of normal order evaluation relies on the standardization theorem: if $t$ reduces to $u$, then there exists a reduction path, where head reductions are performed first, and internal reductions follow.

Contrary to the head reduction, internal reduction applies to redexes inside objects and in arguments of application. During the proof, we exploit internal parallel reduction which is the intersection of parallel reduction and reflexive-transitive closure of the internal regular reduction.

\begin{figure*}
  \begin{mdframed}
  
  \emph{Head reduction}
    \begin{prooftree}
      \AxiomC{$t \headred t'$}
      \RightLabel{cong$_{\text{DOT}}^h$}
      \UnaryInfC{$t.a \headred t'.a$}
      \DisplayProof\quad\quad\quad
      \AxiomC{$t \headred t'$}
      \RightLabel{cong$_{\text{APP}}^h$}
      \UnaryInfC{$t(a \mapsto u) \headred t'(a \mapsto u)$}
    \end{prooftree}
    \begin{prooftree}
      \AxiomC{$t \equiv \mkObject{\ldots, \textcolor{black}{c \mapsto t_c}, \ldots}$}
      \RightLabel{DOT$_{c}$}
      \UnaryInfC{$t.c \headred t_c \left[ \rho^0 \mapsto t \right]$}
      \DisplayProof\quad\quad\quad
      \AxiomC{$t \equiv \mkObject{\ldots}$}
      \AxiomC{$c \notin \attr(t)$}
      \AxiomC{$\varphi \in \attr(t)$}
      \RightLabel{DOT$_{c}^\varphi$}
      \TrinaryInfC{$t.c \headred t.\varphi.c$}
    \end{prooftree}
    \begin{prooftree}
      \AxiomC{$t \equiv \mkObject{a_1 \mapsto \varnothing, \ldots, a_k \mapsto \varnothing, \textcolor{black}{c \mapsto \varnothing}, b_1 \mapsto t_1, \ldots, b_n \mapsto t_n}$}
      \RightLabel{APP$_{c}$}
      \UnaryInfC{$t (c \mapsto u) \headred \mkObject{a_1 \mapsto \varnothing, \ldots, a_k \mapsto \varnothing, \textcolor{black}{c \mapsto \inct{}{u}}, b_1 \mapsto t_1, \ldots, b_n \mapsto t_n}$}
    \end{prooftree}

    \hfill\hrule\hfill

    \emph{Normal order}
    \begin{align*}
        t.a
          &\leftmost
          \begin{cases}
            s, & \text{if $t.a \headred s$} \\
            t'.a, & \text{else if $t \leftmost t'$}
          \end{cases} \\
        t(a \mapsto u)
          &\leftmost
          \begin{cases}
            s, & \text{if $t (a\mapsto u) \headred s$} \\
            t'(a \mapsto u), & \text{else if $t \leftmost t'$} \\
            t(a \mapsto u'), & \text{else if $u \leftmost u'$}
          \end{cases} \\
        \mkObject{a_i \mapsto \varnothing, b_j \mapsto t_j}
          &\leftmost \mkObject{a_i \mapsto \varnothing, b_j \mapsto t_j'}, \text{  if $t_j \leftmost t_j'$}
    \end{align*}
    \hfill\hrule\hfill

    \emph{Regular internal reduction}
    \begin{prooftree}
      \AxiomC{$t_i \rightsquigarrow t_i'$}
      \RightLabel{cong$^i_{\text{OBJ}}$}
      \UnaryInfC{$\mkObject{\ldots, b_i \mapsto t_i, \ldots} \innerred \mkObject{\ldots, b_i \mapsto t_i', \ldots}$}
      \DisplayProof\quad\quad
      \AxiomC{$t \innerred t'$}
      \RightLabel{cong$^i_{\text{DOT}}$}
      \UnaryInfC{$t.a \innerred t'.a$}
    \end{prooftree}
    \begin{prooftree}
      \AxiomC{$t \innerred t'$}
      \RightLabel{cong$^i_{\text{APP}^\text{L}}$}
      \UnaryInfC{$t(a \mapsto u) \innerred t'(a \mapsto u)$}
      \DisplayProof\quad\quad
      \AxiomC{$u \rightsquigarrow u'$}
      \RightLabel{cong$^i_{\text{APP}^\text{R}}$}
      \UnaryInfC{$t(a \mapsto u) \innerred t'(a \mapsto u')$}
    \end{prooftree}
    
    \hfill\hrule\hfill
    
    \emph{Parallel internal reduction}
    \begin{prooftree}
      \AxiomC{$t_1 \Rrightarrow t_1'\quad\ldots\quad t_n \Rrightarrow t_n'$}
      \RightLabel{cong$_{\text{OBJ}}^{\innerparred}$}
      \UnaryInfC{$\mkObject{a_i \mapsto \varnothing, b_j \mapsto t_j} \innerparred \mkObject{a_i \mapsto \varnothing, \ldots, b_j \mapsto t_j'}$}
    \end{prooftree}
    \begin{prooftree}
      \AxiomC{$\phantom\innerparred$}
      \RightLabel{cong$_{\rho}^{\innerparred}$}
      \UnaryInfC{$\rho^n \innerparred \rho^n$}
      \DisplayProof\quad\quad
      \AxiomC{$t \innerparred t'$}
      \RightLabel{cong$_{\text{DOT}}^{\innerparred}$}
      \UnaryInfC{$t.a \innerparred t'.a$}
      \DisplayProof\quad\quad
      \AxiomC{$t \innerparred t'$}
      \AxiomC{$u \Rrightarrow u'$}
      \RightLabel{cong$_{\text{APP}}^{\innerparred}$}
      \BinaryInfC{$t(a \mapsto u) \innerparred t'(a \mapsto u')$}
    \end{prooftree}

  \end{mdframed}

\caption{Head, internal and normal order reductions}
\label{figure:normal-reduction}
\end{figure*}

The proof is developed in a close relation to the one of Takahashi \cite{Takahashi1995}.
First, we show that one step of parallel reduction can be decomposed to multiple head reductions and one internal parallel reduction step.

\begin{lemma}[Main Lemma]
    \label{lemma:normal-reduction:main}
    $t \Rrightarrow s$ implies $t \headredmany r \innerparred s$ for some $r$.
\end{lemma}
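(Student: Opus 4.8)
The plan is to prove the statement by induction on the derivation of $t \Rrightarrow s$, following Takahashi's strategy: for each parallel-reduction rule, identify the "head part" of the work it does, peel that off as a sequence of head reductions, and show the remaining work is an internal parallel reduction. The congruence cases are where the real bookkeeping lives; the key redex rules are comparatively direct once the head/internal split is set up.

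First I would handle the base and easy congruence cases. If $t \Rrightarrow s$ is $\rho^n \Rrightarrow \rho^n$ (cong$_\rho^{\Rrightarrow}$), then $r \equiv t \equiv s$ with zero head reductions and $\rho^n \innerparred \rho^n$. For cong$_{\text{OBJ}}^{\Rrightarrow}$, no head reduction applies to an object term, so again $r \equiv t$ and the conclusion is just the premise repackaged as cong$_{\text{OBJ}}^{\innerparred}$. For cong$_{\text{APP}}^{\Rrightarrow}$ with $t \Rrightarrow t'$, $u \Rrightarrow u'$ and conclusion $t(a\mapsto u)\Rrightarrow t'(a\mapsto u')$: apply the IH to $t \Rrightarrow t'$ to get $t \headredmany r_0 \innerparred t'$; lift the head reductions through cong$_{\text{APP}}^h$ to get $t(a\mapsto u)\headredmany r_0(a\mapsto u)$, and then $r_0(a\mapsto u)\innerparred t'(a\mapsto u')$ by cong$_{\text{APP}}^{\innerparred}$ (using $u \Rrightarrow u'$ on the right). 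The cong$_{\text{DOT}}^{\Rrightarrow}$ case is the same pattern using cong$_{\text{DOT}}^h$ and cong$_{\text{DOT}}^{\innerparred}$.

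Next, the redex rules. For APP$_c^{\Rrightarrow}$: the premise gives $t \Rrightarrow t'$ with $t' \equiv \mkObject{\ldots, c\mapsto\varnothing,\ldots}$ and $u \Rrightarrow u'$, and $s \equiv \mkObject{\ldots, c\mapsto \inct{}{u'},\ldots}$. Apply the IH to $t\Rrightarrow t'$: $t\headredmany r_0 \innerparred t'$. Here $r_0$ need not literally be an object term, but $r_0 \innerparred t'$ with $t'$ an object; I would argue (a small auxiliary observation, provable by inspection of the $\innerparred$ rules) that if $r_0 \innerparred \mkObject{\ldots}$ then $r_0$ is itself an object term of the same shape with $c\mapsto\varnothing$. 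Hence $r_0(c\mapsto u)$ is an APP$_c$ head-redex: $t(c\mapsto u)\headredmany r_0(c\mapsto u)\headred \mkObject{\ldots, c\mapsto \inct{}{u},\ldots}$, and the latter $\innerparred \mkObject{\ldots, c\mapsto\inct{}{u'},\ldots}$ via cong$_{\text{OBJ}}^{\innerparred}$ — this needs that $u\Rrightarrow u'$ implies $\inct{}{u}\Rrightarrow\inct{}{u'}$, which is a routine induction on $u$ (morally the Substitution Lemma's increment counterpart) and I would state it as a small auxiliary fact. The DOT$_c^{\Rrightarrow}$ case is analogous, ending in a DOT$_c$ head-redex $r_0.c \headred t_c[\rho^0\mapsto r_0]$, followed by $t_c[\rho^0\mapsto r_0]\innerparred t_c[\rho^0\mapsto t']$ (needing a parallel-reduction-compatible-with-substitution fact, essentially the Substitution Lemma restricted to $\innerparred$). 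DOT$_c^{\varphi\Rrightarrow}$ is similar but ends with the DOT$_c^\varphi$ head step producing $r_0.\varphi.c$, then $\innerparred t'.\varphi.c$ by two applications of cong$_{\text{DOT}}^{\innerparred}$.

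The main obstacle I anticipate is the interface between "$r_0$ from the IH" and "the redex rule wants an honest object term": I must show that whenever $r_0 \innerparred t'$ and $t'$ is a $\mkObject{\ldots}$, then $r_0$ is already a matching object term (internal reduction cannot create or destroy the top-level object constructor, nor change which labels are void), so the corresponding head-redex is genuinely available at $r_0$. A second, more technical obstacle is verifying the two auxiliary facts — that $\innerparred$ commutes with $\inct{}{-}$ and with substitution $[\rho^0\mapsto-]$ — in the form needed here; these mirror the already-proven Substitution Lemma for $\Rrightarrow$, so I would either cite it directly (noting $\innerparred \subseteq \Rrightarrow$) or re-run the induction keeping track of the internal-only constraint. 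Everything else is a direct structural induction with the head/internal split maintained at each node.
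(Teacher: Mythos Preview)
Your overall structure matches the paper's proof exactly (induction on $t \Rrightarrow s$, same case split, same use of the ``$r_0 \innerparred \mkObject{\ldots}$ forces $r_0$ to already be an object'' observation), and the congruence cases as well as APP$_c^{\Rrightarrow}$ and DOT$_c^{\varphi\Rrightarrow}$ are handled correctly.

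There is, however, a genuine gap in your DOT$_c^{\Rrightarrow}$ case. After the IH gives $t \headredmany r_0 \innerparred t'$ with $t' \equiv \mkObject{\ldots, c \mapsto t_c, \ldots}$, the object $r_0$ has the form $\mkObject{\ldots, c \mapsto t_c', \ldots}$ where $t_c' \Rrightarrow t_c$ (this is what cong$_{\text{OBJ}}^{\innerparred}$ guarantees about the attributes, not identity). Hence the head step is $r_0.c \headred t_c'[\rho^0 \mapsto r_0]$, not $t_c[\rho^0 \mapsto r_0]$ as you wrote. You then want $t_c'[\rho^0 \mapsto r_0] \innerparred t_c[\rho^0 \mapsto t']$, but the Substitution Lemma for $\innerparred$ needs $t_c' \innerparred t_c$ in the first coordinate, and you only have $t_c' \Rrightarrow t_c$; these are not the same, and the step does not go through as stated.

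The paper repairs this by applying the induction hypothesis a second time, now to $t_c' \Rrightarrow t_c$, obtaining $t_c' \headredmany r' \innerparred t_c$. Then the Substitution Lemma for $\headred$ pushes the head reductions under the substitution, $t_c'[\rho^0 \mapsto r_0] \headredmany r'[\rho^0 \mapsto r_0]$, and finally the Substitution Lemma for $\innerparred$ (with $r' \innerparred t_c$ and $r_0 \innerparred t'$) gives $r'[\rho^0 \mapsto r_0] \innerparred t_c[\rho^0 \mapsto t']$. So the fix is local, but it does require both substitution lemmas (for $\headred$ and for $\innerparred$) rather than just the one you cite, and a second appeal to the IH. Your ``analogous to APP$_c$'' shortcut is exactly where the analogy breaks: in APP$_c$ the $c$-slot is void in both $r_0$ and $t'$, so no mismatch arises.
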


\begin{lemma}[Substitution Lemma for $\headred$]
    \label{lemma:normal-reduction:substitution}
    If $t \headred s$, then $t[\rho^n \mapsto q] \headred s[\rho^n \mapsto q]$.
\end{lemma}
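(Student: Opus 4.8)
The plan is to induct on the derivation of $t \headred s$. The two congruence rules carry no real content: if $s \equiv t'.a$ is obtained from $t \headred t'$ by cong$_{\text{DOT}}^h$, then the attribute‑access clause of locator substitution gives $(t.a)[\rho^n \mapsto q] \equiv (t[\rho^n \mapsto q]).a$ and $s[\rho^n \mapsto q] \equiv (t'[\rho^n \mapsto q]).a$, while the induction hypothesis yields $t[\rho^n \mapsto q] \headred t'[\rho^n \mapsto q]$, so cong$_{\text{DOT}}^h$ reassembles the conclusion. The cong$_{\text{APP}}^h$ case is identical, using that its argument component $u$ is literally unchanged by the reduction, hence so is $u[\rho^n \mapsto q]$ on both sides.

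For the three reduction rules I would first isolate two facts about locator substitution, both proved by straightforward structural induction on the term substituted into. (i) \emph{Substitution preserves outermost shape}: if $t \equiv \mkObject{\ldots}$ then $t[\rho^n \mapsto q]$ is again an object term with the same attribute names, and a label is void in $t[\rho^n \mapsto q]$ iff it is void in $t$; in particular $\attr(t[\rho^n \mapsto q]) = \attr(t)$, so the side conditions of DOT$_c$, DOT$_c^\varphi$ and APP$_c$ are insensitive to substitution. This is immediate from the object clause of the definition. (ii) \emph{Substitution commutes with increment and with itself}, in the de Bruijn form
\[
  t_c[\rho^0 \mapsto t][\rho^n \mapsto q] \;\equiv\; t_c[\rho^{n+1} \mapsto \inct{}{q}]\bigl[\rho^0 \mapsto t[\rho^n \mapsto q]\bigr], \qquad \inct{}{u}[\rho^{n+1} \mapsto \inct{}{q}] \;\equiv\; \inct{}{u[\rho^n \mapsto q]}.
\]
The second identity is exactly the increment–substitution commutation already required to prove the Substitution Lemma for $\Rrightarrow$, and the first is the usual substitution‑composition identity; the careful indexing in the object clauses is what makes these lemmas work.

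With (i) and (ii) in hand the base cases close mechanically. For DOT$_c$ with $t \equiv \mkObject{\ldots, c \mapsto t_c, \ldots}$ and $s \equiv t_c[\rho^0 \mapsto t]$: by (i) and the object clause, $t[\rho^n \mapsto q] \equiv \mkObject{\ldots, c \mapsto t_c[\rho^{n+1} \mapsto \inct{}{q}], \ldots}$, so $(t.c)[\rho^n \mapsto q] \equiv (t[\rho^n \mapsto q]).c \headred t_c[\rho^{n+1} \mapsto \inct{}{q}][\rho^0 \mapsto t[\rho^n \mapsto q]]$ by DOT$_c$, and the first identity of (ii) rewrites this to $s[\rho^n \mapsto q]$. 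For DOT$_c^\varphi$: by (i) the object $t[\rho^n \mapsto q]$ still lacks $c$ and still carries $\varphi$, so the rule fires, and two uses of the attribute‑access clause give $(t[\rho^n\mapsto q]).\varphi.c \equiv (t.\varphi.c)[\rho^n \mapsto q]$. For APP$_c$: by (i) the void attribute $c$ of $t[\rho^n \mapsto q]$ is still void, so $(t(c \mapsto u))[\rho^n \mapsto q] \equiv (t[\rho^n\mapsto q])(c \mapsto u[\rho^n \mapsto q]) \headred \mkObject{\ldots, c \mapsto \inct{}{u[\rho^n \mapsto q]}, \ldots}$; meanwhile applying the object clause of substitution to the right‑hand side of APP$_c$ produces the $c$‑entry $\inct{}{u}[\rho^{n+1} \mapsto \inct{}{q}]$, and the remaining entries $t_j$ are substituted uniformly with $\rho^{n+1} \mapsto \inct{}{q}$ on both sides, so the second identity of (ii) matches the two results.

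The only genuine obstacle is establishing (ii), and specifically getting the shifts right when the two substituted positions straddle a binder — i.e. deciding exactly which occurrences must be passed through $\inct{}{\cdot}$ and by how much. This is routine de Bruijn bookkeeping of precisely the kind already carried out for the parallel‑reduction Substitution Lemma, so no new difficulty arises here.
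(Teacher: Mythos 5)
Your proposal is correct and follows essentially the same route as the paper: induction on the derivation of $t \headred s$, with the DOT$_c$ and APP$_c$ cases closed by exactly the two identities you isolate as (ii), which are the paper's ``Substitutions reordering'' lemma (instantiated at $j=0$, $i=n$) and its ``Increment and substitution swap'' lemma, and with the DOT$_c^\varphi$ case closed by the observation that substitution preserves the attribute set.
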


\begin{lemma}[Substitution Lemma for $\innerparred$]
    \label{lemma:normal-reduction:substitution-inner}
    If $t \innerparred s$ and $q \innerparred r$, then $t[\rho^n \mapsto q] \innerparred s[\rho^n \mapsto r]$.
\end{lemma}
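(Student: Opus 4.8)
The plan is to proceed by structural induction on the derivation of $t \innerparred s$, carrying the universally quantified $n$ (and the pair $q \innerparred r$) through the induction. The shape of $\innerparred$ is crucial here: internal parallel reduction never fires the main redex rules $\text{DOT}_c^{\Rrightarrow}$, $\text{DOT}_c^{\varphi\Rrightarrow}$ or $\text{APP}_c^{\Rrightarrow}$ at the root, it only uses the congruence rules $\text{cong}_{\rho}^{\innerparred}$, $\text{cong}_{\text{DOT}}^{\innerparred}$, $\text{cong}_{\text{APP}}^{\innerparred}$ and $\text{cong}_{\text{OBJ}}^{\innerparred}$. So there are only four cases, and in each one the outermost syntactic former of $t$ is preserved in $s$, which means after applying the substitution definition from Figure~\ref{fig:core-phi-calculus} the two sides still have matching shape and I just need to re-assemble the same congruence rule on the substituted subterms.

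Concretely: in the $\text{cong}_{\rho}^{\innerparred}$ case, $t \equiv s \equiv \rho^m$ and I split on $m < n$, $m = n$, $m > n$; the first and third give $\rho^m \innerparred \rho^m$ resp.\ $\rho^{m-1} \innerparred \rho^{m-1}$ by $\text{cong}_{\rho}^{\innerparred}$, and the middle case gives exactly $q \innerparred r$, which is the hypothesis. In the $\text{cong}_{\text{DOT}}^{\innerparred}$ case, $t \equiv t_0.a \innerparred t_0'.a \equiv s$ with $t_0 \innerparred t_0'$; by the IH, $t_0[\rho^n \mapsto q] \innerparred t_0'[\rho^n \mapsto r]$, and applying $\text{cong}_{\text{DOT}}^{\innerparred}$ and unfolding the definition of substitution on attribute access gives the claim. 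The $\text{cong}_{\text{APP}}^{\innerparred}$ case is analogous but now the right component uses the full parallel reduction $u \Rrightarrow u'$; here I would invoke the already-proved Substitution Lemma for $\Rrightarrow$ to obtain $u[\rho^n \mapsto q] \Rrightarrow u'[\rho^n \mapsto r]$, combine with the IH on the left component, and reassemble via $\text{cong}_{\text{APP}}^{\innerparred}$. The $\text{cong}_{\text{OBJ}}^{\innerparred}$ case: each attached subterm $t_j \innerparred t_j'$ needs IH applied at index $n+1$ against $\inct{}{q} \innerparred \inct{}{r}$, so I first need the auxiliary fact that $q \innerparred r$ implies $\inct{}{q} \innerparred \inct{}{r}$ (a routine separate induction, or a lemma stated earlier about increment commuting with $\innerparred$); then the definition of substitution under an object pushes $n$ to $n+1$ and replaces $q,r$ by $\inct{}{q},\inct{}{r}$, matching exactly what the IH produces, and $\text{cong}_{\text{OBJ}}^{\innerparred}$ closes it.

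I expect the main obstacle to be bookkeeping rather than conceptual: getting the index shifts and the interaction between substitution and $\inct{}{-}$ right in the object case, and making sure the appeal to the $\Rrightarrow$-substitution lemma in the application case is legitimate (i.e.\ that $\innerparred \subseteq \Rrightarrow$, so the IH's hypotheses are available, and that the conclusion of the $\Rrightarrow$-lemma is the precise thing needed for the $\text{cong}_{\text{APP}}^{\innerparred}$ premise). A subtle point worth double-checking is that $s[\rho^n \mapsto r]$ on the right really is the substitution into the reduct and not into $t$ — but since $\innerparred$ preserves outermost structure, the recursion in the definition of substitution is driven by the same shape on both sides, so no mismatch arises.
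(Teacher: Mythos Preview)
Your overall plan matches the paper's proof almost exactly: induction on the derivation of $t \innerparred s$, four congruence cases, the $\rho$ case by a three-way split, the DOT case by the IH, and the APP case by combining the IH on the left with the Substitution Lemma for $\Rrightarrow$ on the right.

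There is, however, a concrete error in your $\text{cong}_{\text{OBJ}}^{\innerparred}$ case. You write ``each attached subterm $t_j \innerparred t_j'$ needs IH applied at index $n+1$'', but look again at the rule in Figure~\ref{figure:normal-reduction}: the premises of $\text{cong}_{\text{OBJ}}^{\innerparred}$ are $t_j \Rrightarrow t_j'$, i.e.\ \emph{full} parallel reduction, not $\innerparred$. So the induction hypothesis for $\innerparred$ does not apply to the subterms. The fix is the same move you already use in the APP case: since $q \innerparred r$ implies $q \Rrightarrow r$, invoke the Substitution Lemma for $\Rrightarrow$ to get $t_j[\rho^{n+1}\mapsto \inct{}{q}] \Rrightarrow t_j'[\rho^{n+1}\mapsto \inct{}{r}]$ (you also need that increment preserves $\Rrightarrow$, which is the auxiliary fact you already flagged), and then reassemble with $\text{cong}_{\text{OBJ}}^{\innerparred}$, whose premises are exactly $\Rrightarrow$-statements. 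This is precisely what the paper does. With that correction your proposal is the paper's proof.
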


Then, we show that if head reduction follows internal parallel reduction, reductions can be reordered so that head reductions occur first.
\begin{lemma}[Standardizing Reductions]
    \label{lemma:normal-reduction:standardizing}
    For any $\varphi$-terms $t, r, s$ such that $t \innerparred r \headred s$, there exists $\varphi$-term $q$, such that $t \headredmany q \innerparred s$.
\end{lemma}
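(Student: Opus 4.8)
The plan is to prove the statement by induction on the derivation of $t \innerparred r$, with a case analysis on the last rule used in $r \headred s$. The key observation is that an internal parallel reduction $t \innerparred r$ never touches the head position of the term (it only rewrites inside objects or inside arguments of applications), so whatever head redex is fired in $r \headred s$ was already present, in a ``parallel-reduced'' form, as a head redex in $t$. We therefore want to fire the corresponding head redex in $t$ first (possibly after a few preparatory head steps), and then push the internal parallel reduction through the result.

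First I would handle the congruence cases for $\innerparred$. If $t \equiv t_1.a \innerparred t_1'.a \equiv r$ via $\mathrm{cong}_{\text{DOT}}^{\innerparred}$ from $t_1 \innerparred t_1'$, then a head step $r \headred s$ is either (i) a $\mathrm{cong}_{\text{DOT}}^h$ step coming from $t_1' \headred s_1$, in which case the induction hypothesis on $t_1 \innerparred t_1' \headred s_1$ gives $t_1 \headredmany q_1 \innerparred s_1$, and we conclude with $t_1.a \headredmany q_1.a \innerparred s_1.a \equiv s$; or (ii) a $\mathrm{DOT}_c$ or $\mathrm{DOT}_c^{\varphi}$ step, which forces $t_1'$ to be an object term $\mkObject{\ldots}$. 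Since $\innerparred$ preserves the outermost shape of a term — an internal parallel reduction of an object term is again an object term with the same labels, the same void attributes, and parallel-reduced attached subterms — $t_1$ must itself be an object term $\mkObject{\ldots, c \mapsto t_c, \ldots}$ with $t_c \Rrightarrow t_c'$ (and the same $\attr$ and the same membership of $\varphi$). So $t_1.c$ is itself a head redex, and firing it gives $t_1.c \headred t_c[\rho^0 \mapsto t_1]$ (or $t_1.\varphi.c$ in the $\varphi$-case). We now need $t_c[\rho^0 \mapsto t_1] \innerparred t_c'[\rho^0 \mapsto t_1']$, which is exactly the content of the Substitution Lemma for $\innerparred$ (Lemma~\ref{lemma:normal-reduction:substitution-inner}), using $t_c \Rrightarrow t_c'$ and $t_1 \innerparred t_1'$ — here I would first confirm that $t_1 \innerparred t_1'$ indeed entails $t_1 \innerparred t_1'$ can be used where a $\Rrightarrow$-premise is needed (it can, since $\innerparred \subseteq \Rrightarrow$). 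For the $\varphi$-case we similarly reduce to $t_1.\varphi.c \innerparred t_1'.\varphi.c$ by two applications of $\mathrm{cong}_{\text{DOT}}^{\innerparred}$. The application congruence case $t \equiv t_1(a \mapsto u) \innerparred t_1'(a \mapsto u') \equiv r$ is analogous: a head step on $r$ is either a $\mathrm{cong}_{\text{APP}}^h$ step (handled by the induction hypothesis on the left component) or an $\mathrm{APP}_c$ step, which forces $t_1'$, hence $t_1$, to be an object term with $c \mapsto \varnothing$; firing $\mathrm{APP}_c$ in $t$ and then using $\mathrm{cong}_{\text{OBJ}}^{\innerparred}$ together with the fact that $u \Rrightarrow u'$ implies $\inct{}{u} \Rrightarrow \inct{}{u'}$ (a routine induction, or an instance of the Substitution Lemma) closes the case.

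The remaining $\innerparred$-congruence cases are $\mathrm{cong}_{\rho}^{\innerparred}$ — impossible, since $\rho^n$ admits no head reduction, so $r \headred s$ cannot occur — and $\mathrm{cong}_{\text{OBJ}}^{\innerparred}$, where $r$ is an object term and again admits no head reduction, so this case is likewise vacuous. This exhausts the case analysis and completes the induction.

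I expect the main obstacle to be the bookkeeping in the $\mathrm{DOT}_c$ case: one must argue carefully that $\innerparred$ really does preserve the outermost constructor and, for object terms, the exact set of attributes (including which labels are void and whether $\varphi \in \attr$), so that the side conditions of the $\mathrm{DOT}_c$, $\mathrm{DOT}_c^{\varphi}$, and $\mathrm{APP}_c$ rules transfer from $r$ back to $t$. This is intuitively clear because the parallel-internal rules never rewrite a head-position redex and never change labels or void-ness, but it deserves an explicit statement (a short ``shape preservation'' sublemma about $\innerparred$ on object terms). The other slightly delicate point is invoking the substitution lemma with the right premises — matching an $\innerparred$ hypothesis against a $\Rrightarrow$ slot via the inclusion $\innerparred \subseteq \Rrightarrow$, and conversely checking that the output of Lemma~\ref{lemma:normal-reduction:substitution-inner} is genuinely an $\innerparred$ step and not merely a $\Rrightarrow$ step. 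Everything else is routine structural manipulation.
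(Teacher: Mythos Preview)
There is a genuine gap in your handling of the $\mathrm{DOT}_c$ subcase. After firing the head redex in $t$ you obtain $t_c[\rho^0 \mapsto t_1]$ and claim that $t_c[\rho^0 \mapsto t_1] \innerparred t_c'[\rho^0 \mapsto t_1']$ follows from the Substitution Lemma for $\innerparred$. But that lemma requires its \emph{first} premise to be an $\innerparred$-step: you would need $t_c \innerparred t_c'$, whereas inverting $\mathrm{cong}_{\text{OBJ}}^{\innerparred}$ only gives $t_c \Rrightarrow t_c'$ (the rule places a full $\Rrightarrow$ on attached subterms, not $\innerparred$). And the conclusion you want is in fact false in general: take $t_c \equiv \mkObject{x \mapsto \mkObject{}}.x$ and $t_c' \equiv \mkObject{}$, so that $t_c \Rrightarrow t_c'$ by $\mathrm{DOT}_x^{\Rrightarrow}$; after substitution the two terms have different head shapes and cannot be related by a single $\innerparred$ step. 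You anticipated the delicate premise-matching in your closing paragraph but got the direction backwards: fitting an $\innerparred$ hypothesis into a $\Rrightarrow$ slot is free via $\innerparred \subseteq \Rrightarrow$; it is the reverse that is needed here, and that inclusion fails.

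The paper closes this gap by instead applying the general Substitution Lemma for $\Rrightarrow$ to obtain $t_c[\rho^0 \mapsto t_1] \Rrightarrow t_c'[\rho^0 \mapsto t_1']$, and then invoking the Main Lemma (Lemma~\ref{lemma:normal-reduction:main}) to factor this parallel step as $t_c[\rho^0 \mapsto t_1] \headredmany p \innerparred t_c'[\rho^0 \mapsto t_1']$. Prepending the single head step $t \headred t_c[\rho^0 \mapsto t_1]$ then yields the desired $t \headredmany p \innerparred s$. Apart from this missing appeal to the Main Lemma, your overall case analysis is sound; in particular your $\mathrm{DOT}_c^{\varphi}$ and $\mathrm{APP}_c$ subcases go through as written, and your induction on $t \innerparred r$ (rather than on $r \headred s$ as in the paper) is a harmless reorganisation.
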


\begin{proof}
By induction on the structure of $r \headred s$.
\end{proof}

Standardization theorem is then a corollary:
\begin{corollary}
    $t \redtomany s$ implies $t \headredmany r \innerredmany s$ for some $\varphi$-term $r$.
\end{corollary}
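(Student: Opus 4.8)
The plan is to obtain the standardization corollary from the three lemmas (Main Lemma, Standardizing Reductions, Substitution Lemmas) together with the equivalence of parallel and regular reduction (Proposition~\ref{proposition:equivalence-parallel-regular-reduction}). The overall strategy is the classical Takahashi-style argument: lift the statement to parallel reduction, where a single parallel step already absorbs a whole block of head reductions followed by a single internal parallel step, and then iterate, repeatedly pushing head reductions to the front.

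First I would reduce to the parallel setting. Given $t \redtomany s$, by Proposition~\ref{proposition:equivalence-parallel-regular-reduction}(2) we get $t \parredtomany s$, i.e.\ a finite chain $t \equiv t_0 \Rrightarrow t_1 \Rrightarrow \cdots \Rrightarrow t_m \equiv s$. So it suffices to prove, by induction on $m$, that $t \parredtomany s$ implies $t \headredmany r \innerredmany s$ for some $r$; the internal parallel step(s) are then converted to $\innerredmany$ using the internal analogue of Proposition~\ref{proposition:equivalence-parallel-regular-reduction}(3) (one parallel internal step decomposes into finitely many internal regular steps by structural induction). The base case $m=0$ is trivial with $r \equiv t$.

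For the inductive step, suppose $t \Rrightarrow t_1 \parredtomany s$. By the induction hypothesis applied to $t_1 \parredtomany s$ we obtain $t_1 \headredmany r_1 \innerparredmany s$ for some $r_1$ (after re-packaging the internal part as an internal parallel chain — note a single $\innerred$ is in particular an $\innerparred$, and $\innerredmany$ lifts to $\parredtomany$-style internal chains). Applying the Main Lemma~\ref{lemma:normal-reduction:main} to $t \Rrightarrow t_1$ gives $t \headredmany r_0 \innerparred t_1$. Now I have $t \headredmany r_0 \innerparred t_1 \headredmany r_1 \,(\innerparred)^\ast\, s$. The crux is to commute the block $\innerparred t_1 \headredmany r_1$ into $\headredmany \cdot \innerparred$: this is exactly Lemma~\ref{lemma:normal-reduction:standardizing} (applied repeatedly, once per head step in $t_1 \headredmany r_1$, threading the internal parallel step through each head reduction). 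Each application turns $\innerparred \cdot \headred$ into $\headredmany \cdot \innerparred$, so after processing the whole block we get $r_0 \headredmany q \innerparred \cdots$, and prepending $t \headredmany r_0$ yields $t \headredmany q$ followed by a run of internal parallel steps down to $s$. Collapsing the trailing internal parallel steps to $\innerredmany$ finishes the argument; set $r \equiv q$.

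The main obstacle is the commutation in Lemma~\ref{lemma:normal-reduction:standardizing} and making the iteration bookkeeping precise: one has to be careful that repeatedly applying standardization to a multi-step $\headredmany$ block terminates and does not reorder in a way that loses confluence of the pieces — this is where the Substitution Lemmas for $\headred$ and $\innerparred$ (Lemmas~\ref{lemma:normal-reduction:substitution} and~\ref{lemma:normal-reduction:substitution-inner}) do the real work, since head reduction via DOT$_c$ performs a locator substitution and one must know internal parallel reductions survive being substituted in and under such redexes. Once those lemmas are in hand, the corollary itself is a short induction; the genuinely delicate content lives in the Main Lemma and the Standardizing Reductions lemma, both of which are stated above and may be assumed here.
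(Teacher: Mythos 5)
Your proposal is correct and follows essentially the same route as the paper's proof: convert $t \redtomany s$ to $t \parredtomany s$, induct on the length of the parallel chain, apply the Main Lemma to the first parallel step and the induction hypothesis to the rest, then repeatedly use the Standardizing Reductions Lemma to push the single $\innerparred$ step through the block of head reductions coming from the induction hypothesis, finally collapsing the trailing internal parallel steps into $\innerredmany$. The only cosmetic difference is that you make explicit the repackaging between $\innerredmany$ and chains of $\innerparred$, which the paper subsumes under "equivalence of $\innerparred$ and $\innerredmany$".
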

    
\begin{proof}
    Recall that equivalence of $\redtomany$ and $\parredtomany$ (4) implies that $t \parredtomany s$.
    
    By induction on $\parredtomany$,
    \begin{enumerate}
        \item if $t \equiv s$, then $t \headredmany r \innerredmany s$ for $r \equiv s$
        \item else, $t \Rrightarrow q$ and $q \parredtomany s$. By Main Lemma, there exists $p$, such that $t \headredmany p \innerparred q$. By induction hypothesis, there exists $r'$, such that $q \headredmany r' \innerredmany s$. Repeated application of the Standardizing Reductions Lemma propagates $\innerparred$ in $p\innerparred q \headredmany r'$ to the end, and the equivalence of $\innerparred$ and $\innerredmany$ completes the proof.
    \end{enumerate}
\end{proof}

\begin{theorem}[Completeness of normal order evaluation]
    If $t$ has normal form $s$, then $t \leftmostmany s$.
\end{theorem}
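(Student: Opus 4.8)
The plan is to follow the classical Curry--Feys / Takahashi argument for completeness of leftmost (normal order) reduction, adapted to $\varphi$-terms. The starting point is the Standardization Corollary already proved: if $t \redtomany s$ then $t \headredmany r \innerredmany s$ for some $r$. So assume $s$ is the normal form of $t$; we get an intermediate term $r$ with $t \headredmany r$ and $r \innerredmany s$. The key observation is that normal order reduction $\leftmost$ coincides with head reduction $\headred$ as long as head reduction is possible, and only descends into the leftmost reducible subterm once the term is in weak head normal form. So first I would show that the head-reduction phase $t \headredmany r$ is literally a prefix of the normal-order reduction sequence: each $\headred$ step is also a $\leftmost$ step by the first clause in the definition of $\leftmost$ for $t.a$ and $t(a\mapsto u)$.

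The second phase is the crux. After $t \headredmany r$, I would argue that $r$ is in weak head normal form --- because $s$, being a normal form, is in particular in WHNF, and $r \innerredmany s$ where internal reduction never creates or destroys a head redex (internal reduction only touches subterms inside objects and arguments of applications, which is exactly what WHNF ignores; this needs a small lemma: if $r \innerred r'$ then $r$ is in WHNF iff $r'$ is). Hence $r$ has the shape $\mkObject{a_i\mapsto\varnothing, b_j\mapsto r_j}$, or $\rho^n$, or $s'.a$ / $s'(a\mapsto u)$ with $s'$ in WHNF and the outer construct not a redex. Now I would recurse: in the object case, $r \innerredmany s$ must reduce inside the $r_j$'s, each $r_j$ has normal form $s_j$, and by induction on term structure $r_j \leftmostmany s_j$, which lifts to $r \leftmostmany s$ by the cong clause; the $s'.a$ and $s'(a\mapsto u)$ cases are similar, noting that once $r$ is in WHNF and the head phase is exhausted, $\leftmost$ descends exactly where $\innerred$ did. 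Composing $t \headredmany r$ (as $\leftmost$ steps) with $r \leftmostmany s$ gives $t \leftmostmany s$.

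The main obstacle is making the induction in the second phase rigorous. Naively inducting on "term structure" is not quite right because $r \innerredmany s$ can be a long sequence and the structure of $r$ versus $s$ at the top level agrees (WHNF is preserved), but I need a well-founded measure. The natural choice is induction on the length of some reduction from $r$ to its normal form $s$, combined with structural induction on $r$; alternatively, one can phrase it as: among all terms with a normal form, prove the statement by induction on (number of $\leftmost$ steps to reach WHNF, then lexicographically on the sizes of the immediate subterms needing reduction). One must also be careful that internal reduction of a subterm cannot "unlock" a new head redex at the root --- for instance reducing inside $r_j$ of an object cannot change the object into something that is not an object, so no head redex appears; and in the $s'.a$ case, reducing the argument-free part $s'$ is a head reduction, not internal, so $\innerredmany$ acting on $s'.a$ can only touch... actually here one must check the definition: $\innerred$ on $t.a$ reduces $t$ via $\innerred$, so $s'$ itself is only internally reduced, staying in WHNF and not a redex, so the non-redex condition is preserved. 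Assembling these preservation facts cleanly is the delicate part; once they are in place, the composition argument is routine.

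A secondary point worth stating explicitly: I would need the trivial but necessary lemma that $\headred \subseteq \leftmost$ and $\innerred$-reductions at the appropriate position are also $\leftmost$-reductions whenever no head redex is available, so that the two phases really do concatenate into a single normal-order sequence rather than merely two separate reduction sequences. With that, the theorem follows by combining Standardization, WHNF-preservation under $\innerred$, and the structural induction sketched above.
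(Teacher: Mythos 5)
Your proposal is correct and follows essentially the same route as the paper: apply the standardization corollary to get $t \headredmany r \innerredmany s$, observe that the head phase is a prefix of the normal-order sequence, and handle the internal phase by induction. The well-founded-measure worry you raise in the second phase is resolved in the paper simply by inducting on the structure of the normal form $s$ rather than on $r$ or on reduction length: the subterms of $s$ are the normal forms of the corresponding subterms of $r$ and are structurally smaller, so the induction hypothesis applies directly.
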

\begin{proof}
With the corollary, $t \headredmany r \innerredmany s$ for some $\varphi$-term $r$.

By induction on the structure of $s$,
\begin{enumerate}
    \item if $s \equiv \rho^n$, then $r \equiv \rho^n$. As $t \headredmany r$, $t \leftmostmany r$, which follows from the definition of $\leftmost$.
    \item if $s \equiv \mkObject{a_1 \mapsto \varnothing, \ldots, a_k \mapsto \varnothing, b_1 \mapsto t_1, \ldots, b_n \mapsto t_n}$, the $r \equiv \mkObject{a_1 \mapsto \varnothing, \ldots, a_k \mapsto \varnothing, b_1 \mapsto t_1', \ldots, b_n \mapsto t_n'}$ and $t_j$ is NF of $t_j'$. By induction hypothesis, $t_j' \leftmostmany t_j$, so $r \leftmostmany s$, hence $t\leftmostmany s$.
    \item if $s \equiv q.a$, then $q$ is in NF and $q.a$ is not a redex, hence $r \equiv q'.a$, and $q$ is NF of $q'$. By induction hypothesis, $q' \leftmostmany q$, and since $q.a$ is not a redex, $q'.a \leftmostmany q.a$. So, $t\headredmany r\leftmostmany s$, and by definition of $\leftmost$, $t\leftmostmany s$.
    \item if $s \equiv q(a\mapsto u)$, then $q$ and $u$ are in NF and $q(a\mapsto u)$ is not a redex, hence $r \equiv q'(a\mapsto u')$, and $q$ is NF of $q'$ and $u$ is NF of $u'$. By induction hypothesis, $q' \leftmostmany q$ and $u' \leftmostmany u$, and since $q(a\mapsto u)$ is not a redex, $q'(a\mapsto u') \leftmostmany q(a\mapsto u') \leftmostmany q(a\mapsto u)$. So, $t\leftmostmany s$.
\end{enumerate}
\end{proof}

\section{Abstract machine}
\label{section:abstract-machine}

Bugayenko \cite{bugayenko/online} gives graph-based operational semantics. Although the general idea is clear, his description lacks precision, in particular regarding the handling of parent locators. Still, the existing implementations of EO programming language and descriptions of graph-based semantics hint strongly that intended semantics are those of a non-strict evaluation.

In this section, we introduce an abstract machine \`a la Krivine that performs call-by-name reduction of $\varphi$-terms, therefore computing their weak head normal form (defined in \ref{def:weak-head-normal-form}). 

\subsection{Call-by-name abstract machine}

Here we present \emph{term-actions-parents} abstract machine (TAP machine) for call-by-name evaluation of $\varphi$-terms. 

We begin by introducing configurations of TAP machine:
\begin{definition}
    An \emph{object closure} is a tuple $(t, e)$ of a $\varphi$-term $t$ and a parent stack, described below.
    A \emph{parent} is a tuple $(t, o)$ of an object $\varphi$-term $t$ and a partial mapping $o$ (called \emph{application mapping}) from $\mathcal{L}$ to a set of object closures.
    A \emph{parent stack} is a finite sequence of parents. An empty parent stack is denoted $\epsilon$.
    An \emph{action} is either an attribute access denoted $.a$ for some attribute $a \in \mathcal{L}$, or an application denoted $(a \mapsto c)$ for some attribute $a \in \mathcal{L}$ and an object closure $c$. An \emph{action stack} is a finite sequence of actions. An empty action stack is denoted $\epsilon$.
    A \emph{configuration} of TAP machine is a triple $\langle T, A, P \rangle$, where $T$ is either a $\varphi$-term in focus or an empty symbol $\epsilon$, $A$ is a stack of actions, and $P$~--- a parent stack.
\end{definition}

The machine operates by following transition rules between the configurations. Figure~\ref{figure:tap-machine} gives the transition rules. The first two rules instruct how to dereference parent locator $\rho^n$. Attribute access and application terms are broken down into a smaller term and an action. For application $t(a \mapsto u)$ we save the current parent stack $e$ and put an action $(a \mapsto (u, e))$ on the stack of actions. This effectively captures the necessary context required to compute term $u$ later. Rule~\ref{rule:tap-from-object} puts an object term on the stack with an empty application mapping (denoted $\varnothing$). The remaining four rules describe effects of actions on the parent on the top of the stack. If the parent object on the stack has an attribute that we want to access, we extract the corresponding  subterm and set it as our new current term. On the other hand, if the attribute is mapped by the parent application mapping to some object closure, then we take the term from that closure as our new current term and replace current parent stack with the one from the closure. If a parent has no required attribute, but has $\varphi$, we simply add $.\varphi$ action to the action stack. Finally, an application action merely updates the parent at the top of the parent stack, by replacing its application mapping correspondingly: $o \cup \{a \mapsto (u, e')\}$ denotes a mapping that maps attribute $a$ to object closure $(u, e')$ and maps any other attribute $x$ to $o(x)$.

\begin{figure*}
  \begin{mdframed}
 \emph{Initial configuration}
 \begin{align*}
 t \longrightarrow \mkState{t, \epsilon, \epsilon}
 \end{align*}
 
 \emph{Transition rules for configurations}
 \begin{align}
    \mkState{\rho^0, p, e} &\longrightarrow \mkState{\epsilon, p, e }\\
    \mkState{\rho^{n+1}, p, (c,o):e } &\longrightarrow \mkState{\rho^n, p, e }\\
    \mkState{t.a, p, e } &\longrightarrow \mkState{t, .a:p, e }\\
    \mkState{t(a \mapsto u), p, e } &\longrightarrow \mkState{t, (a \mapsto (u, e)):p, e }\\
    \mkState{\mkObject{a_1 \mapsto \varnothing, \ldots, b_1 \mapsto t_1, \ldots}, p, e } &\longrightarrow
    \mkState{\epsilon, p, (\mkObject{a_1 \mapsto \varnothing, \ldots, b_1 \mapsto t_1, \ldots}, \varnothing):e }
    \label{rule:tap-from-object}
    \\
    \mkState{\epsilon, .a:p, (\mkObject{a \mapsto u, \ldots}, o): e} &\longrightarrow \mkState{u, p, (\mkObject{a \mapsto u, \ldots}, o): e} \\
    \mkState{\epsilon, .a:p, (\mkObject{a \mapsto \varnothing, \ldots}, \{(a \mapsto (u, e')), \ldots\}): e} &\longrightarrow 
    \mkState{u, p, e'} \\
    \mkState{\epsilon, .a:p, (t, o): e} &\longrightarrow \mkState{\epsilon, .\varphi.a:p, (t, o): e} \text{,}\\ 
    \text{if $t \equiv \mkObject{\ldots}$, $\varphi \in \attr(t)$, $c \notin \attr(t)$} \span \notag\\
    \mkState{\epsilon, (a \mapsto (u, e')):p, (\mkObject{a \mapsto \varnothing, \ldots}, o): e} &\longrightarrow
    \mkState{\epsilon, p, (\mkObject{a \mapsto \varnothing, \ldots}, o \cup \{a \mapsto (u, e')\}): e} \text{,}\\
    \text{if $o$ is not defined for attribute $a$}\span\notag
\end{align}

  \end{mdframed}
  \caption{TAP machine for call-by-name evaluation of $\varphi$-terms.}
  \label{figure:tap-machine}
\end{figure*}

An object closure can be converted back to a $\varphi$-term by converting every parent into a $\varphi$-term and then performing locator substitution, instantiating corresponding parents.
A parent $(t, o)$ is converted into a $\varphi$-term by joining its object term $t$ with its application mapping and converting every object closure in that mapping to a $\varphi$-term.
Any configuration $\langle t, A, P \rangle$ (or $\langle \epsilon, A, p:P \rangle$) can be converted back to a $\varphi$-term by appending the stack of actions, where object closures are converted to $\varphi$-terms, to the term produced from the closure $(t, P)$ (resp. $(t, P)$ where $t$ is produced from $p$).

\begin{proposition}[Soundness of TAP machine]
    Let $t$ be a closed $\varphi$-term. Then starting from configuration $C_0 = \mkState{t, \epsilon, \epsilon}$ TAP machine operates for finitely many steps if and only if $t$ has a weak head normal form. Moreover, if it stops with configuration $C_n$ then this configuration corresponds to the weak head normal form of $t$.
\end{proposition}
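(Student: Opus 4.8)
The plan is to set up a \emph{decoding} (read-back) map $\mathcal{D}$ that sends a machine configuration --- and, by mutual recursion, object closures and parents --- to a $\varphi$-term, making precise the informal description preceding the statement. A parent $(q,o)$ decodes to the object term $q$ in which every void attribute $a\in\operatorname{dom}(o)$ is turned into an attached attribute carrying $\inct{}{\mathcal{D}(o(a))}$ --- i.e.\ ``joining'' eagerly performs the pending $\mathrm{APP}$ reductions; a parent stack is then consumed by iterated locator substitution of the decoded parents; and $\mathcal{D}\langle T, A, P\rangle$ is obtained by wrapping the term decoded from $(T,P)$ in the decoded action stack $A$. I would first establish a \emph{well-formedness invariant}, holding for $\langle t,\epsilon,\epsilon\rangle$ when $t$ is closed and preserved by all transitions: focused terms are closed relative to the depth of their parent stack, stored closures originate from subterms of $t$ and carry no cyclic sharing, application mappings only bind void attributes of their object, and $T=\epsilon$ occurs only with a non-empty parent stack. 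This is exactly what makes $\mathcal{D}$ a well-defined total function producing closed $\varphi$-terms along any run from a closed $t$; closedness of $t$ is used only here.

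The heart of the argument is a \emph{simulation lemma}: $C\longrightarrow C'$ implies $\mathcal{D}(C)\equiv\mathcal{D}(C')$ or $\mathcal{D}(C)\headred\mathcal{D}(C')$, verified rule by rule. Rules~(1)--(5) merely re-bracket a configuration (dereferencing $\rho^{n}$ against the top parent, pushing an action, or pushing an object as a fresh parent) and are decode-preserving; the checks unfold $\mathcal{D}$ and use that attribute access and application are not binders, together with the fact that $\inct{0}{u}$ and $[\rho^0\mapsto v]$ cancel on closed $u$. Rules~(6),~(7),~(8),~(9) each match exactly one head-reduction step: (6) is $\mathrm{DOT}_c$, (8) is $\mathrm{DOT}_c^\varphi$, (7) is the $\mathrm{DOT}_c$ extracting an already-instantiated attribute, and (9) is $\mathrm{APP}_c$ --- this last one looks purely administrative, but because the decoding joins eagerly, recording an application in the parent mapping is observed downstream as a filled-in void attribute, i.e.\ as $\mathrm{APP}_c$. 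I would then show administrative steps cannot loop: rules~(1)--(5) strictly decrease a well-founded measure on the focused term ($\mu(\epsilon)=0$, $\mu(\rho^n)=n+1$, structural size otherwise) and none of them applies when $T=\epsilon$, so a non-halting run performs infinitely many head steps and, composing the simulation lemma, yields an infinite head reduction $\mathcal{D}(C_0)\headred\mathcal{D}(C_{i_1})\headred\cdots$.

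Next I would prove \emph{adequacy}: if no transition applies to a well-formed $C$ then $\mathcal{D}(C)$ is in weak head normal form. Since a non-$\epsilon$ focus always enables one of (1)--(5), necessarily $T=\epsilon$; inspecting the top action against the top parent, the only ways for (6)--(9) to be blocked are precisely the cases in which the decoded head $\mkObject{\ldots}.a$ or $\mkObject{\ldots}(a\mapsto u)$ fails to be a redex, and wrapping a WHNF in the remaining actions stays a WHNF by Definition~\ref{def:weak-head-normal-form}. Finally I assemble the proposition. Head reduction is deterministic (immediate from its rules), and --- using the standardization corollary above together with the fact that internal reduction reflects being in WHNF --- a term reduces to a weak head normal form iff the head-reduction sequence from it is finite, ending at that WHNF. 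If $t$ has a WHNF, a non-halting run would produce (by the previous paragraph) an infinite head reduction from $t$, impossible; so the machine stops at some $C_n$, where $\mathcal{D}(C_n)$ is a WHNF by adequacy and $t\headredmany\mathcal{D}(C_n)$ by the simulation lemma, whence $\mathcal{D}(C_n)$ is \emph{the} WHNF of $t$. Conversely, if the machine stops at $C_n$ then $t\headredmany\mathcal{D}(C_n)$ with $\mathcal{D}(C_n)$ in WHNF, so $t$ has one.

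The main obstacle is pinning down $\mathcal{D}$ so that every transition checks out --- in particular realizing that rule~(9) must be matched with a genuine $\mathrm{APP}_c$ step, which forces the eager-join reading of application mappings --- and then carrying the $\inct{}{\cdot}$, $[\rho^n\mapsto\cdot]$ and parent-stack substitution bookkeeping through both the simulation lemma (especially rules~(6),~(7)) and the well-formedness invariant; essentially all the work of the proof sits there.
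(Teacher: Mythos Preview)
Your proposal is correct and is essentially a fully worked-out version of the paper's two-sentence sketch: the paper simply asserts that head-reduction steps correspond to blocks of machine transitions and that the ``equivalent'' (administrative) transitions only destructure the current term and hence cannot loop, which is exactly your simulation lemma plus your measure argument on rules~(1)--(5). Your direction of simulation (machine step $\Rightarrow$ zero-or-one head step via a decoding $\mathcal{D}$) is the dual of the paper's phrasing (head step $\Rightarrow$ block of machine steps), and your observation that rule~(9) must be read as a genuine $\mathrm{APP}_c$ step under an eager-join decoding, together with the adequacy and well-formedness invariants, supplies precisely the details the paper omits.
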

\begin{proof}
  Each reduction in call-by-name evaluation sequence corresponds unambiguously to zero or more transitions of the TAP machine. Transitions from equivalent configurations, corresponding to the same $\varphi$-term, destructure current term, so there can only be a finite sequence of them.
\end{proof}

\section{Translation to $\lambda$-calculus}
\label{section:translation-to-lambda-calculus}

In this section we compare $\varphi$-calculus with $\lambda$-calculus, present translation rules from one to the other, and prove soundness of the translation.

\subsection{$\lambda$-calculus with records}

We will use Mitchell Wand's $\lambda$-calculus with records \cite{WAND19911}, including both record extension (via $\text{\textbf{with}}$-expression) and record concatenation. As we will be translating locators approximately to de Bruijn indices in $\lambda$-terms, we will focus on a nameless variation of the syntax.

One important detail for us will be the computation rule for record extension. We will consider a term of the form $\recext{e}{\{\ldots\}}$ in weak head normal form, and extend the $\lambda$-calculus with the following evaluation rules:
\begin{align*}
    (\recext{e}{\{a = e_a, \ldots\}}).a &\longrightarrow e_a \\
    (\recext{e}{\{\ldots\}}).a &\longrightarrow e.a \quad \text{if $a$ is not in \{\ldots\}}
\end{align*}

This slight modification of Wand's calculus allows strictly more terms to avoid diverging computation, and is crucial for translation of decorators from $\varphi$-calculus.

\subsection{Translation from $\varphi$-calculus to $\lambda$-calculus}

\begin{figure*}
  \begin{mdframed}
    \emph{Syntax of nameless $\lambda$-calculus with records}
    \begin{align*}
        e &:= \underline{n} \tag{de Bruijn index} \\
          &\mid \lambda e
            \tag{abstraction} \\
          &\mid (e_1) e_2
            \tag{application} \\
          &\mid \{a_1 = e_1, \ldots, a_n = e_n\}
            \tag{record} \\
          &\mid e.a
            \tag{attribute} \\
          &\mid \recext{e}{\{a_1 = e_1, \ldots, a_n = e_n\}}
            \tag{record extension} \\
          &\mid e_1 \;\|\; e_2
            \tag{record concatenation} \\
          &\mid \fix e
            \tag{fixed point}
    \end{align*}
    
    \emph{Translation from $\varphi$-calculus to $\lambda$-calculus}
    \begin{align*}
        \phitolambda(\rho^n)
            &:= \lambda\underline{(2n + 2)} (\underline{(2n + 1)} \;\|\; \underline{0}) \\
        \phitolambda(t.a)
            &:= (\phitolambda(t)\;\{\}).a \\
        \phitolambda(t(a \mapsto u))
            &:= \lambda (\inc_\lambda(\phitolambda(t))) (\recext{\underline{0}}{\{a = \inc_\lambda(\phitolambda(u))\}}) \\
        \phitolambda(\mkObject{a_i \mapsto \varnothing, \ldots, b_j \mapsto t_j, \varphi \mapsto \ldots})
            &:= \fix (\lambda\lambda \recext{((\underline{1}\;\underline{0}).\varphi\;\{\}) }{\{a_i = 0.a_i, \ldots, b_j = \phitolambda(t_j) \}}) \\
        \phitolambda(\mkObject{a_i \mapsto \varnothing, \ldots, b_j \mapsto t_j})
            &:= \fix (\lambda\lambda \{\ldots, a_i = 0.a_i, \ldots, b_j = \phitolambda(t_j), \ldots\})
    \end{align*}
  \end{mdframed}

    \caption{Translation from $\varphi$-calculus to $\lambda$-calculus with records.}
    \label{fig:translation-rules}
\end{figure*}

To translate $\varphi$-terms to $\lambda$-terms, one must understand how to represent objects. Since records have nothing like void attributes, we cannot map void attributes directly to record attributes. So, instead, we will represent objects as functions taking records with instantiated void attributes. For example, we would like to represent an empty object $\mkObject{}$ as a constant function $\lambda\{\}$, and an object $\mkObject{x \mapsto \varnothing, y \mapsto \mkObject{}}$ as a function $\lambda\{x = 0.x, y = \lambda\{\}\}$.

Since locators enable referencing outer terms, for translation we will also make use of the fixpoint combinator. In particular, a term $\mkObject{x \mapsto \rho^0}$ should be translated into $\fix(\lambda\lambda\{x = \lambda (2\;(\recext{1}{0})\})$. Note that the outermost $\lambda$ introduces the translated object (represented as a function) that is then referenced as $2$ in $2\;(\recext{1}{0})$. $1$ references the instantiated void attributes passed to the outer term, and $0$ references the instantiated void attributes passed to the locator $\rho^0$.

In general, translation objects terms involves two $\lambda$-abstractions. One abstraction is used together with the fixed point combinator, to allow locators. And another one is used to properly represent void attributes. So, when translating locator $\rho^n$ we need to represent it so that it references the proper outer term and corresponding void attributes. That is why we translate $\rho^n$ to $\lambda \underline{2n+2} (\reccat{\underline{2n+1}}{\underline{0}})$. All the other translation rules follow naturally and are presented in Figure~\ref{fig:translation-rules}.

\subsection{Soundness of translation}

The translation is sound if it commutes with computation. That is, given $\varphi$-terms $t$ and $u$ such that $t \redto_\varphi u$, we have $\phitolambda(t) \approx \phitolambda(u)$. Intuitively, by $e_1 \approx e_2$ we mean that $e_1$ and $e_2$ are observationally equivalent. More precisely, $e_1 \approx e_2$ if and only if $e_1$ is $\beta\eta\zeta$-equivalent to $e_2$. Here by $\zeta$-equivalence we mean the obvious congruence rules, like associativity of $\|$: $\reccat{x}{(\reccat{y}{z})} \approx_\zeta \reccat{(\reccat{x}{y})}{z}$.

\begin{proposition}
    \label{proposition:phitolambda-preserves-redto}
    Let $t, u$ be $\varphi$-terms and $t \redto_\varphi u$. Then
    \[
        \phitolambda(t) \approx \phitolambda(u)
    \]
\end{proposition}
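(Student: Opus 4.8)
The plan is to proceed by induction on the derivation of $t \redto_\varphi u$. Since $\approx$ is $\beta\eta\zeta$-equivalence it is a congruence, so the four congruence rules $\text{cong}_{\text{OBJ}}$, $\text{cong}_{\text{DOT}}$, $\text{cong}_{\text{APP}^\text{L}}$, $\text{cong}_{\text{APP}^\text{R}}$ follow immediately from the induction hypothesis, once one also observes that the de Bruijn shift $\inc_\lambda$ preserves $\approx$ (this is needed because $\phitolambda$ of an application wraps both sub-translations in $\inc_\lambda$). All the real work is concentrated in the three base rules $\text{DOT}_c$, $\text{DOT}_c^\varphi$ and $\text{APP}_c$; each of these, after unfolding the relevant clauses of $\phitolambda$, becomes a direct computation in $\lambda$-calculus with records whose crux is unfolding the $\fix$ that $\phitolambda$ places on object terms (using $\fix\,g \approx g\,(\fix\,g)$ and two subsequent $\beta$-steps, one for each of the two $\lambda$-binders in the encoding of an object).

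Before treating those cases I would establish two auxiliary lemmas, both by structural induction on $\varphi$-terms. The first is an \emph{increment lemma}: $\phitolambda$ commutes with locator increment, so that $\phitolambda(\inct{}{t})$ is obtained from $\phitolambda(t)$ by shifting all free de Bruijn indices up by $2$ (two, because each $\varphi$-object level is encoded by a pair of $\lambda$-binders — the one shared with $\fix$ and the one carrying the instantiated-void-attribute record), and, more generally, $\phitolambda(\inct{i}{t})$ shifts by $2$ only the indices lying above the slots for the $i$-th enclosing object; the special instance $\inc_\lambda(\phitolambda(t))$ is simply $\phitolambda(t)$ with all free indices shifted by $1$. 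The second is a \emph{substitution lemma}, the $\varphi$-calculus analogue of the usual one: $\phitolambda(t[\rho^n \mapsto u])$ is, up to $\approx$, the result of simultaneously substituting, into the two de Bruijn slots of $\phitolambda(t)$ that encode the $n$-th enclosing object (the void-record slot and the object-as-function slot), the two components built from $\phitolambda(u)$ and $\inc_\lambda$. This lemma plays here the role that the Substitution Lemma plays in the confluence proof, and it is exactly what drives the $\text{DOT}_c$ case.

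With the lemmas in hand the base cases run as follows. For $\text{DOT}_c$, with $t \equiv \mkObject{\ldots, c \mapsto t_c, \ldots}$: unfold $\phitolambda(t.c) = (\phitolambda(t)\,\{\}).c$, rewrite $\phitolambda(t) = \fix(g) \approx g\,(\fix\,g)$, perform the two $\beta$-steps feeding $\fix\,g$ into the outer $\lambda$ (the one used with $\fix$) and $\{\}$ into the inner $\lambda$ (the void-record), project on $c$, and identify the result through the substitution lemma with $\phitolambda(t_c[\rho^0 \mapsto t])$. For $\text{DOT}_c^\varphi$, the same unfolding of $\phitolambda(t.c)$ produces a term of the form $\recext{((\phitolambda(t)\,\{\}).\varphi\,\{\})}{\{\ldots\}}$ in which $c$ is absent from the extension record (because $c \notin \attr(t)$), so the modified record-extension computation rule of Section~\ref{section:translation-to-lambda-calculus} forces the projection on $c$ to bypass the extension and yield $((\phitolambda(t)\,\{\}).\varphi\,\{\}).c = \phitolambda(t.\varphi.c)$ — precisely the situation for which that nonstandard rule was introduced. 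For $\text{APP}_c$, with $t \equiv \mkObject{\ldots, c \mapsto \varnothing, \ldots}$: compare $\phitolambda(t(c \mapsto u)) = \lambda\,(\inc_\lambda(\phitolambda(t)))\,(\recext{\underline{0}}{\{c = \inc_\lambda(\phitolambda(u))\}})$ with $\phitolambda(\mkObject{\ldots, c \mapsto \inct{}{u}, \ldots})$; apply the increment lemma to cancel the $\inc_\lambda$'s under $\beta$ and to rewrite $\phitolambda(\inct{}{u})$, unfold both $\fix$'s, and, viewing both sides as functions of the instantiated-void-record argument, reduce them to the same record — here the record-extension computation rule is what guarantees that the overwritten field (whose value becomes $\phitolambda(u)$ on both sides, once a locator like $\rho^0.c$ is traced through) and the untouched void fields behave identically on the two sides.

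The main obstacle is the substitution lemma together with the de Bruijn bookkeeping it entails: the doubled encoding — two $\lambda$-binders per $\varphi$-object, the outer one shared with $\fix$ — means one must carefully track which of the paired slots receives which component of $\phitolambda(u)$, and how the shift amounts propagate through translations of nested objects, and the whole argument must be carried out in parallel for the concrete-object clause and the $\varphi$-decorated-object clause of $\phitolambda$, the latter additionally relying on the modified record-extension rule whenever the decoratee is consulted for a missing attribute. By comparison, the congruence cases, the fact that $\inc_\lambda$ preserves $\approx$, and the increment lemma itself are routine inductions.
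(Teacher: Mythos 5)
Your proposal is correct and follows the same route as the paper, which proves this proposition by induction on the derivation of $t \redto_\varphi u$ (the paper's proof consists only of the words ``straightforward by induction'' and supplies no further detail). Your identification of the increment and substitution commutation lemmas for $\phitolambda$, and of the role of the modified record-extension rule in the $\text{DOT}_c^\varphi$ case, correctly fills in the bookkeeping that any complete version of this induction would require.
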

\begin{proof}
    Straightforward by induction on $t \redto_\varphi u$.
\end{proof}

\begin{theorem}[Soundness of $\phitolambda$]
    Let $t, t'$ be $\varphi$-terms 
    such that
    $t \redtomany_\varphi t'$. Then
    \[
        \phitolambda(t) \approx \phitolambda(t')
    \]
\end{theorem}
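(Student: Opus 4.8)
The plan is to reduce the multi-step statement to the single-step Proposition~\ref{proposition:phitolambda-preserves-redto} by a routine induction on the length of the reduction sequence $t \redtomany_\varphi t'$. Recall that $\redtomany_\varphi$ is, by definition, the reflexive–transitive closure of $\redto_\varphi$, so any derivation of $t \redtomany_\varphi t'$ is a finite chain $t \equiv s_0 \redto_\varphi s_1 \redto_\varphi \cdots \redto_\varphi s_n \equiv t'$.

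First I would handle the base case $n = 0$: here $t \equiv t'$, so $\phitolambda(t) \equiv \phitolambda(t')$, and since $\approx$ is $\beta\eta\zeta$-equivalence it is in particular reflexive, giving $\phitolambda(t) \approx \phitolambda(t')$. For the inductive step, suppose the claim holds for chains of length $n$, and consider $t \redto_\varphi s \redtomany_\varphi t'$ where the second part has length $n$. By Proposition~\ref{proposition:phitolambda-preserves-redto} we get $\phitolambda(t) \approx \phitolambda(s)$, and by the induction hypothesis $\phitolambda(s) \approx \phitolambda(t')$. Chaining these requires only that $\approx$ is transitive, which again holds because $\beta\eta\zeta$-equivalence is an equivalence relation. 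Hence $\phitolambda(t) \approx \phitolambda(t')$, completing the induction.

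There is essentially no obstacle here: the entire content of the soundness theorem is already packed into Proposition~\ref{proposition:phitolambda-preserves-redto}, and the only facts used beyond it are that $\approx$ is reflexive and transitive. The one thing worth stating explicitly for cleanliness is that $\approx$ is an equivalence relation (reflexive, symmetric, transitive), which is immediate from its definition as $\beta\eta\zeta$-convertibility; if desired one could record this as a small observation before the induction. I would present the proof in two or three lines, simply invoking induction on the number of reduction steps together with Proposition~\ref{proposition:phitolambda-preserves-redto} and transitivity of $\approx$.
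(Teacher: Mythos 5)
Your proof is correct and follows essentially the same route as the paper, which likewise dispatches the theorem as an induction over the reduction chain using Proposition~\ref{proposition:phitolambda-preserves-redto}; the only cosmetic difference is that the paper additionally cites confluence of $\lambda$-calculus, which is what guarantees transitivity of $\approx$ if one reads $\beta\eta\zeta$-equivalence as joinability rather than as generated convertibility. Your observation that transitivity is immediate when $\approx$ is taken to be the generated equivalence relation is fine, and the rest of the argument matches the paper's.
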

\begin{proof}
    Follows from Proposition~\ref{proposition:phitolambda-preserves-redto} and confluence of $\lambda$-calculus.
\end{proof}

\subsection{Translation from $\lambda$-calculus to $\varphi$-calculus}

The translation from $\varphi$-calculus aimed to map attributes of objects in $\varphi$-terms to attributes of records in $\lambda$-calculus. Unfortunately, such mapping is impossible in the backwards direction, unless we drop the record concatenation. Indeed, record concatenation cannot be translated to $\varphi$-calculus directly, as the latter does not support any mechanism for merging objects.

There exist two remaining options for translation from $\lambda$-calculus. First, we could translate only the segment without record concatenation. Such translation is possible under assumption that attributes map to attributes. However, as record concatenation is important for translation of locators from $\varphi$-calculus, this option is not ideal, as we only have full translation in one direction. Second, we can encode attributes and records, for example, using Church encoding. This would effectively translate $\lambda$-terms with records to mere $\lambda$-terms, which can then be translated to $\varphi$-calculus.

As we do not see a satisfactory translation from $\lambda$-calculus to $\varphi$-calculus, we propose, as a potential future work, an extension of $\varphi$-calculus with object concatenation.


\subsection{$\varphi$-calculus versus $\lambda$-calculus}

$\varphi$-calculus shares some common features with various $\lambda$-calculi as both are confluent term-rewriting systems capturing the notion of computability. Yet, $\varphi$-calculus focused on objects differs from $\lambda$-calculus in the following important ways:
\begin{enumerate}
  \item $\varphi$-calculus does not rely on $\lambda$-terms to represent functions. In $\varphi$-calculus everything is an object (in the sense of Definition~\ref{definition:object}).
  \item The attribute access in $\varphi$-calculus is more powerful than that of $\lambda$-calculus with records, since evaluation of the former involves substitution, essentially incorporating the expressive power of a fixpoint combinator. In fact attribute access shares certain similarities with with $\beta$-reduction in $\lambda$-calculus because of the substitution involved.
  \item Because of the decorators, $\varphi$-calculus requires no explicit analogue of fixpoint combinator. In a sense, objects have a recursive $\mathsf{let}$-construction built into them.
  \item The locators in $\varphi$-calculus are arguably more natural than de Bruijn indices used in $\lambda$-calculi. Here, by ``more natural'' we mean the following properties of locators as compared to de Bruijn indices:
  \begin{enumerate}
      \item In $\lambda$-calculi de Bruijn indices are used primarily for the convenience of algorithms, while humans prefer named function arguments. In $\varphi$-calculus, most objects already have a name as they are typically bound to some attribute. In fact, as we show in Section 6.1, locators can be omitted (most of the time).
      \item When presented in text (for humans), nested objects typically are indented, so for many examples it is easy to see which outer object the locator refers to. This is less convenient for $\lambda$-calculi where body of $\lambda$-abstraction is rarely indented. Besides, curried functions have arguments numbered in "reversed" order when using de Bruijn indices: $\lambda x_0. \lambda x_1. \lambda x_2. (x_0)\;((x_1)\;x_2)$ becomes $\lambda\lambda\lambda (2)\;((1)\;0)$ when using de Bruijn indices.
  \end{enumerate}
\end{enumerate}

Note that $\varphi$-calculus is Turing complete, as can be shown by embedding pure $\lambda$-calculus (in de Bruijn notation) into $\varphi$-calculus:
\begin{enumerate}
    \item (var) $n \longleftrightarrow \rho^n.\mathsf{arg}$
    \item (abs) $(t \longleftrightarrow u) \Rightarrow (\lambda t \;\longleftrightarrow\;\mkObject{ \mathsf{arg} \mapsto \varnothing, \mathsf{body} \mapsto u })$
    \item (app) $((t_1 \longleftrightarrow u_1) \land (t_2 \longleftrightarrow u_2)) \Rightarrow ((t_1)\;t_2 \longleftrightarrow u_1(\mathsf{arg} \mapsto u_2).\mathsf{body})$
\end{enumerate}

This embedding can be shown to be faithful in the sense that whenever $t \rightsquigarrow u$ we can encode $t$ in $\varphi$-calculus, compute, and decode the result: $t \longleftrightarrow t_\varphi \rightsquigarrow^* u_\varphi \longleftrightarrow u$.

As both calculi are Turing complete, it is not surprising that $\lambda$-calculus can be embedded into $\varphi$-calculus, or vice versa. However, $\varphi$-calculus shares enough similarities with $\lambda$-calculus with records to enable a particular kind of a translation, where objects of $\varphi$-calculus are, more or less, directly transated as records of $\lambda$-calculus. Such a direct translation is possible in one direction, and a partial\footnote{it is impossible to translate concatenation of records in this way} translation is possible the other. These translations can be used not only to improve understanding of the two formalizations, but also to translate certain useful properties between the systems. In particular, such a translation might be useful to develop a sound type system for $\varphi$-calculus in the future.
\section{Extensions}

Bugayenko \cite{bugayenko/online} introduces a calculus that reflects capabilities of his EO programming language. As such it is richer than $\varphi$-calculus we have presented in Section~\ref{section:calculus}. In this section, we give examples of possible syntactic extensions to our calculus closing the gap between the two presentations. We leave out the more complicated extensions, such as mutable memory, primitive data types, or modelling input/output for future work.

\subsection{Attribute-variables}
\label{subsection:attribute-variables}

Locators are often used in combination with attribute access: $\rho^n.a$. In practice, though, attribute names can be descriptive and unique (at least in a certain scope or subterm) so that a person can understand easily which object this attribute belongs to. Such practice prompts a version of the syntax where locators are optional and can be omitted. For example, instead of $\mkObject{x \mapsto \rho^0.y, y \mapsto \mkObject{z \mapsto \rho^1.x}}$ one could omit both locators and it would still be clear where attributes should come from: $\mkObject{x \mapsto y, y \mapsto{z \mapsto x}}$.

More formally, we extend syntax of $\varphi$-terms with \emph{attribute-variables}:

\begin{definition}
  A set of $\varphi$-terms with attribute-variables $T_a$ is defined inductively as follows:
  \begin{enumerate}
    \item if $t \in T$ ($t$ is a $\varphi$-term) then $t \in T_a$;
    \item if $a \in \mathcal{L}$ then $a \in T_a$.
  \end{enumerate}
\end{definition}

As long as all attributes are defined in some enclosing object, we can restore locators. To do so we, have to traverse the term while keeping track of locators for known attributes. For the latter we will use a context represented by a mapping $\Gamma : \mathcal{L} \to \mathbb{N} \cup \{\bot\}$. For convenience we will write $\Gamma, a \in \rho^n$ to mean context $\Gamma'$ defined as follows:
\begin{align*}
  \Gamma'(a) &:= n \\
  \Gamma'(x) &:= \Gamma(x)\quad\text{when $x \not= a$}
\end{align*}

We will also define an increment operation on the context:
\begin{align*}
  \inct{}{\Gamma}(a) := \Gamma(a) + 1
\end{align*}

Translation from $\varphi$-terms with attribute-variables to regular $\varphi$-terms can be summarized with the following rules:
\begin{prooftree}
  \AxiomC{}
  \UnaryInfC{$\Gamma, a \in \rho^n \vdash a \longrightarrow \rho^n.a$}
  \DisplayProof\quad\quad
  \AxiomC{}
  \UnaryInfC{$\Gamma \vdash \rho^n \longrightarrow \rho^n$}
\end{prooftree}
\begin{prooftree}
  \AxiomC{$\Gamma \vdash t \longrightarrow t'$}
  \UnaryInfC{$\Gamma \vdash t.a \longrightarrow t'.a$}
  \DisplayProof\quad\quad
  \AxiomC{$\Gamma \vdash t \longrightarrow t'$}
  \AxiomC{$\Gamma \vdash u \longrightarrow u'$}
  \BinaryInfC{$\Gamma \vdash t(a \mapsto u) \longrightarrow t(a \mapsto u')$}
\end{prooftree}
\begin{prooftree}
  \AxiomC{$\inct{}{\Gamma}, a_0 \in \rho^0, \ldots, b_1 \in \rho^0, \ldots \vdash t_j \longrightarrow t'_j \quad \text{for all $j \in \{1, \ldots, n\}$}$}
  \UnaryInfC{$\Gamma \vdash \mkObject{a_1 \mapsto \varnothing, \ldots, b_1 \mapsto t_1, \ldots} \longrightarrow \mkObject{a_1 \mapsto \varnothing, \ldots, b_1 \mapsto t'_1, \ldots}$}
\end{prooftree}

Note that, by reversing the first rule, we can similarly erase unnecessary locators, yielding translation in the other direction.

\subsection{Global object}

Sometimes tracking nested objects might be inconvenient, and it might be easier to reference objects ``from the top-level''. This statement is especially true in an actual programming language. One can extend calculus with explicit names for objects to use instead of locators, but Section~\ref{subsection:attribute-variables} already provides a clean solution to provide a name for all terms, except for those at the top-level.

To reference top-level object by name, we may extend syntax with \emph{global object locator} $\Phi$. Similarly to attribute-variables, this extension is purely syntactic and requires no extension of evaluation rules as a proper locator can safely replace each occurrence of $\Phi$.

\subsection{Positional arguments}

Void attributes often serve as method arguments. To emphasize this role, we extend the syntax with positional arguments and nameless application.

We denote by
\[
\mkObject{\ldots, f(a_1, \ldots, a_k) \mapsto \mkObject{\ldots}, \ldots}
\]
an object where attribute $f$ is mapped to object with attributes $a_1, \ldots, a_k$ that are mapped to special void \emph{positional attributes} $\pi_1, \ldots, \pi_k$:
\[
\mkObject{\pi_1 \mapsto \varnothing, \ldots, \pi_k \mapsto \varnothing, a_1 \mapsto \rho^0.\pi_1, \ldots, a_k \mapsto \rho^0.\pi_k, \ldots}
\]

We denote by
\[ t\;t_1\;\ldots\;t_n \]
an application using positional attributes:
\[ t(\pi_1 \mapsto t_1, \ldots, \pi_n \mapsto t_n) \]

With this syntax abstract objects can be more easily identified as methods:

\begin{example}
    Using extended syntax, we can rewrite Example~\ref{example:phi-classes} as follows
\begin{align*}
    \mathsf{Base} &:= \mkObject{\mathsf{new} \mapsto  \\
                  &\phantom{:= \llbracket \mathsf{new}} \mkObject{ \mathsf{g}(\mathsf{this}) \mapsto \mkObject{\varphi \mapsto \mathsf{\rho^0.this.h(this \mapsto \rho^0.this).\varphi}},\\
                  &\phantom{:= \llbracket \mathsf{new} \llbracket} \mathsf{h}(\mathsf{this}) \mapsto \mkObject{\varphi \mapsto 3}}\\
                  &\phantom{:= \llbracket} }\\
 \mathsf{Derived} &:= \mkObject{\mathsf{new} \mapsto  \\
                  &\phantom{:= \llbracket \mathsf{new}} \mkObject{ \varphi \mapsto \mathsf{Base.new},\\
                  &\phantom{:= \llbracket \mathsf{new} \llbracket} \mathsf{f}(\mathsf{this}) \mapsto \mkObject{\varphi \mapsto \mathsf{2.add(n\mapsto\rho^0.this.g(this \mapsto \rho^0.this).\varphi)}},\\
                  &\phantom{:= \llbracket \mathsf{new} \llbracket} \mathsf{h}(\mathsf{this}) \mapsto \mkObject{\varphi \mapsto 5}}\\
                  &\phantom{:= \llbracket} } \\
d &:= \mathsf{Derived}.\mathsf{new}\span
\end{align*}
\end{example}





\section{Related work}

Lambda Calculus of Objects, $\lambda\mathrm{Obj}$~\cite{Fisher1993ALC} and its extensions (\cite{Ciaffaglione2021}) is perhaps the family of models that is closest to ours in spirit as they too deal with delegation-based inheritance. However, $\varphi$-calculus presents a somewhat minimal and pure (immutable) version without relying on $\lambda$-calculus. As we have seen, translation between $\varphi$-calculus and $\lambda$-calculus is not straightforward, so having smaller terms can be important in formal reasoning.

Systems based on row types and row polymorphism \cite{Wand1987CompleteTI} were originally introduced to model inheritance. Row types combine structural typing for records and variants with parametric polymorphism, which simplifies type inference. Rows can be extended (by adding new entries to the existing row) and concatenated (by combining several rows), which can be challenging for adoption in different typing settings and require new approaches \cite{Chlipala2010UrSM}. The last one introduces the Rose language, based on row types and supporting record concatenation through its monoidal nature of row extension. Rose uses qualified types to bind records to the rows and to abstract them from each other and allow them to evolve independently. It would be interesting to see whether row types can be used effectively to type $\varphi$-calculus.

\section{Conclusion and future work}

In this paper, we have formalized $\varphi$-calculus, a calculus of objects with decoration as a primary mechanism of object extension. We have shown that even though our variant of $\varphi$-calculus is not based on $\lambda$-calculus, it possess the important properties, such as confluence (Church-Rosser property) and completeness of normal order evaluation.


Then we introduced an abstract machine for call-by-name evaluation of $\varphi$-terms. This machine can serve as reasoning tool for compilers and interpreters of $\varphi$-calculus and languages based on it, such as EO programming language.

We have also provided a sound translation from $\varphi$-calculus to $\lambda$-calculus with records. This translation emphasizes the differences between decoration and object extension using \textbf{with}-expression.
Finally, we discussed some syntactic extensions to the calculus, closing the gap between our presentation and that of Bugayenko \cite{bugayenko/online}.

We expect two main departures for the future work. First, we could add type system for the calculus, probably based on row types to facilitate type inference. We suspect that typed $\varphi$-calculus can be directly translated to $\lambda$-calculus with records and without concatenation operator, thus it would be possible to state equivalence (in a sense of having direct translation in both directions) between the two calculi. Second, we could extend the calculus with the ability to decorate or compose multiple objects, enabling simpler models for languages with multiple inheritance.

\begin{acks}                            
  This research has been generously funded by Huawei in the framework of Polystat project.
  We thank Yegor Bugayenko for taking his time to explain ideas behind EO, his version of $\varphi$-calculus, and especially his vision regarding decorator and parent object locators. We thank Bertrand Meyer for giving his feedback on the early version of the calculus and suggesting terminology for ``void'' and ``attached'' attributes. 
  We also thank Nickolay Shilov and Larisa Safina for their feedback on the paper and different versions of the calculus, and Georgii Gelvanovskii, for proofreading the paper.
  Finally, we thank Luigi Liquori and anonymous reviewers of PLDI 2022, FTfJP 2022 and TOPLAS for thorough reading and helpful suggestions on earlier versions of this paper.
\end{acks}

\bibliography{sample-base}

\clearpage
\appendix

\section{Complete proofs}

\subsection{Confluence}

\begin{proposition}[Reflexivity of parallel reduction]
    Let $t$ be a $\varphi$-term. Then $t \Rrightarrow t$.
\end{proposition}
\begin{proof}
    We prove this by induction on the structure of $t$:
    \begin{enumerate}
        \item if $t = u.c$ then by the inductive assumption $u \Rrightarrow u$ and by rule $\text{cong}_{\text{DOT}}^{\Rrightarrow}$ we have $u.c \Rrightarrow u.c$, i.e. $t \Rrightarrow t$;
        \item if $t = t_1(c \mapsto t_2)$ then by the inductive assumption $t_1 \Rrightarrow t_1$ and $t_2 \Rrightarrow t_2$; but then by $\text{cong}_{\text{APP}}^{\Rrightarrow}$ we have $t_1(c \mapsto t_2) \Rrightarrow t_1(c \mapsto t_2)$;
        \item if $t = \mkObject{a_1 \mapsto \varnothing, \ldots, a_k \mapsto \varnothing, b_1 \mapsto t_1, \ldots, b_n \mapsto t_n}$ then by inductive assumption we have $t_i \Rrightarrow t_i$ for each $i \in \{1, \ldots, n\}$ and by $\text{cong}_{\text{OBJ}}^{\Rrightarrow}$ we have $t \Rrightarrow t$;
        \item finally, if $t = \rho^n$ then $t \Rrightarrow t$ by $\text{cong}_{\rho}^{\Rrightarrow}$.
    \end{enumerate}
\end{proof}

\begin{definition}
    Relation $\redtomany$ is given by
    \begin{prooftree}
      \AxiomC{}
      \UnaryInfC{$t \redtomany t$}
      \DisplayProof\quad\quad
      \AxiomC{$t \redto t'$}
      \AxiomC{$t' \redtomany t''$}
      \BinaryInfC{$t \redtomany t''$}
    \end{prooftree}
\end{definition}

\begin{lemma}[Transitivity of $\redtomany$]
    \label{lemma:transitivity}
    For any $\varphi$-terms $t, t', t''$, if $t \redtomany t'$ and $t' \redtomany t''$, then  $t \redtomany t''$.
\end{lemma}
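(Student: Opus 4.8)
Transitivity of $\redtomany$: for any $\varphi$-terms $t, t', t''$, if $t \redtomany t'$ and $t' \redtomany t''$, then $t \redtomany t''$.

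The plan is to proceed by induction on the derivation of $t \redtomany t'$, using the inductive definition of $\redtomany$ just given (reflexivity axiom plus the one-step-then-rest rule). This is the standard way to show transitivity of a reflexive-transitive closure defined in this ``cons-style'' manner.

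First I would consider the base case, where $t \redtomany t'$ was derived by the axiom with $t \equiv t'$. Then the hypothesis $t' \redtomany t''$ is literally $t \redtomany t''$, so there is nothing to prove. Next, the inductive case: $t \redtomany t'$ was derived from $t \redto s$ and $s \redtomany t'$ for some intermediate term $s$. By the induction hypothesis applied to the subderivation $s \redtomany t'$ together with $t' \redtomany t''$, we obtain $s \redtomany t''$. Now applying the second rule of the definition to $t \redto s$ and $s \redtomany t''$ yields $t \redtomany t''$, as required.

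I do not anticipate any real obstacle here; the only subtlety is choosing the correct quantity to induct on (the first derivation, not the second, since the definition is right-recursive), and being careful that the induction hypothesis is stated with $t' \redtomany t''$ held fixed and ranging over the first argument. The proof is entirely routine once the induction is set up, which is presumably why the paper states it as a short lemma. A concise write-up would be:

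\begin{proof}
By induction on the derivation of $t \redtomany t'$. If it is the axiom, then $t \equiv t'$ and the conclusion is exactly the hypothesis $t' \redtomany t''$. Otherwise the derivation ends with the second rule, so $t \redto s$ and $s \redtomany t'$ for some $s$. By the induction hypothesis, $s \redtomany t''$, and applying the second rule to $t \redto s$ and $s \redtomany t''$ gives $t \redtomany t''$.
\end{proof}
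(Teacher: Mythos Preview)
Your proof is correct and is the standard argument. The paper actually states this lemma without proof (it is left implicit as a routine fact about reflexive--transitive closures defined in cons style), so there is nothing to compare against; your induction on the derivation of $t \redtomany t'$ is exactly the expected proof.
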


\begin{lemma}[Congruence reductions for $\redtomany$]
    For any $\varphi$-terms $t, t'$, $t \redtomany t'$ implies
    \begin{enumerate}
        \item $\mkObject{\ldots, b \mapsto t, \ldots} \redtomany \mkObject{\ldots, b \mapsto t', \ldots}$,
        \item $t.c \redtomany t'.c$,
        \item $t(c\mapsto u) \redtomany t'(c\mapsto u) $,
        \item $s(c\mapsto t)  \redtomany s(c\mapsto t') $.
    \end{enumerate}
\end{lemma}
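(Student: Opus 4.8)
The plan is to prove each of the four congruence statements by induction on the derivation of $t \redtomany t'$, using the single-step congruence rules already available for $\redto$ (namely $\text{cong}_{\text{OBJ}}$, $\text{cong}_{\text{DOT}}$, $\text{cong}_{\text{APP}^\text{L}}$, $\text{cong}_{\text{APP}^\text{R}}$ from Figure~\ref{fig:core-phi-calculus}) together with Lemma~\ref{lemma:transitivity} (transitivity of $\redtomany$). Since all four parts have the same shape, I would prove a general statement: if $C[\cdot]$ is any of the four one-hole contexts (plugging into $b \mapsto \cdot$ inside an object, into $\cdot\,.c$, into $\cdot(c \mapsto u)$, or into $s(c \mapsto \cdot)$), and if the single-step rule $t \redto t'$ implies $C[t] \redto C[t']$, then $t \redtomany t'$ implies $C[t] \redtomany C[t']$.

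First I would recall the inductive definition of $\redtomany$: it is generated by reflexivity ($t \redtomany t$) and by the rule that from $t \redto t'$ and $t' \redtomany t''$ we get $t \redtomany t''$. Then the induction on a derivation of $t \redtomany t'$ has two cases. In the base case $t \equiv t'$, so $C[t] \equiv C[t']$ and we conclude by reflexivity of $\redtomany$. In the inductive case we have $t \redto t_1$ and $t_1 \redtomany t'$; the appropriate single-step congruence rule gives $C[t] \redto C[t_1]$, the induction hypothesis gives $C[t_1] \redtomany C[t']$, and one more application of the inductive rule for $\redtomany$ (or equivalently Lemma~\ref{lemma:transitivity}) yields $C[t] \redtomany C[t']$. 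Instantiating $C[\cdot]$ with each of the four contexts and citing the matching congruence rule ($\text{cong}_{\text{OBJ}}$ for part~1, $\text{cong}_{\text{DOT}}$ for part~2, $\text{cong}_{\text{APP}^\text{L}}$ for part~3, $\text{cong}_{\text{APP}^\text{R}}$ for part~4) gives all four statements.

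I do not expect any real obstacle here: the only thing to be careful about is whether $\redtomany$ is defined with the recursive step on the left (as $t \redto t' \redtomany t''$) or on the right; the excerpt's appendix definition puts it on the left, so the induction threads cleanly and no separate "append a step at the end" lemma is needed beyond transitivity. If one preferred, parts~1--4 could also be derived in one line each from the general fact that any context closure of $\redto$ embeds into $\redtomany$, but spelling out the two-case induction is the cleanest self-contained argument. The routine verification that each single $\redto$-step lifts through the context is exactly the content of the four $\text{cong}$ rules, so nothing new must be checked there.
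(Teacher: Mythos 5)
Your proposal is correct and follows essentially the same route as the paper's proof: an induction on the (left-recursive) definition of $\redtomany$, lifting the prepended single step through each context via the matching $\text{cong}$ rule and closing with the induction hypothesis. The only difference is presentational --- you factor the four cases through a generic one-hole context, while the paper spells them out separately.
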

\begin{proof}
     Proof by induction on the definition of $\redtomany$. 
     
     Assume as an induction hypothesis that $t \redtomany t'$ implies congruent reductions for $\redtomany$, stated above. Let $t'' \redto t$, so that $t'' \redtomany t$. Then 
        \begin{enumerate}
            \item $\mkObject{\ldots, b \mapsto t'', \ldots} \redto \mkObject{\ldots, b \mapsto t, \ldots}$ by $\text{cong}_{\text{OBJ}}$, so $\mkObject{\ldots, b \mapsto t'', \ldots} \redtomany \mkObject{\ldots, b \mapsto t', \ldots}$
            \item $t''.c \redto t.c$ by $\text{cong}_{\text{DOT}}$, so $t''.c \redtomany t'.c$
            \item $t''(c\mapsto u) \redto t(c\mapsto u)$ by cong$_{\text{APP}^\text{L}}$, so $t''(c\mapsto u) \redtomany t'(c\mapsto u)$
            \item $s.(c\mapsto t)  \redto s(c\mapsto t') $ by cong$_{\text{APP}^\text{R}}$, so $s(c\mapsto t'') \redtomany s(c\mapsto t')$
        \end{enumerate}
\end{proof}

\begin{proposition}[Equivalence of $\Rrightarrow$ and $\rightsquigarrow$]
    Parallel reduction ($\Rrightarrow$) is equivalent to regular reduction ($\rightsquigarrow$):
    \begin{enumerate}
        \item $t \rightsquigarrow t'$ implies $t \Rrightarrow t'$
        \item $t \redtomany t'$ implies $t \parredtomany t'$
        \item $t \Rrightarrow t'$ implies $t \redtomany t'$
        \item $t \parredtomany t'$ implies $t \redtomany t'$
    \end{enumerate}
\end{proposition}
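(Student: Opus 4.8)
The plan is to prove the four statements essentially independently, relying on the structural definitions of the reductions and on the earlier congruence lemmas. For (1), I would proceed by induction on the derivation of $t \rightsquigarrow t'$. For each congruence rule (cong$_{\text{OBJ}}$, cong$_{\text{DOT}}$, cong$_{\text{APP}^\text{L}}$, cong$_{\text{APP}^\text{R}}$) I apply the inductive hypothesis to the reducing subterm, combine it with reflexivity of $\Rrightarrow$ (the preceding proposition) for the non-reducing subterms, and invoke the matching parallel-reduction congruence rule (cong$_{\text{OBJ}}^{\Rrightarrow}$, cong$_{\text{DOT}}^{\Rrightarrow}$, cong$_{\text{APP}}^{\Rrightarrow}$). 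For each base reduction rule (DOT$_c$, DOT$_c^\varphi$, APP$_c$) I use reflexivity of $\Rrightarrow$ on the immediate subterm to satisfy the premise $t \Rrightarrow t'$ (with $t' \equiv t$) of the corresponding parallel rule (DOT$_c^{\Rrightarrow}$, DOT$_c^{\varphi\Rrightarrow}$, APP$_c^{\Rrightarrow}$), whose conclusion then produces exactly $t'$.

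For (2), I would induct on the definition of $\redtomany$ (as recalled in the appendix definition): the reflexive case $t \redtomany t$ gives $t \parredtomany t$ immediately, and in the step case $t \redto t' \redtomany t''$, part (1) yields $t \Rrightarrow t'$, hence $t \parredtomany t'$, and the inductive hypothesis gives $t' \parredtomany t''$; transitivity of $\parredtomany$ closes it. For (3) I would induct on the derivation of $t \Rrightarrow t'$: congruence rules reduce to combining the inductive hypotheses via the congruence lemmas for $\redtomany$ (stated just above in the excerpt) and transitivity of $\redtomany$; the base rules require a short explicit reduction sequence. For instance for DOT$_c^{\Rrightarrow}$ with $t \Rrightarrow t''$ where $t'' \equiv \mkObject{\ldots, c \mapsto t_c, \ldots}$, the inductive hypothesis gives $t \redtomany t''$, so $t.c \redtomany t''.c$ by the congruence lemma, and then one application of DOT$_c$ gives $t''.c \redto t_c[\rho^0 \mapsto t'']$; chaining with transitivity yields $t.c \redtomany t_c[\rho^0 \mapsto t'']$, which is the target. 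The APP$_c^{\Rrightarrow}$ and DOT$_c^{\varphi\Rrightarrow}$ cases are analogous, using cong$_{\text{APP}}$/cong$_{\text{DOT}}$ congruence steps followed by a single base reduction. Finally (4) is a routine induction on the length of the $\parredtomany$ sequence, using (3) on each step together with transitivity of $\redtomany$.

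The only mildly delicate point is bookkeeping in part (3) for the base rules: one must check that the right-hand side produced by a single step of $\redto$ after the congruence reductions is syntactically identical to the right-hand side dictated by the parallel rule — in particular, matching the incremented argument $\inct{}{u}$ in APP$_c$ against $\inct{}{u'}$ in APP$_c^{\Rrightarrow}$, which works because the inductive hypothesis already supplies $u \redtomany u'$ and hence $t \redtomany \mkObject{\ldots, c\mapsto \varnothing,\ldots}$ reduces the head while the congruence lemma handles $u \redtomany u'$ inside the application before the APP$_c$ step fires. None of the four parts presents a genuine obstacle; the whole proof is "straightforward by structural induction" as the authors state, and the work is purely in organizing the case analysis and invoking the already-established reflexivity, transitivity, and congruence lemmas.
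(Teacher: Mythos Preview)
Your proposal is correct and follows essentially the same approach as the paper's proof: both argue by induction on the reduction derivation, invoke reflexivity of $\Rrightarrow$ for the base rules in (1), use the congruence lemmas for $\redtomany$ together with a single base step for the non-congruence cases in (3), and handle (2) and (4) by routine induction on the reflexive-transitive closure using (1) and (3) respectively. The bookkeeping you flag for APP$_c^{\Rrightarrow}$ is exactly what the paper does---reduce both $t$ and $u$ via congruence, then fire APP$_c$ once.
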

\begin{proof}
    Since parallel reduction does ``zero or more'' reductions in a term, it is easy to see that regular reduction implies parallel reduction. On the other hand, a single step of parallel reduction implies several consecutive steps of regular reduction. This is a bit harder to prove, but is still rather straightforward.

    \begin{enumerate}
        \item If $t \rightsquigarrow t'$ by DOT$_{c}$, DOT$_{c}^\varphi$, or APP$_{c}$, then by reflexivity and corresponding rules of parallel reduction (DOT$_{c}^{\Rrightarrow}$, DOT$_{c}^{\varphi\Rrightarrow}$, APP$_{c}^{\Rrightarrow}$), $t \Rrightarrow t'$.
        To prove this implication for congruence reductions, assume as the induction hypothesis that $t \rightsquigarrow t'$ implies $t \Rrightarrow t'$. Then 
        \begin{enumerate}
            \item $\mkObject{\ldots, c \mapsto t, \ldots} \rightsquigarrow \mkObject{\ldots, c \mapsto t', \ldots}$ implies $\mkObject{\ldots, c \mapsto t, \ldots} \Rrightarrow \mkObject{\ldots, c \mapsto t', \ldots}$ by induction hypothesis, reflexivity of parallel reduction and $\text{cong}_{\text{OBJ}}^{\Rrightarrow}$.
            \item $t.c \rightsquigarrow t'.c$ implies $t.c \Rrightarrow t'.c$ by induction hypothesis and DOT$_{c}^{\Rrightarrow}$.
            \item $t(c \mapsto u) \rightsquigarrow t'(c \mapsto u)$ implies $t(c \mapsto u) \Rrightarrow t'(c \mapsto u)$ and $t_1(c \mapsto t) \rightsquigarrow t_1(c \mapsto t')$ implies  $t_1(c \mapsto t) \Rrightarrow t_1(c \mapsto t')$ by induction hypothesis, reflexivity of parallel reduction and APP$_{c}^{\Rrightarrow}$.
        \end{enumerate}
        \item $t \parredtomany t$ holds due to reflexivity; assume as an induction hypothesis that $t \redtomany t'$ implies $t \parredtomany t'$.
        As $t \redtomany t''$, these exists $t'$, such that $t \redto t'$ and $t' \redtomany t''$. By (1) of this proposition,  $t \Rrightarrow t'$, which, combined with the induction hypothesis, results in $t \parredtomany t''$.
        \item 
        Assume $t \Rrightarrow t'$ implies $t \redtomany t'$ as an induction hypothesis.
        \begin{enumerate}
        \item cong$_\text{OBJ}^{\Rrightarrow}$.
        If $t \equiv \mkObject{a_1 \mapsto \varnothing, \ldots, a_k \mapsto \varnothing, b_1 \mapsto t_1, \ldots, b_n \mapsto t_n}$ and $t' \equiv \mkObject{a_1 \mapsto \varnothing, \ldots, a_k \mapsto \varnothing, b_1 \mapsto t_1', \ldots, b_n \mapsto t_n'}$, then $t_i \Rrightarrow t_i'$ for $i \in \{1, \ldots, n\}$, and, by induction hypothesis, $t_i \redtomany t_i'$.
        
        Let $t^{\_i} \equiv \mkObject{a_1 \mapsto \varnothing, \ldots, a_k \mapsto \varnothing, b_1 \mapsto t_1', \ldots, b_i \mapsto t_i', b_{i+1} \mapsto t_{i+1}, \ldots b_n \mapsto t_n}$ for $i \in \{0, \ldots, n\}$.
        Observe that $t^{\_0} \equiv t$, $t^{\_n} \equiv t'$, and, by lemma for congruence reductions for ($\redtomany$), $t^{\_i-1} \redtomany t^{\_i}$. Finally, by lemma about transitivity, $t \redtomany t'$.
        
        \item cong$_\rho^{\Rrightarrow}$.
        $\rho^n \redtomany \rho^n$ holds due to reflexivity of ($\redtomany$).
        
        \item cong$_\text{DOT}^{\Rrightarrow}$.
        If $t.c \Rrightarrow t'.c$, then $t\Rrightarrow t'$, then $t\redtomany t'$ by induction hypothesis and $t.c\redtomany t'.c$ by lemma 2.
        
        \item cong$_\text{APP}^{\Rrightarrow}$.
        If $t(c\mapsto u) \Rrightarrow t'(c\mapsto u')$, then $t\Rrightarrow t'$ and $u\Rrightarrow u'$. By induction hypothesis, $t\redtomany t'$ and $u\redtomany u'$. By lemma 2, $t(c\mapsto u) \redtomany t'(c\mapsto u)$ and $t'(c\mapsto u) \redtomany t'(c\mapsto u')$, which can be combined by lemma 1 to yield $t(c\mapsto u) \redtomany t'(c\mapsto u')$.
        
        \item DOT$_{c}^{\Rrightarrow}$.
        Let reduction $t.c \Rrightarrow t_c\left[ \xi \mapsto t' \right]$ happen because $t \Rrightarrow t'$ and $t' \equiv \mkObject{\ldots, c \mapsto t_c, \ldots}$. By inductive hypothesis, $t \redtomany t'$; by lemma 2,  $t.c \redtomany t'.c$. By rule DOT$_{c}$ of regular reduction, $t'.c \redto t_c\left[ \xi \mapsto t' \right]$ (and hence $t'.c \redtomany t_c\left[ \xi \mapsto t' \right]$). By lemma about transitivity, $t.c \redtomany t_c\left[ \xi \mapsto t' \right]$.

        \item Proof for other rules (DOT$_{c}^{\varphi\Rrightarrow}$, APP$_{c}^{\Rrightarrow}$) is analogous to the one for DOT$_{c}^{\Rrightarrow}$: by induction hypothesis, establish $t \redtomany t'$ (and $u \redtomany u'$ in case of APP), apply congruence for rt-closure of regular reduction, use respective rule of regular reduction, and combine results with the lemma about transitivity.
        \end{enumerate}
        \item $t \redtomany t$ holds by definition. Assume as an induction hypothesis that $t \parredtomany t'$ implies $t \redtomany t'$.
        Let $t \parredtomany t''$ hold because $t \Rrightarrow t'$ and $t' \parredtomany t''$. By (3) of this proposition,  $t \redtomany t'$. By the induction hypothesis, $t' \redtomany t''$, and with \ref{lemma:transitivity}, $t \redtomany t''$.
    \end{enumerate}
\end{proof}

\begin{proposition}[Proposition~\ref{prop:reduction-to-cd}]
    \label{appendix:prop:reduction-to-cd}
    Let $t, t'$ be $\varphi$-terms and $t \Rrightarrow t'$. Then $t'~\Rrightarrow~t^{+}$.
\end{proposition}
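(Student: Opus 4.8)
The plan is to prove Proposition~\ref{prop:reduction-to-cd} by structural induction on the derivation of $t \Rrightarrow t'$, following Takahashi's ``triangle'' argument adapted to the four $\varphi$-term constructors. For each parallel reduction rule that can produce $t'$ from $t$, I want to show that $t' \Rrightarrow t^{+}$, where $t^{+}$ is the complete development. The induction hypothesis gives, for each immediate subterm $s$ with $s \Rrightarrow s'$, that $s' \Rrightarrow s^{+}$. The key auxiliary facts I will lean on are the Substitution Lemma (if $s \Rrightarrow s'$ and $u \Rrightarrow u'$ then $s[\rho^i \mapsto u] \Rrightarrow s'[\rho^i \mapsto u']$) and reflexivity of $\Rrightarrow$; I also need that complete development commutes with locator increment, i.e. $(\inct{}{u})^{+} \equiv \inct{}{u^{+}}$, which is a straightforward induction and I would state it as a small lemma first.

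First I would dispatch the easy congruence cases. For $\rho^n \Rrightarrow \rho^n$, we have $(\rho^n)^{+} \equiv \rho^n$ and we are done by $\text{cong}_\rho^{\Rrightarrow}$. For $\mkObject{a_i \mapsto \varnothing, b_j \mapsto t_j} \Rrightarrow \mkObject{a_i \mapsto \varnothing, b_j \mapsto t_j'}$, apply the IH componentwise to get $t_j' \Rrightarrow t_j^{+}$ and reassemble with $\text{cong}_{\text{OBJ}}^{\Rrightarrow}$, matching the object clause of the definition of $(\cdot)^{+}$. The subtlety only appears in the DOT and APP cases, where the last rule applied to form $t'$ might be a congruence rule even though $t^{+}$ ``fires'' the redex, or vice versa, so a case analysis on the shape of $t^{+}$ (and on which rule formed $t'$) is unavoidable.

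For the attribute-access case $t \equiv s.c$: here $s \Rrightarrow s'$ with (IH) $s' \Rrightarrow s^{+}$, and $t'$ is either $s'.c$ (via $\text{cong}_{\text{DOT}}^{\Rrightarrow}$), or $t_c[\rho^0 \mapsto s']$ when $s' \equiv \mkObject{\ldots, c \mapsto t_c, \ldots}$ (via $\text{DOT}_c^{\Rrightarrow}$), or $s'.\varphi.c$ when $c \notin \attr(s')$, $\varphi \in \attr(s')$ (via $\text{DOT}_c^{\varphi\Rrightarrow}$). In each subcase I must match $t'$ against whichever branch of $(s.c)^{+}$ is selected by the shape of $s^{+}$. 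The point is that $s' \Rrightarrow s^{+}$ constrains the shape of $s^{+}$: if $s'$ is already an object with attribute $c$, then so is $s^{+}$ (the object's attribute set is preserved under $\Rrightarrow$), so $(s.c)^{+} = s_c^{+}[\rho^0 \mapsto s^{+}]$ where $s^{+} \equiv \mkObject{\ldots, c \mapsto s_c^{+}, \ldots}$ — wait, more precisely the relationship between $t_c$ (in $s'$) and the corresponding body in $s^{+}$ is again $\Rrightarrow$, so the Substitution Lemma closes it. If instead $s'$ is not yet an object but $s^{+}$ is, then $t' \equiv s'.c \Rrightarrow s^{+}.c \Rrightarrow (\text{fire the redex})$ — but parallel reduction is one step, so I need $s'.c \Rrightarrow t_c^{+}[\rho^0 \mapsto s^{+}]$ directly via $\text{DOT}_c^{\Rrightarrow}$ using $s' \Rrightarrow s^{+}$. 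The APP case $t \equiv s(c \mapsto u)$ is analogous: either $t' \equiv s'(c \mapsto u')$ or, when $s'$ is an object with $c$ void, $t' \equiv \mkObject{c \mapsto \inct{}{u'}, \ldots}$; I then match against $(s(c\mapsto u))^{+}$, using IH $u' \Rrightarrow u^{+}$, the increment-commutes-with-$(\cdot)^{+}$ lemma, and $\text{APP}_c^{\Rrightarrow}$ or $\text{cong}_{\text{APP}}^{\Rrightarrow}$ as appropriate.

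\textbf{Main obstacle.} The hard part is the bookkeeping in the DOT cases where the redex is ``newly exposed'' by the reduction inside $s$ — i.e.\ $s$ is not an object but $s'$ or $s^{+}$ is — since there the one-step nature of $\Rrightarrow$ forces me to fire the $\text{DOT}_c^{\Rrightarrow}$ or $\text{DOT}_c^{\varphi\Rrightarrow}$ rule directly with the premise $s' \Rrightarrow s^{+}$ rather than composing two steps, and I must check the attribute-set side conditions ($c \notin \attr(s^{+})$, $\varphi \in \attr(s^{+})$, or $c \in \attr(s^{+})$) transfer correctly from $s'$ to $s^{+}$ along $s' \Rrightarrow s^{+}$ — which holds because $\Rrightarrow$ preserves the outermost object structure and its attribute labels. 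Getting the case split exhaustive and aligned with the three-way case split in the definition of $(t.a)^{+}$, together with the parallel increment/substitution lemmas, is the whole content of the proof.
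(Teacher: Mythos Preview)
Your proposal is correct and follows essentially the same route as the paper: induction on the derivation of $t \Rrightarrow t'$, handling $\text{cong}_{\text{DOT}}^{\Rrightarrow}$ and $\text{cong}_{\text{APP}}^{\Rrightarrow}$ by a case split on the shape of $s^{+}$, and closing $\text{DOT}_c^{\Rrightarrow}$ via the Substitution Lemma together with the fact that an object term can only parallel-reduce via $\text{cong}_{\text{OBJ}}^{\Rrightarrow}$ (so attribute sets are preserved along $s' \Rrightarrow s^{+}$). One small correction: in the $\text{APP}_c^{\Rrightarrow}$ case the auxiliary fact you actually need is not $(\inct{}{u})^{+} \equiv \inct{}{u^{+}}$ but that increment preserves parallel reduction, i.e.\ $v \Rrightarrow v'$ implies $\inct{}{v} \Rrightarrow \inct{}{v'}$, so that from the IH $u' \Rrightarrow u^{+}$ you obtain $\inct{}{u'} \Rrightarrow \inct{}{u^{+}}$ and can close by $\text{cong}_{\text{OBJ}}^{\Rrightarrow}$; the paper leaves this implicit as well.
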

\begin{proof}
    Assume as an induction hypothesis that if $t \Rrightarrow t'$ then $t' \Rrightarrow t^{+}$.
    \begin{enumerate}
        \item cong$_\text{OBJ}^{\Rrightarrow}$:
        If $\mkObject{a_1 \mapsto \varnothing, \ldots, a_k \mapsto \varnothing, b_1 \mapsto t_1, \ldots, b_n \mapsto t_n} \Rrightarrow \mkObject{a_1 \mapsto \varnothing, \ldots, a_k \mapsto \varnothing, b_1 \mapsto t_1', \ldots, b_n \mapsto t_n'}$, then $t_i\Rrightarrow t_i'$ for all $i\in \{1,\ldots,n\}$. By induction hypothesis, $t_i\Rrightarrow t_i^+$, hence $\mkObject{a_1 \mapsto \varnothing, \ldots, a_k \mapsto \varnothing, b_1 \mapsto t_1', \ldots, b_n \mapsto t_n'} \Rrightarrow \mkObject{a_1 \mapsto \varnothing, \ldots, a_k \mapsto \varnothing, b_1 \mapsto t_1^+, \ldots, b_n \mapsto t_n^+}$ by cong$_\text{OBJ}^{\Rrightarrow}$.
        \item cong$_\rho^{\Rrightarrow}$: $t\equiv\rho^i \Rrightarrow \rho^i \equiv t' \equiv \rho^i \Rrightarrow \rho^i \equiv t^+$.
        \item cong$_\text{DOT}^{\Rrightarrow}$:
        If $t.c \Rrightarrow t'.c$, then $t\Rrightarrow t'$ and by induction hypothesis, $t'\Rrightarrow t^+$. 
        \begin{align*}t'.c \Rrightarrow
        \begin{cases}
            t_c[\rho^0 \mapsto t^{+}], & \text{by DOT$_c^{\Rrightarrow}$, if $t^{+} \equiv \mkObject{\ldots, c \mapsto t_c, \ldots}$} \\
            t^{+}.\varphi.c, & \text{by DOT$_c^{\varphi\Rrightarrow}$, if $c \notin \attr(t^{+})$ and $\varphi \in \attr(t^{+})$} \\
            t^{+}.c & \text{by cong$_\text{DOT}^{\Rrightarrow}$, otherwise}
        \end{cases} 
        \end{align*}
        Hence, $t'.c\Rrightarrow (t.c)^+$.
        \item cong$_\text{APP}^{\Rrightarrow}$:
        If $t(c \mapsto u)\Rrightarrow t'(c \mapsto u')$, then $t\Rrightarrow t'$ and $u\Rrightarrow u'$. By induction hypothesis, $t'\Rrightarrow t^+$ and $u'\Rrightarrow u^+$. 
        \begin{align*}t'(c\mapsto u') \Rrightarrow
        \begin{cases}
            \mkObject{\ldots, c \mapsto \inct{}{u^+}, \ldots}, & \text{by APP$_c^{\Rrightarrow}$, if $t^{+} \equiv \mkObject{\ldots, a \mapsto \varnothing, \ldots}$} \\
            t^{+}(c \mapsto u^{+}) & \text{by cong$_\text{APP}^{\Rrightarrow}$, otherwise}
        \end{cases}
        \end{align*}
        Hence, $t'(c\mapsto u')\Rrightarrow (t(c\mapsto u))^+$.
        \item DOT$_c^{\Rrightarrow}$:
        If $t.c\Rrightarrow t_c[\xi \mapsto t']$, where $t\Rrightarrow t'\equiv \mkObject{\ldots, c\mapsto t_c, \ldots}$, then, by induction hypothesis, $t' \Rrightarrow t^+$, and as there is unique rule that allows reduction of an object (cong$_\text{OBJ}^{\Rrightarrow}$), $t^+\equiv \mkObject{\ldots, c\mapsto t_c', \ldots}$, with $t_c \Rrightarrow t_c'$.
        By substitution lemma, $t_c[\xi \mapsto t'] \Rrightarrow t_c'[\xi\mapsto t_c^+] \equiv (t.c)^+$.
        \item DOT$_c^{\varphi\Rrightarrow}$:
        Let $t.c\Rrightarrow t'.\varphi.c$, where $t\Rrightarrow t'\equiv \mkObject{\ldots}$, and $c\notin\attr(t')$ and $\varphi\in\attr(t')$. By induction hypothesis, $t' \Rrightarrow t^+$, and $t^+\equiv\mkObject{\ldots}$, s.t. as for $t'$, $c\notin\attr(t^+)$ and $\varphi\in\attr(t^+)$. Hence, $(t.c)^+ \equiv t^+.\varphi.c$. By cong$_\text{DOT}^{\Rrightarrow}$, $t'.\varphi \Rrightarrow t^+.\varphi$ and $t'.\varphi.c \Rrightarrow t^+.\varphi.c$.
        \item APP$_c^{\Rrightarrow}$:
        Let $t(c\mapsto u)\Rrightarrow \mkObject{a_1 \mapsto \varnothing, \ldots, a_k \mapsto \varnothing, c \mapsto \inct{}{u'}, b_1 \mapsto t_1, \ldots, b_n \mapsto t_n}$, where $t\Rrightarrow t'\equiv \mkObject{a_1 \mapsto \varnothing, \ldots, a_k \mapsto \varnothing, c \mapsto \varnothing, b_1 \mapsto t_1, \ldots, b_n \mapsto t_n}$ and $u\Rrightarrow u'$.
        By induction hypothesis, $u' \Rrightarrow u^+$ and $t' \Rrightarrow t^+\equiv \mkObject{a_1 \mapsto \varnothing, \ldots, a_k \mapsto \varnothing, c \mapsto \varnothing, b_1 \mapsto t_1', \ldots, b_n \mapsto t_n'}$. Observe that for all $i\in\{1,\ldots,n\}$, $t_i\Rrightarrow t_i'$. Note that it is not required that $t_i' \equiv t_i^+$.
        Complete development of $(t(c\mapsto u)) $ is $ \mkObject{a_1 \mapsto \varnothing, \ldots, a_k \mapsto \varnothing, c \mapsto \inct{}{u^+}, b_1 \mapsto t_1', \ldots, b_n \mapsto t_n'}$. By cong$_\text{OBJ}^{\Rrightarrow}$, $\mkObject{a_1 \mapsto \varnothing, \ldots, a_k \mapsto \varnothing, c \mapsto \inct{}{u'}, b_1 \mapsto t_1, \ldots, b_n \mapsto t_n} \Rrightarrow t(c\mapsto u)^+$
        
    \end{enumerate}
    \end{proof}

\begin{lemma}[Substitutions reordering]
    \label{lemma:swap-substitutions}
    For all $i, j \in \mathbb{N}$, $j \leq i$, $$t[\rho^j \mapsto u][\rho^i \mapsto v] \equiv t[\rho^{i+1} \mapsto \inct{}{v}][\rho^j \mapsto u[\rho^i \mapsto v]]$$
    This lemma encapsulates equivalence between multiple substitutions resulting from several DOT$_c$ reductions performed in a different order, respecting nesting relationship between objects.
\end{lemma}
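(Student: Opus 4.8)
This is the de~Bruijn analogue of the standard substitution (commutation) lemma of $\lambda$-calculus, so the plan is a structural induction on $t$, with all the real content living in the locator case and in the interaction of $\inct{}{\cdot}$ with substitution. The induction hypothesis should be stated with $i$, $j$, $u$, $v$ universally quantified, so that it can be reapplied to immediate subterms with shifted parameters.

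Before the induction I would isolate two routine facts about locator increment, each proved by its own easy induction. The first is a \emph{cancellation} fact: $\inct{}{w}$ contains no free reference to $\rho^0$, hence $\inct{}{w}[\rho^0 \mapsto s]$ performs no genuine replacement and, after the accompanying decrements, is again $w$; more generally, a term lifted enough times to have no free locator below level $i$ is left unchanged by $[\rho^j \mapsto s]$ for $j \le i$ (this last form is what the pivot case needs, and it is automatic for the substituends that arise from chains of DOT$_c$ reductions). The second is a \emph{commutation} fact: $\inct{n}{(w[\rho^m \mapsto v])} \equiv \inct{n}{w}[\rho^{m+1} \mapsto \inct{n}{v}]$ for $n \le m$. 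Both are exact $\varphi$-calculus counterparts of the familiar de~Bruijn ``lifting slides through substitution'' lemmas.

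The induction on $t$ then proceeds as follows. The congruence cases $t \equiv s.a$ and $t \equiv s_1(a \mapsto s_2)$ are immediate, since both sides of the equation merely propagate the two substitutions into the immediate subterms unchanged, and the induction hypothesis applies directly. For the object case $t \equiv \mkObject{a_i \mapsto \varnothing, b_j \mapsto s_j}$, both sides descend into each $s_j$ with every substitution index incremented by one and every substituend lifted by $\inct{}{\cdot}$; applying the induction hypothesis to $s_j$ with parameters $j{+}1 \le i{+}1$, $\inct{}{u}$, $\inct{}{v}$ leaves exactly one thing to check, namely $\inct{}{u}[\rho^{i+1} \mapsto \inct{}{v}] \equiv \inct{}{(u[\rho^i \mapsto v])}$, which is the commutation fact.

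The locator case $t \equiv \rho^n$ is the heart of the proof. One splits on the position of $n$ into the five regions $n < j$, $n = j$, $j < n \le i$, $n = i{+}1$, and $n > i{+}1$. In the outer regions both sides collapse by pure index arithmetic ($\rho^n$ in the first, $\rho^{n-2}$ in the last); for $n = j$ both sides yield $u[\rho^i \mapsto v]$, and for $j < n \le i$ both yield $\rho^{n-1}$. The only delicate subcase is the pivot $n = i{+}1$: the left side gives $\rho^{i+1}[\rho^j \mapsto u] = \rho^i$ and then $\rho^i[\rho^i \mapsto v] = v$, whereas the right side gives $\rho^{i+1}[\rho^{i+1} \mapsto \inct{}{v}] = \inct{}{v}$ and then $\inct{}{v}[\rho^j \mapsto u[\rho^i \mapsto v]]$, and reconciling the two is precisely the cancellation fact. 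I expect the index bookkeeping here — keeping the five regions synchronized across both sides, and stating the two $\inct{}{\cdot}$ lemmas in exactly the shifted forms the object case consumes — to be the main obstacle; everything else is mechanical once those auxiliaries are pinned down.
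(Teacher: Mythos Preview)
Your overall plan---structural induction on $t$, the five-way split on the locator index, and isolating the increment/substitution interaction lemmas---is exactly the paper's approach, and your commutation fact is the paper's ``Increment and substitution swap'' lemma verbatim.

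There is, however, a genuine gap at the pivot subcase $n = i+1$, and your ``cancellation fact'' does not close it. What that case requires is $\inct{}{v}[\rho^j \mapsto s] \equiv v$, but this holds only when $j = 0$. For $j \geq 1$ take $v = \rho^{j-1}$: then $\inct{}{v} = \rho^j$ and $\inct{}{v}[\rho^j \mapsto s] = s$, not $\rho^{j-1}$. Your ``more general'' cancellation is also misstated---substitution at level $j$ decrements every locator above $j$, so a lifted term is never literally left unchanged---and the hedge about substituends ``arising from chains of $\mathrm{DOT}_c$ reductions'' cannot save the argument, since the lemma is asserted for arbitrary $u$, $v$.

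In fact the lemma as stated is false for $j > 0$: with $j = i = 1$, $t = \rho^2$, $u = \rho^2$, $v = \rho^0$ the left side is $\rho^0$ while the right side is $\rho^1$. The paper's own proof makes the same unjustified leap at exactly this point (it asserts $\inct{}{v}[\rho^j \mapsto u[\rho^i \mapsto v]] \equiv v$ without restriction). The repair is to replace $\inct{}{v}$ by $\inct{j}{v}$ in the statement: then $\inct{j}{v}[\rho^j \mapsto s] \equiv v$ holds for every $j$ (the lift opens a gap precisely at level $j$, which the substitution then collapses), the object case still closes via the Increment swap lemma $\inct{}{(\inct{j}{v})} \equiv \inct{j+1}{(\inct{}{v})}$, and the two places where the paper invokes the lemma---both with $j = 0$, where $\inct{0}{} = \inct{}{}$---are unaffected.
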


\begin{proof}
    By induction on $t$:
    \begin{enumerate}
        \item $t \equiv \rho^k$
        \begin{align*}
            &\text{if $k<j$ then} 
            &\rho^k[\rho^j \mapsto u][\rho^i \mapsto v] \equiv \rho^k[\rho^i \mapsto v] \equiv \rho^k\\
            &{}
            &\rho^k[\rho^{i+1} \mapsto \inct{}{v}][\rho^j \mapsto u[\rho^i \mapsto v]] \equiv \rho^k[\rho^j \mapsto u[\rho^i \mapsto v]]\equiv \rho^k\\
            &\text{if $k\equiv j$ then} 
            &\rho^j[\rho^j \mapsto u][\rho^i \mapsto v] \equiv u[\rho^i \mapsto v]\\
            &{}
            &\rho^j[\rho^{i+1} \mapsto \inct{}{v}][\rho^j \mapsto u[\rho^i \mapsto v]] \equiv \rho^j[\rho^j \mapsto u[\rho^i \mapsto v]]\equiv u[\rho^i \mapsto v]\\
            &\text{if $j < k \leq i$ then} 
            &\rho^k[\rho^j \mapsto u][\rho^i \mapsto v] \equiv \rho^{k-1}[\rho^i \mapsto v] \equiv \rho^{k-1} \\
            &{}
            &\rho^k[\rho^{i+1} \mapsto \inct{}{v}][\rho^j \mapsto u[\rho^i \mapsto v]] \equiv \rho^k[\rho^j \mapsto u[\rho^i \mapsto v]]\equiv \rho^{k-1}\\
            &\text{if $k \equiv i+1$ then} 
            &\rho^{i+1}[\rho^j \mapsto u][\rho^i \mapsto v] \equiv \rho^i[\rho^i \mapsto v] \equiv v \\
            &{}
            &\rho^{i+1}[\rho^{i+1} \mapsto \inct{}{v}][\rho^j \mapsto u[\rho^i \mapsto v]] \equiv \inct{}{v}[\rho^j \mapsto u[\rho^i \mapsto v]]\equiv v\\
            &\text{if $k \geq i+2$ then} 
            &\rho^k[\rho^j \mapsto u][\rho^i \mapsto v] \equiv \rho^{k-1}[\rho^i \mapsto v] \equiv \rho^{k-2}\\
            &{}
            &\rho^k[\rho^{i+1} \mapsto \inct{}{v}][\rho^j \mapsto u[\rho^i \mapsto v]] \equiv \rho^{k-1}[\rho^j \mapsto u[\rho^i \mapsto v]]\equiv \rho^{k-2}
        \end{align*}
        \item $t \equiv s.a$
        \begin{multline*}
            s.a[\rho^j \mapsto u][\rho^i \mapsto v] \equiv 
            s[\rho^j \mapsto u][\rho^i \mapsto v].a \equiv \\
            s[\rho^{i+1} \mapsto \inct{}{v}][\rho^j \mapsto u[\rho^i \mapsto v]].a \equiv 
            s.a[\rho^{i+1} \mapsto \inct{}{v}][\rho^j \mapsto u[\rho^i \mapsto v]]
        \end{multline*}
        \item $t \equiv s_1(a\mapsto s_2)$
        \begin{multline*}
            s_1(a\mapsto s_2)[\rho^j \mapsto u][\rho^i \mapsto v] \equiv 
            s_1[\rho^j \mapsto u][\rho^i \mapsto v](a\mapsto s_2[\rho^j \mapsto u][\rho^i \mapsto v]) \equiv \\
            s_1[\rho^{i+1} \mapsto \inct{}{v}][\rho^j \mapsto u[\rho^i \mapsto v]](a\mapsto s_2[\rho^{i+1} \mapsto \inct{}{v}][\rho^j \mapsto u[\rho^i \mapsto v]])\equiv \\
            s_1(a\mapsto s_2)[\rho^{i+1} \mapsto \inct{}{v}][\rho^j \mapsto u[\rho^i \mapsto v]]
        \end{multline*}
        \item $t \equiv \mkObject{a_1 \mapsto \varnothing, \ldots, a_k \mapsto \varnothing, b_1 \mapsto t_1, \ldots, b_n \mapsto t_n}$
        \begin{align*}
        t[\rho^j \mapsto u]&[\rho^i \mapsto v] \\
        &\equiv \mkObject{\ldots, b_l \mapsto t_l[\rho^{j+1} \mapsto u][\rho^{i+1} \mapsto v] , \ldots}\\
        &\equiv \mkObject{\ldots, b_l \mapsto t_l[\rho^{i+2} \mapsto \inct{}{v}][\rho^{j+1} \mapsto u[\rho^{i+1} \mapsto v]], \ldots}\\
        &\equiv t[\rho^{i+1} \mapsto \inct{}{v}][\rho^j \mapsto u[\rho^i \mapsto v]]
        \end{align*}
    \end{enumerate}
\end{proof}

\begin{lemma}[Increment and substitution swap]
    For any $\varphi$-term $t$, for any $i, j \in \mathbb{N}$ such that $j\leq i$, $\inct{j}{t}[\rho^{i+1}\mapsto \inct{j}{u}] \equiv \inct{j}{t[\rho^i\mapsto u]}$.
\end{lemma}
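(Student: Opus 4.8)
The plan is to prove the identity by structural induction on $t$, following the same four clauses that appear in the definitions of $\inct{}{\cdot}$ and of locator substitution (see Figure~\ref{fig:core-phi-calculus}). This is a routine ``shift commutes with substitution'' lemma of the kind standard for de Bruijn representations, so the work is bookkeeping rather than insight; the one place that needs care is the object case.

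The base case $t \equiv \rho^m$ is the core computation, but it is purely mechanical: split on where $m$ sits relative to the indices, namely $m < j$, $j \le m < i$, $m = i$, and $m > i$. In each branch both sides evaluate to a single locator, except the branch $m = i$ where both sides give $\inct{j}{u}$, and the arithmetic matches in every branch. The hypothesis $j \le i$ is used precisely here: it excludes $i < j$, which would otherwise spoil the branch $m = i$ (the left side must produce $\rho^{i+1}$, which needs $j \le i$) and the branch $m > i$ (re-incrementing $\rho^{m-1}$ on the right must land on $\rho^m$, which needs $j \le m-1$, guaranteed by $j \le i < m$).

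The cases $t \equiv s.a$ and $t \equiv t_1(a \mapsto t_2)$ are immediate, since both $\inct{j}{\cdot}$ and the two substitutions distribute over these constructors; one simply applies the induction hypothesis to the immediate subterms with the unchanged indices $j \le i$ and reassembles.

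The object case $t \equiv \mkObject{a_1 \mapsto \varnothing, \ldots, b_1 \mapsto t_1, \ldots, b_n \mapsto t_n}$ is the only subtle one, because descending into an attached attribute $t_l$ raises every index by one. The left-hand side becomes $\mkObject{\ldots, b_l \mapsto \inct{j+1}{t_l}[\rho^{i+2} \mapsto \inct{}{(\inct{j}{u})}], \ldots}$ and the right-hand side becomes $\mkObject{\ldots, b_l \mapsto \inct{j+1}{(t_l[\rho^{i+1} \mapsto \inct{}{u}])}, \ldots}$. Applying the induction hypothesis to $t_l$ with the shifted indices $j+1 \le i+1$ and substituent $\inct{}{u}$ reduces the goal to the auxiliary commutation of increments $\inct{}{(\inct{j}{u})} \equiv \inct{j+1}{(\inct{}{u})}$, i.e. $\inct{0}{(\inct{j}{u})} \equiv \inct{j+1}{(\inct{0}{u})}$ for $0 \le j$. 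I would establish this helper by a short separate structural induction on $u$ (it is the standard commutation law for shift operations) and then conclude. Thus the main obstacle is not a hard argument but keeping the index offsets consistent across the three nested operations in the object clause; once the increment-commutation helper is available, the rest is routine.
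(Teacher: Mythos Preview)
Your proposal is correct and follows essentially the same route as the paper: structural induction on $t$, a four-way split on the locator index in the base case, direct propagation through $.a$ and application, and in the object clause the reduction to the auxiliary identity $\inct{}{(\inct{j}{u})} \equiv \inct{j+1}{(\inct{}{u})}$, which the paper isolates as a separate \emph{Increment swap} lemma proved by its own induction on $u$.
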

\begin{proof}
    By induction on $t$:
    \begin{enumerate}
        \item $t \equiv \rho^k$
        \begin{align*}
            &\text{if $k < j$ then} 
            &\inct{j}{\rho^k}[\rho^{i+1}\mapsto \inct{j}{u}] \equiv \rho^{k}[\rho^{i+1}\mapsto \inct{j}{u}] \equiv \rho^{k}\\
            &{}
            &\inct{j}{\rho^k[\rho^i\mapsto u]} \equiv\inct{j}{\rho^k}\equiv\rho^{k}\\
            &\text{if $j \leq k < i$ then} 
            &\inct{j}{\rho^k}[\rho^{i+1}\mapsto \inct{j}{u}] \equiv \rho^{k+1}[\rho^{i+1}\mapsto \inct{j}{u}] \equiv \rho^{k+1}\\
            &{}
            &\inct{j}{\rho^k[\rho^i\mapsto u]} \equiv\inct{j}{\rho^k}\equiv\rho^{k+1}\\
            &\text{if $k\equiv i$ then} 
            &\inct{j}{\rho^i}[\rho^{i+1}\mapsto \inct{j}{u}] \equiv \rho^{i+1}[\rho^{i+1}\mapsto \inct{j}{u}] \equiv \inct{j}{u}\\
            &{}
            &\inct{j}{\rho^i[\rho^i\mapsto u]} \equiv\inct{j}{u}\\
            &\text{if $k > i$ then} 
            &\inct{j}{\rho^k}[\rho^{i+1}\mapsto \inct{j}{u}] \equiv \rho^{k+1}[\rho^{i+1}\mapsto \inct{j}{u}] \equiv \rho^{k}\\
            &{}
            &\inct{j}{\rho^k[\rho^i\mapsto u]} \equiv\inct{j}{\rho^{k-1}}\equiv\rho^{k}\\
        \end{align*}
        \item $t \equiv s.a$
        $$\inct{j}{s.a}[\rho^{i+1}\mapsto \inct{j}{u}] \equiv \inct{j}{s}[\rho^{i+1}\mapsto \inct{j}{u}].a \equiv\inct{j}{s[\rho^i\mapsto u]}.a\equiv\inct{j}{s.a[\rho^i\mapsto u]}$$
        \item $t \equiv s_1(a\mapsto s_2)$
        \begin{align*}
            \inct{j}{ s_1(a\mapsto s_2)}[\rho^{i+1}\mapsto \inct{j}{u}] \equiv \inct{j}{s_1}[\rho^{i+1}\mapsto \inct{j}{u}](a\mapsto\inct{j}{s_2}[\rho^{i+1}\mapsto \inct{j}{u}])\\ \equiv\inct{j}{s_1[\rho^i\mapsto u]}(a\mapsto\inct{j}{s_2[\rho^i\mapsto u]}) \equiv\inct{j}{s_1(a\mapsto s_2)[\rho^i\mapsto u]}
        \end{align*}
        \item $t \equiv \mkObject{a_1 \mapsto \varnothing, \ldots, a_k \mapsto \varnothing, b_1 \mapsto t_1, \ldots, b_n \mapsto t_n}$. Proof by unfolding the definitions of substitution and increment, swapping  increments, applying induction hypothesis, and folding the definitions back:
        \begin{align*}
        \inct{j}{\mkObject{a_1 \mapsto \varnothing, \ldots, a_k \mapsto \varnothing, b_1 \mapsto t_1, \ldots, b_n \mapsto t_n}}[\rho^{i+1}\mapsto \inct{j}{u}] \equiv \\
        \mkObject{a_1 \mapsto \varnothing, \ldots, a_k \mapsto \varnothing, b_1 \mapsto \inct{j+1}{t_1}[\rho^{i+2}\mapsto \inct{}{\inct{j}{u}}], \ldots, b_n \mapsto \inct{j+1}{t_n}[\rho^{i+2}\mapsto \inct{}{\inct{j}{u}}]} \equiv \\
        \mkObject{a_1 \mapsto \varnothing, \ldots, a_k \mapsto \varnothing, b_1 \mapsto \inct{j+1}{t_1}[\rho^{i+2}\mapsto \inct{j+1}{\inct{}{u}}], \ldots, b_n \mapsto \inct{j+1}{t_n}[\rho^{i+2}\mapsto \inct{j+1}{\inct{}{u}}]}\equiv\\
        \mkObject{a_1 \mapsto \varnothing, \ldots, a_k \mapsto \varnothing, b_1 \mapsto \inct{j+1}{t_1[\rho^{i+1}\mapsto \inct{}{u}]}, \ldots, b_n \mapsto \inct{j+1}{t_n[\rho^{i+1}\mapsto \inct{}{u}]}}\equiv\\
        \inct{j}{\mkObject{a_1 \mapsto \varnothing, \ldots, a_k \mapsto \varnothing, b_1 \mapsto t_1, \ldots, b_n \mapsto t_n}[\rho^i\mapsto u]}
        \end{align*}
    \end{enumerate}
\end{proof}

\begin{lemma}[Increment swap]
    For any $\varphi$-term $t$, for any $i, j \in \mathbb{N}$ such that $i\leq j$, $\inct{i}{\inct{j}{t}} \equiv \inct{j+1}{\inct{i}{t}}$.
\end{lemma}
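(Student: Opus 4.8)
The plan is to prove the identity by structural induction on $t$, mirroring the style of the two immediately preceding lemmas on how $\inct{}{\cdot}$ interacts with substitution. All cases except the locator base case are routine: they follow by unfolding the definition of $\inct{n}{\cdot}$, applying the induction hypothesis to the immediate subterms, and folding the definitions back.

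For the base case $t \equiv \rho^m$ I would split into three subcases according to where $m$ sits relative to $i$ and $j$ (using $i \leq j$): namely $m < i$, $i \leq m < j$, and $m \geq j$. When $m < i$, both $\inct{i}{\inct{j}{\rho^m}}$ and $\inct{j+1}{\inct{i}{\rho^m}}$ leave $\rho^m$ unchanged. When $i \leq m < j$, the inner $\inct{j}{\cdot}$ on the left is a no-op while the inner $\inct{i}{\cdot}$ on the right raises the index to $m+1$; the outer $\inct{i}{\cdot}$ on the left then also raises it to $m+1$, and the outer $\inct{j+1}{\cdot}$ on the right is a no-op on $\rho^{m+1}$ since $m+1 < j+1$. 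When $m \geq j$, the left side raises $m$ to $m+1$ and then (since $i \leq j \leq m+1$) to $m+2$, while the right side raises $m$ to $m+1$ and then, since $j+1 \leq m+1$, to $m+2$. In every subcase the two sides coincide, which settles the base case.

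For $t \equiv s.a$ and $t \equiv s_1(a \mapsto s_2)$ I would push both increments through the constructor, invoke the induction hypothesis on $s$ (respectively on $s_1$ and $s_2$) with the unchanged indices $i \leq j$, and recombine. For the object case $t \equiv \mkObject{a_1 \mapsto \varnothing, \ldots, b_1 \mapsto t_1, \ldots, b_n \mapsto t_n}$, unfolding shows each attached subterm $t_l$ is incremented at levels $i+1$ and $j+1$ (the shift by one coming from entering the object); since $i+1 \leq j+1$, the induction hypothesis gives $\inct{i+1}{\inct{j+1}{t_l}} \equiv \inct{j+2}{\inct{i+1}{t_l}}$ for each $l$, and reassembling the object yields the claim.

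I do not anticipate any real obstacle: this is the standard ``shift commutes with shift'' identity for de Bruijn indices, and the only point demanding care is getting the boundary indices right in the locator case --- specifically, ensuring that the extra $+1$ on the outer shift of the right-hand side is exactly what compensates for having performed the inner shift $\inct{i}{\cdot}$ first. The explicit three-way split above handles this cleanly. This lemma is precisely what is needed to legitimise the rewriting $\inct{}{\inct{j}{u}} \equiv \inct{j+1}{\inct{}{u}}$ used in the object case of the preceding ``Increment and substitution swap'' lemma.
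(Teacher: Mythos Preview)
Your proposal is correct and follows essentially the same approach as the paper: structural induction on $t$, with the locator case split into the same three subcases ($m<i$, $i\leq m<j$, $m\geq j$) and the object case handled by shifting the increment levels to $i+1,j+1$ and invoking the induction hypothesis. The only difference is cosmetic; the paper leaves the $s.a$ and $s_1(a\mapsto s_2)$ cases implicit, whereas you spell them out.
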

\begin{proof}
    By induction on $t$:
    \begin{enumerate}
        \item $t \equiv \rho^k$
        \begin{align*}
            &\text{if $k < i$ then} 
            &\inct{i}{\inct{j}{\rho^k}} \equiv \inct{i}{\rho^k} \equiv \rho^k\\
            &{}
            & \inct{j+1}{\inct{i}{\rho^k}}\equiv \inct{j+1}{\rho^k} \equiv \rho^k\\
            &\text{if $i \leq k < j$ then} 
            &\inct{i}{\inct{j}{\rho^k}} \equiv \inct{i}{\rho^k} \equiv \rho^{k+1}\\
            &{}
            & \inct{j+1}{\inct{i}{\rho^k}}\equiv \inct{j+1}{\rho^{k+1}} \equiv \rho^{k+1}\\
            &\text{if $k\geq j$ then} 
            &\inct{i}{\inct{j}{\rho^k}} \equiv \inct{i}{\rho^{k+1}} \equiv \rho^{k+2}\\
            &{}
            & \inct{j+1}{\inct{i}{\rho^k}}\equiv \inct{j+1}{\rho^{k+1}} \equiv \rho^{k+2}\\
            \end{align*}
        \item $t \equiv s.a$
        \item $t \equiv s_1(a\mapsto s_2)$
        \item $t \equiv \mkObject{a_1 \mapsto \varnothing, \ldots, a_k \mapsto \varnothing, b_1 \mapsto t_1, \ldots, b_n \mapsto t_n}$
        \begin{align*}
        \inct{i}{\inct{j}{\mkObject{a_1 \mapsto \varnothing, \ldots, a_k \mapsto \varnothing, b_1 \mapsto t_1, \ldots, b_n \mapsto t_n}}} \equiv\\
        \mkObject{a_1 \mapsto \varnothing, \ldots, a_k \mapsto \varnothing, b_1 \mapsto \inct{i+1}{\inct{j+1}{t_1}}, \ldots, b_n \mapsto \inct{i+1}{\inct{j+1}{t_n}}} \equiv\\
        \mkObject{a_1 \mapsto \varnothing, \ldots, a_k \mapsto \varnothing, b_1 \mapsto \inct{j+2}{\inct{i+1}{t_1}}, \ldots, b_n \mapsto \inct{j+2}{\inct{i+1}{t_n}}} \equiv\\
        \inct{j+1}{\inct{i}{\mkObject{a_1 \mapsto \varnothing, \ldots, a_k \mapsto \varnothing, b_1 \mapsto t_1, \ldots, b_n \mapsto t_n}}}
        \end{align*}
    \end{enumerate}
\end{proof}

\begin{lemma}[Substitution lemma]
    Let $t, t', u, u'$ be $\varphi$-terms and $t \Rrightarrow t'$ and $u \Rrightarrow u'$.
    Then $t[\rho^i \mapsto u] \Rrightarrow t'[\rho^i \mapsto u']$.
\end{lemma}
\begin{proof}
    To prove by induction on $\Rrightarrow$, assume that if $t \Rrightarrow t'$ and $u \Rrightarrow u'$, then for all $i\in\mathbb{N}$, $t[\rho^i \mapsto u] \Rrightarrow t'[\rho^i \mapsto u']$.
    \begin{enumerate}
    \item
    cong$_{\text{OBJ}}^{\Rrightarrow}$:
    
    By cong$_{\text{OBJ}}^{\Rrightarrow}$ and induction hypothesis,
    \begin{multline*}
        \mkObject{a_1 \mapsto \varnothing, \ldots, a_k \mapsto \varnothing, b_1 \mapsto t_1, \ldots, b_n \mapsto t_n}[\rho^k \mapsto u] \\
        \equiv \mkObject{a_1 \mapsto \varnothing, \ldots, a_k \mapsto \varnothing, b_1 \mapsto t_1[\rho^{k+1} \mapsto u], \ldots, b_n \mapsto t_n[\rho^{k+1} \mapsto u]} \\
        \Rrightarrow \mkObject{a_1 \mapsto \varnothing, \ldots, a_k \mapsto \varnothing, b_1 \mapsto t_1'[\rho^{k+1} \mapsto u'], \ldots, b_n \mapsto t_n'[\rho^{k+1} \mapsto u']} \\
        \equiv\mkObject{a_1 \mapsto \varnothing, \ldots, a_k \mapsto \varnothing, b_1 \mapsto t_1', \ldots, b_n \mapsto t_n'}[\rho^k \mapsto u']
    \end{multline*}
    \item
    cong$_\rho^{\Rrightarrow}$:
    \begin{itemize}
        \item if $k>i$, $\rho^i[\rho^k \mapsto u] \equiv \rho^i \Rrightarrow \rho^i \equiv \rho^i[\rho^k \mapsto u']$, with reflexivity of ($\Rrightarrow$).
        \item if $k=i$, $\rho^i[\rho^k \mapsto u] \equiv u \Rrightarrow u' \equiv \rho^i[\rho^k \mapsto u']$, with the assumption of this lemma.
        \item if $k<i$, $\rho^i[\rho^k \mapsto u] \equiv \rho^{i-1} \Rrightarrow \rho^{i-1} \equiv \rho^i[\rho^k \mapsto u']$, with reflexivity of ($\Rrightarrow$).
    \end{itemize}
    \item
    cong$_\text{DOT}^{\Rrightarrow}$:
    
    By inductive hypothesis and cong$_\text{DOT}^{\Rrightarrow}$,
    \begin{align*}
        t.c [\rho^i \mapsto u] \equiv 
        t [\rho^i \mapsto u].c \Rrightarrow
        t' [\rho^i \mapsto u'].c \equiv
        & t'.c [\rho^i \mapsto u'] 
    \end{align*}
    
    \item
    cong$_\text{APP}^{\Rrightarrow}$:
    
    By inductive hypothesis and cong$_\text{APP}^{\Rrightarrow}$,
    \begin{align*}
        t(c\mapsto v) [\rho^i \mapsto u] \equiv 
        t [\rho^i \mapsto u] (c \mapsto v [\rho^i \mapsto u])\Rrightarrow
        t' [\rho^i \mapsto u'] (c \mapsto v' [\rho^i \mapsto u']) \equiv
        t'(c\mapsto v')  [\rho^i \mapsto u'] 
    \end{align*}
    
    \item
    DOT$_{c}^{\Rrightarrow}$:
    Let $t \equiv s.c$, $s \Rrightarrow s' \equiv \mkObject{\ldots, c \mapsto t_c, \ldots}$, $t' \equiv t_c[\xi \mapsto s']$.
    
    Since $s \Rrightarrow s'$ and  $u \Rrightarrow u'$, by induction hypothesis we have $$ s [\rho^i\mapsto u] \Rrightarrow  s'[\rho^i\mapsto u']$$
    
    Furthermore, since $s' \equiv \mkObject{\ldots, c \mapsto t_c, \ldots}$, $$s'[\rho^i\mapsto u'] \equiv \mkObject{\ldots, c \mapsto t_c[\rho^{i+1}\mapsto u'] , \ldots}$$
    and by rule DOT$_c^{\Rrightarrow}$ we have
    $$s [\rho^i\mapsto u].c \Rrightarrow t_c[\rho^{i+1}\mapsto u'] [\xi \mapsto s'[\rho^i\mapsto u']]$$
    By \ref{lemma:swap-substitutions}, we have
    $$t_c[\rho^{i+1}\mapsto u'] [\xi \mapsto s'[\rho^i\mapsto u']] \equiv t_c[\xi \mapsto s'][\rho^i\mapsto u']$$
    Thus
    \begin{multline*}
        t[\rho^i\mapsto u] \equiv s.c [\rho^i\mapsto u] \equiv s [\rho^i\mapsto u].c \Rrightarrow \\
        t_c[\rho^{i+1}\mapsto u'] [\xi \mapsto s'[\rho^i\mapsto u']] \equiv 
        t_c[\xi \mapsto s'][\rho^i\mapsto u'] \equiv t'[\rho^i\mapsto u']
    \end{multline*}
    \item
    DOT$_{c}^{\varphi\Rrightarrow}$:
    Let $t \equiv s.c$ and $t' \equiv s'.\varphi.c$ with $s\Rrightarrow s'$.
    
    By induction hypothesis, $s[\rho^i\mapsto u] \Rrightarrow s'[\rho^i\mapsto u']$. Substitution in object does not change its attributes set (domain): $c\notin\attr(s'[\rho^i\mapsto u'])$ and $\varphi \in \attr(s'[\rho^i\mapsto u'])$. So, rule DOT$_{c}^{\varphi\Rrightarrow}$ can be applied to get
    $$s[\rho^i\mapsto u] .c \Rrightarrow s'[\rho^i\mapsto u'].\varphi.c$$ which, with the definition of substitution, brings to the goal:
    $$s.c[\rho^i\mapsto u]  \Rrightarrow s'.\varphi.c[\rho^i\mapsto u']$$
    
    \item
    APP$_{c}^{\Rrightarrow}$:
    Let $t \equiv s(c\mapsto v)$, $v \Rrightarrow v'$, $s' \equiv \mkObject{a_1 \mapsto \varnothing, \ldots, a_k \mapsto \varnothing, c \mapsto \varnothing, b_1 \mapsto t_1, \ldots, b_n \mapsto t_n}$ and $t' \equiv \mkObject{a_1 \mapsto \varnothing, \ldots, a_k \mapsto \varnothing, c \mapsto \inct{}{v'}, b_1 \mapsto t_1, \ldots, b_n \mapsto t_n}$.
    
    By induction hypothesis, $s[\rho^i\mapsto u] \Rrightarrow s'[\rho^i\mapsto u']$ and $v[\rho^i\mapsto u] \Rrightarrow v'[\rho^i\mapsto u']$. As in $s'$, in $s'[\rho^i\mapsto u']$ there is attribute $c$ and it is free. So, reduction via APP$_{c}^{\Rrightarrow}$ is possible.
    
    \begin{multline*}
        t[\rho^i\mapsto u] 
        \equiv s(c\mapsto v)[\rho^i\mapsto u]
        \equiv s[\rho^i\mapsto u](c\mapsto v[\rho^i\mapsto u])\\
        \Rrightarrow  \mkObject{a_1 \mapsto \varnothing, \ldots, a_k \mapsto \varnothing, c \mapsto \inct{}{v'[\rho^{i}\mapsto u']}, b_1 \mapsto t_1[\rho^{i+1}\mapsto \inct{}{u'}], \ldots, b_n \mapsto t_n[\rho^{i+1}\mapsto \inct{}{u'}]} \\
        \equiv  \mkObject{a_1 \mapsto \varnothing, \ldots, a_k \mapsto \varnothing, c \mapsto \inct{}{v'}[\rho^{i+1}\mapsto \inct{}{u'}], b_1 \mapsto t_1[\rho^{i+1}\mapsto \inct{}{u'}], \ldots, b_n \mapsto t_n[\rho^{i+1}\mapsto \inct{}{u'}]} \\
        \equiv  \mkObject{a_1 \mapsto \varnothing, \ldots, a_k \mapsto \varnothing, c \mapsto \inct{}{v'}, b_1 \mapsto t_1, \ldots, b_n \mapsto t_n}[\rho^i\mapsto u']
    \end{multline*}
    \end{enumerate}
\end{proof}

\begin{proposition}
    Let $t$ be a $\varphi$-term. Then $t \Rrightarrow t^{+}$.
\end{proposition}
\begin{proof}
    By induction on structure of $t$,
    \begin{enumerate}
        \item if $t\equiv\rho^n$, then by cong$_\rho^{\Rrightarrow}$, $t \Rrightarrow \rho^n \equiv (\rho^n)^+$.
        \item if $t\equiv\mkObject{a_1 \mapsto \varnothing, \ldots, a_k \mapsto \varnothing, b_1 \mapsto t_1, \ldots, b_n \mapsto t_n}$, then by induction hypothesis, $t_i \Rrightarrow t_i^+$, and by cong$_\text{OBJ}^{\Rrightarrow}$, $t \Rrightarrow \mkObject{a_1 \mapsto \varnothing, \ldots, a_k \mapsto \varnothing, b_1 \mapsto t_1^+, \ldots, b_n \mapsto t_n^+} \equiv t^+$.
        \item if $t\equiv t_1.c$, then by induction hypothesis, $t_1\Rrightarrow t_1^+$, and
        \begin{enumerate}
            \item if $t_1^+ \equiv \mkObject{\ldots, a\mapsto t_a, \ldots}$, then by DOT$_{c}^{\Rrightarrow}$, $t\Rrightarrow t_a[\phi^0 \mapsto t_1^+]$
            \item if $c \notin \attr(t_1^{+})$ and $\varphi \in \attr(t_1^{+})$, then by DOT$_{c}^{\varphi\Rrightarrow}$, $t\Rrightarrow t_1^{+}.\varphi.c \equiv t^+$
            \item otherwise, by cong$_{DOT}^{\Rrightarrow}$, $t\Rrightarrow t_1^{+}.c \equiv t^+$
        \end{enumerate}
        \item if $t\equiv t_1(c\mapsto u)$, then by induction hypothesis, $t_1\Rrightarrow t_1^+$, and $u \Rrightarrow u^+$, and
        \begin{enumerate}
            \item if $t^{+} \equiv \mkObject{\ldots, \varphi \mapsto t_\varphi, \ldots}$, then by APP$_{c}^{\Rrightarrow}$, $t\Rrightarrow \mkObject{\ldots, c \mapsto \inct{}{u^{+}}, \ldots}$
            \item otherwise, by cong$_{APP}^{\Rrightarrow}$, $t\Rrightarrow t_1^{+}(c\mapsto u^+) \equiv t^+$
        \end{enumerate}
    \end{enumerate}
\end{proof}

\begin{lemma}[Main Lemma, Lemma~\ref{lemma:normal-reduction:main}]
    $t \Rrightarrow s$ implies $t \headredmany r \innerparred s$ for some $r$.
\end{lemma}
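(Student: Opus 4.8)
The plan is to follow Takahashi's technique and argue by induction on the derivation of $t \Rrightarrow s$. The four congruence rules lift directly. For $\text{cong}_{\text{OBJ}}^{\Rrightarrow}$ and $\text{cong}_{\rho}^{\Rrightarrow}$ no head reduction is available ($t$ is already a weak head normal form), so we take $r \equiv t$ and conclude $t \innerparred s$ by $\text{cong}_{\text{OBJ}}^{\innerparred}$, resp.\ $\text{cong}_{\rho}^{\innerparred}$. For $\text{cong}_{\text{DOT}}^{\Rrightarrow}$ and $\text{cong}_{\text{APP}}^{\Rrightarrow}$ we have $t \equiv q.a$ (resp.\ $q(a \mapsto u)$) with premise $q \Rrightarrow q'$; the induction hypothesis gives $q \headredmany r_0 \innerparred q'$, these head reductions lift through $\text{cong}_{\text{DOT}}^{h}$ (resp.\ $\text{cong}_{\text{APP}}^{h}$) to $t \headredmany r_0.a$ (resp.\ $r_0(a \mapsto u)$), and we close with $\text{cong}_{\text{DOT}}^{\innerparred}$ (resp.\ $\text{cong}_{\text{APP}}^{\innerparred}$).

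For the redex rules the key observation is that $\innerparred$ is purely congruential, so if $r_0 \innerparred q'$ with $q'$ an object then $r_0$ is an object with the same attribute names and void attributes and with attached attributes related componentwise by $\Rrightarrow$; in particular $r_0 \Rrightarrow q'$. The case $\text{DOT}_c^{\varphi\Rrightarrow}$ is then immediate: from $q \headredmany r_0 \innerparred q'$ with $r_0$ an object possessing $\varphi$ but not $c$ we get $t \equiv q.c \headredmany r_0.c \headred r_0.\varphi.c$, and $r_0.\varphi.c \innerparred q'.\varphi.c \equiv s$ by two applications of $\text{cong}_{\text{DOT}}^{\innerparred}$. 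The case $\text{APP}_c^{\Rrightarrow}$ is similar: after $q \headredmany r_0 \innerparred q'$ we fire $\text{APP}_c$ and must check $\mkObject{\ldots, c \mapsto \inct{}{u}, \ldots} \innerparred \mkObject{\ldots, c \mapsto \inct{}{u'}, \ldots}$, which follows from $\text{cong}_{\text{OBJ}}^{\innerparred}$ together with a routine auxiliary lemma that $u \Rrightarrow u'$ implies $\inct{}{u} \Rrightarrow \inct{}{u'}$ (proved by induction, in the style of the increment lemmas already in the appendix).

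The delicate case is $\text{DOT}_c^{\Rrightarrow}$: here $t \equiv q.c$, $q \Rrightarrow q' \equiv \mkObject{\ldots, c \mapsto t_c, \ldots}$, $s \equiv t_c[\rho^0 \mapsto q']$. By the induction hypothesis $q \headredmany r_0 \innerparred q'$ with $r_0 \equiv \mkObject{\ldots, c \mapsto s_c, \ldots}$, $s_c \Rrightarrow t_c$, and $r_0 \Rrightarrow q'$; hence $t \headredmany r_0.c \headred s_c[\rho^0 \mapsto r_0]$. Since the substitution is at $\rho^0$, the head of $s_c[\rho^0 \mapsto r_0]$ is the head of $s_c$, which $s_c \Rrightarrow t_c$ may not have contracted, so $s_c[\rho^0 \mapsto r_0] \innerparred s$ need not hold. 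I would instead apply the Main Lemma recursively to $s_c \Rrightarrow t_c$, getting $s_c \headredmany s_c' \innerparred t_c$; push these head reductions through the substitution with the Substitution Lemma for $\headred$ (Lemma~\ref{lemma:normal-reduction:substitution}) to obtain $s_c[\rho^0 \mapsto r_0] \headredmany s_c'[\rho^0 \mapsto r_0]$; and close with the Substitution Lemma for $\innerparred$ (Lemma~\ref{lemma:normal-reduction:substitution-inner}) applied to $s_c' \innerparred t_c$ and $r_0 \innerparred q'$ to get $s_c'[\rho^0 \mapsto r_0] \innerparred t_c[\rho^0 \mapsto q'] \equiv s$. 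So $r \equiv s_c'[\rho^0 \mapsto r_0]$ works.

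The main obstacle is making this recursive appeal well-founded: $s_c$ need not be a subterm of $t$ and $s_c \Rrightarrow t_c$ need not be a subderivation of $t \Rrightarrow s$. This is clean only when $q$ is already an object (then $s_c \equiv q_c$, a subterm of $q$, which is a subterm of $t$); when $q$ is a non-object that parallel-reduces to an object, $r_0$ is produced by head-reducing $q$ and its component $s_c$ can be strictly larger. This is precisely the subtle point of Takahashi's argument; I would resolve it by choosing the induction measure with care — one dominating both the structure of $t$ and the amount of parallel reduction packed into redexes created along the way — and verifying that each substitution lemma and the recursive call strictly decreases it. I expect this bookkeeping, rather than any conceptual difficulty, to be the bulk of the work.
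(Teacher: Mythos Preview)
Your plan is the paper's proof, case for case: the paper also inducts on the derivation of $t \Rrightarrow s$, dispatches the four congruence rules exactly as you do, and in the $\text{DOT}_c^{\Rrightarrow}$ case applies the IH to the premise to get $t' \headredmany q \innerparred t''$, fires $\text{DOT}_c$ at $q$, then invokes the IH a second time on the component reduction $t_c' \Rrightarrow t_c$ (your $s_c \Rrightarrow t_c$) and closes with the substitution lemmas for $\headred$ and $\innerparred$.

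The well-foundedness concern you isolate is real, and the paper does not address it: at the second invocation it simply writes ``by induction hypothesis'' without arguing that the derivation of $t_c' \Rrightarrow t_c$---which is extracted from the $\innerparred$ output of the first IH call, not from the original derivation---is smaller. So you have reproduced the paper's argument and been more scrupulous about its one soft spot. The fix you gesture at is the right one and is modest: strengthen the IH to record that every $\Rrightarrow$-premise appearing in the produced $\innerparred$ derivation has strictly smaller derivation size than the input (this holds because those premises are either literal subderivations or locator-increments of subderivations, and increment preserves derivation size); then the second IH call is legitimate under induction on derivation size, and no exotic measure is needed.
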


\begin{proof}

By induction on $\Rrightarrow$,
\begin{enumerate}
    \item cong$_\text{OBJ}^{\Rrightarrow}$:
        If $t\equiv \mkObject{a_1 \mapsto \varnothing, \ldots, a_k \mapsto \varnothing, b_1 \mapsto t_1, \ldots, b_n \mapsto t_n} \Rrightarrow \mkObject{a_1 \mapsto \varnothing, \ldots, a_k \mapsto \varnothing, b_1 \mapsto t_1', \ldots, b_n \mapsto t_n'} \equiv s$, where $t_i\Rrightarrow t_i'$ for all $i\in \{1,\ldots,n\}$, then let $r \equiv t$, and from the definition of $\innerparred$, $t \headredmany r \innerparred s$.
        \item cong$_\rho^{\Rrightarrow}$: 
        If $t\equiv\rho^i \Rrightarrow \rho^i \equiv s$, let $r \equiv t$.
        \item cong$_\text{DOT}^{\Rrightarrow}$:
        If $t\equiv t'.c \Rrightarrow s'.c \equiv s$ where $t'\Rrightarrow s'$, then by induction hypothesis, there exists $r'$, such that $t' \headredmany r' \innerparred s'$. Hence, $t'.c \headredmany r'.c \innerparred s'.c$, from the definition of $\innerparred$ and $\headred$. 
        \item cong$_\text{APP}^{\Rrightarrow}$:
        If $t\equiv t'(c \mapsto u) \Rrightarrow s'(c \mapsto u') \equiv s$, where $t' \Rrightarrow s'$ and $u\Rrightarrow u'$, then by induction hypothesis, there exists $r'$, such that $t' \headredmany r' \innerparred s'$. Hence, $t'(c\mapsto u) \headredmany r'(c\mapsto u) \innerparred s'(c\mapsto u')$, from the definition of $\innerparred$ and $\headred$. 
        \item DOT$_c^{\Rrightarrow}$:
        If $t\equiv t'.c\Rrightarrow t_c[\rho^0 \mapsto t'']\equiv s$, where $t'\Rrightarrow t''\equiv \mkObject{\ldots, c\mapsto t_c, \ldots}$, then, by induction hypothesis, there exists $q$, such that $t' \headredmany q \innerparred t''$. Since $q \innerparred t''$ and $t''\equiv \mkObject{\ldots, c\mapsto t_c, \ldots}$, $q \equiv \mkObject{\ldots, c\mapsto t_c', \ldots}$ with $t_c' \Rrightarrow t_c$. Then $q.c \headred t_c'[\rho^0 \mapsto q]$.
        Moreover, we have $t_c' \Rrightarrow t_c$, and by induction hypothesis, there exists $r'$, such that $t_c' \headredmany r' \innerparred t_c$.
        Finally, with substitution lemmas for $\innerparred$ and $\headred$, we have $t\equiv t'.c \headredmany q.c \headred t_c'[\rho^0 \mapsto q] \headredmany r'[\rho^0 \mapsto q] \innerparred t_c[\rho^0 \mapsto t''] \equiv s$, so we can take $r'[\rho^0 \mapsto q]$ for  $r$.
        \item DOT$_c^{\varphi\Rrightarrow}$:
        If $t\equiv t'.c\Rrightarrow s'.\varphi.c \equiv s$, where $t'\Rrightarrow s'\equiv \mkObject{\ldots}$, and $c\notin\attr(s')$ and $\varphi\in\attr(s')$, then, by induction hypothesis, there exists $r'$, such that $t'\headredmany r' \innerparred s'$. As $r' \innerparred s'$, $r' \equiv \mkObject{\ldots}$ with $c\notin\attr(r')$ and $\varphi\in\attr(r')$. Hence $r'.c \headred r'.\varphi.c$. Secondly, $r'.\varphi.c \innerparred s'.\varphi.c$. So, for $r \equiv r'.\varphi.c$, $t\headredmany r \innerparred s$.
        \item APP$_c^{\Rrightarrow}$:
        If $t\equiv t'(c\mapsto u)\Rrightarrow \mkObject{a_1 \mapsto \varnothing, \ldots, a_k \mapsto \varnothing, c \mapsto \inct{}{u'}, b_1 \mapsto t_1, \ldots, b_n \mapsto t_n} \equiv s$, where $t'\Rrightarrow s'\equiv \mkObject{a_1 \mapsto \varnothing, \ldots, a_k \mapsto \varnothing, c \mapsto \varnothing, b_1 \mapsto t_1, \ldots, b_n \mapsto t_n}$ and $u\Rrightarrow u'$, then, by induction hypothesis, there exists $r'$, such that $t' \headredmany r' \innerparred s'$. As $r' \innerparred s'$, $r' \equiv \mkObject{a_1 \mapsto \varnothing, \ldots, a_k \mapsto \varnothing, c \mapsto \varnothing, b_1 \mapsto t_1', \ldots, b_n \mapsto t_n'}$ with $t_i' \Rrightarrow t_i$. Hence $r'(c\mapsto u) \headred \mkObject{a_1 \mapsto \varnothing, \ldots, a_k \mapsto \varnothing, c \mapsto \inct{}{u}, b_1 \mapsto t_1', \ldots, b_n \mapsto t_n'}$. Let $r \equiv \mkObject{a_1 \mapsto \varnothing, \ldots, a_k \mapsto \varnothing, c \mapsto \inct{}{u}, b_1 \mapsto t_1', \ldots, b_n \mapsto t_n'}$. By cong$_{\text{OBJ}}^{\innerparred}$, $r \innerparred s$, so, $t\headredmany r \innerparred s$.
\end{enumerate}

\end{proof}

\begin{lemma}[Substitution Lemma for $\headred$, Lemma~\ref{{lemma:normal-reduction:substitution}}]
    If $t \headred s$, then $t[\rho^n \mapsto q] \headred s[\rho^n \mapsto q]$.
\end{lemma}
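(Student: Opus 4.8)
The plan is to induct on the derivation of $t \headred s$, treating each of the five head-reduction rules. For the two congruence rules cong$_{\text{DOT}}^h$ and cong$_{\text{APP}}^h$ the argument is routine: locator substitution commutes with the head frame, so $(t.a)[\rho^n \mapsto q] \equiv t[\rho^n \mapsto q].a$ and $(t(a\mapsto u))[\rho^n \mapsto q] \equiv t[\rho^n \mapsto q](a \mapsto u[\rho^n \mapsto q])$. The induction hypothesis then gives $t[\rho^n \mapsto q] \headred t'[\rho^n \mapsto q]$ (the subderivation being structurally smaller), and reapplying the same congruence rule closes the case; the unchanged argument $u$ is merely carried along as $u[\rho^n \mapsto q]$. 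For rule DOT$_c^\varphi$ I would use the fact that locator substitution on an object term does not alter its set of attributes, so $t[\rho^n \mapsto q]$ is again an object term having $\varphi$ among its attributes and not $c$; hence DOT$_c^\varphi$ applies to $t[\rho^n \mapsto q].c$, producing $t[\rho^n \mapsto q].\varphi.c$, which is exactly $(t.\varphi.c)[\rho^n \mapsto q]$.

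The two substantive cases are DOT$_c$ and APP$_c$, and both are discharged by invoking earlier technical lemmas about reordering substitutions and increments. In the DOT$_c$ case, where $t \equiv \mkObject{\ldots, c \mapsto t_c, \ldots}$ and $t.c \headred t_c[\rho^0 \mapsto t]$, unfolding substitution on the object gives $t[\rho^n \mapsto q] \equiv \mkObject{\ldots, c \mapsto t_c[\rho^{n+1} \mapsto \inct{}{q}], \ldots}$, so DOT$_c$ rewrites $t[\rho^n \mapsto q].c$ to $t_c[\rho^{n+1} \mapsto \inct{}{q}][\rho^0 \mapsto t[\rho^n \mapsto q]]$; it then remains only to recognise this term as $(t_c[\rho^0 \mapsto t])[\rho^n \mapsto q]$, which is precisely the Substitutions reordering lemma (Lemma~\ref{lemma:swap-substitutions}) instantiated with $j := 0$ and $i := n$. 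In the APP$_c$ case, where $t$ has a void attribute $c$ and $t(c\mapsto u)$ head-reduces by filling $c$ with $\inct{}{u}$, the same unfolding makes APP$_c$ applicable to $t[\rho^n \mapsto q](c \mapsto u[\rho^n \mapsto q])$, and the resulting object matches $(\mkObject{\ldots, c \mapsto \inct{}{u}, \ldots})[\rho^n \mapsto q]$ once one uses the increment-and-substitution swap lemma, namely $\inct{}{u[\rho^n \mapsto q]} \equiv (\inct{}{u})[\rho^{n+1} \mapsto \inct{}{q}]$, again at $j := 0$, $i := n$.

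The hard part will be bookkeeping rather than anything conceptual: one must keep the index shifts aligned so that the term obtained by reapplying a head-reduction rule is \emph{literally} syntactically equal to the result of substituting into the contractum, not merely reducible to it. Since both required coincidences have already been isolated as the substitutions-reordering lemma and the increment--substitution swap lemma, the proof should go through as a direct case split with no further surprises.
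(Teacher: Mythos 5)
Your proposal matches the paper's proof essentially step for step: induction on the derivation of $t \headred s$, routine handling of the two congruence rules, the attribute-set-preservation observation for DOT$_c^\varphi$, the substitutions-reordering lemma (at $j=0$, $i=n$) for DOT$_c$, and the increment--substitution swap lemma for APP$_c$. The instantiations you give are exactly the ones the paper uses, so the argument goes through as stated.
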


\begin{proof}
    By induction on $\headred$:
    \begin{enumerate}
        \item cong$_{\text{DOT}}^h$:
        If $t.a \headred t'.a$ and $t \headred t'$, then by induction hypothesis $t[\rho^n \mapsto q] \headred t'[\rho^n \mapsto q]$, so $t.a[\rho^n \mapsto q] \headred t'.a[\rho^n \mapsto q]$
        \item cong$_{\text{APP}}^h$:
        If $t(a \mapsto u) \headred t'(a \mapsto u)$ and $t \headred t'$, then by induction hypothesis $t[\rho^n \mapsto q] \headred t'[\rho^n \mapsto q]$, so $t[\rho^n \mapsto q](a \mapsto u[\rho^n \mapsto q]) \headred t'[\rho^n \mapsto q](a\mapsto u[\rho^n \mapsto q])$, and $t(a \mapsto u)[\rho^n \mapsto q] \headred t'(a \mapsto u)[\rho^n \mapsto q]$.
        \item DOT$_{c}$:
        If $t.c \headred t_c \left[ \rho^0 \mapsto t \right]$ and $t \equiv \mkObject{\ldots, \textcolor{black}{c \mapsto t_c}, \ldots}$, then $t[\rho^n \mapsto q] \equiv \mkObject{\ldots, \textcolor{black}{c \mapsto t_c[\rho^{n+1} \mapsto \inct{}{q}]}, \ldots}$, and $t.c[\rho^n \mapsto q] \headred t_c[\rho^{n+1} \mapsto \inct{}{q}][\rho^0 \mapsto t[\rho^n \mapsto q]] \equiv t_c \left[ \rho^0 \mapsto t \right][\rho^n \mapsto q]$ by the Reordering Substitutions lemma.
        \item DOT$_{c}^\varphi$:
        If $t.c \headred t.\varphi.c$ and $c \notin \attr(t)$ $\varphi \in \attr(t)$ $t \equiv \mkObject{\ldots}$, then $t[\rho^n \mapsto q].c \headred t[\rho^n \mapsto q].\varphi.c$, as substitution does not change set of attributes. So, $t.c[\rho^n \mapsto q] \headred t.\varphi.c[\rho^n \mapsto q]$.
        \item APP$_{c}$:
        If $t (c \mapsto u) \headred \mkObject{a_1 \mapsto \varnothing, \ldots, a_k \mapsto \varnothing, \textcolor{black}{c \mapsto \inct{}{u}}, b_1 \mapsto t_1, \ldots, b_n \mapsto t_n}$ and $t \equiv \mkObject{a_1 \mapsto \varnothing, \ldots, a_k \mapsto \varnothing, \textcolor{black}{c \mapsto \varnothing}, b_1 \mapsto t_1, \ldots, b_n \mapsto t_n}$, then $t (c \mapsto u) [\rho^n \mapsto q] \equiv t[\rho^n \mapsto q] (c \mapsto u[\rho^n \mapsto q]) \headred \mkObject{a_1 \mapsto \varnothing, \ldots, a_k \mapsto \varnothing, \textcolor{black}{c \mapsto \inct{}{u[\rho^n \mapsto q]}}, b_1 \mapsto t_1[\rho^{n+1} \mapsto \inct{}{q}], \ldots, b_n \mapsto t_n[\rho^{n+1} \mapsto \inct{}{q}]}\equiv \mkObject{a_1 \mapsto \varnothing, \ldots, a_k \mapsto \varnothing, \textcolor{black}{c \mapsto \inct{}{u}[\rho^{n+1} \mapsto \inct{}{q}]}, b_1 \mapsto t_1[\rho^{n+1} \mapsto \inct{}{q}], \ldots, b_n \mapsto t_n[\rho^{n+1} \mapsto \inct{}{q}]} \equiv \mkObject{a_1 \mapsto \varnothing, \ldots, a_k \mapsto \varnothing, \textcolor{black}{c \mapsto \inct{}{u}}, b_1 \mapsto t_1, \ldots, b_n \mapsto t_n}[\rho^n \mapsto q]$, with the lemma about swapping substitution and increment.
    \end{enumerate}
\end{proof}

\begin{lemma}[Substitution Lemma for $\innerparred$, Lemma~\ref{lemma:normal-reduction:substitution-inner}]
    If $t \innerparred s$ and $q \innerparred r$, then $t[\rho^n \mapsto q] \innerparred s[\rho^n \mapsto r]$.
\end{lemma}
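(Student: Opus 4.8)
The plan is to argue by induction on the derivation of $t \innerparred s$; since the four rules defining $\innerparred$ are syntax-directed by the shape of $t$, this is essentially a structural induction on $t$. Two facts will be used repeatedly. First, $\innerparred$ is contained in $\Rrightarrow$ (each $\innerparred$-rule is an instance of the corresponding $\Rrightarrow$-rule), so from the hypothesis $q \innerparred r$ we immediately get $q \Rrightarrow r$. Second, besides the already-established Substitution Lemma for $\Rrightarrow$, I will use the auxiliary fact that increment preserves parallel reduction, i.e. $q \Rrightarrow r$ implies $\inct{}{q} \Rrightarrow \inct{}{r}$; this is a routine induction on $q \Rrightarrow r$ whose only non-trivial cases (DOT$_c^{\Rrightarrow}$ and APP$_c^{\Rrightarrow}$) are discharged using the increment/substitution swap lemmas proved above, exactly as in the proof of the Substitution Lemma for $\Rrightarrow$. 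I would state this increment fact as a short preliminary lemma.

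For the rule $\text{cong}_{\rho}^{\innerparred}$ we have $t \equiv s \equiv \rho^m$; depending on whether $m < n$, $m = n$, or $m > n$, both $t[\rho^n \mapsto q]$ and $s[\rho^n \mapsto r]$ are either the same locator ($\rho^m$ or $\rho^{m-1}$), for which $\text{cong}_{\rho}^{\innerparred}$ applies, or they are $q$ and $r$ respectively, for which the hypothesis $q \innerparred r$ closes the case. For $\text{cong}_{\text{DOT}}^{\innerparred}$ and $\text{cong}_{\text{APP}}^{\innerparred}$, substitution commutes with the head constructor: $(t_1.a)[\rho^n\mapsto q] \equiv (t_1[\rho^n\mapsto q]).a$ and $(t_1(a\mapsto u))[\rho^n\mapsto q] \equiv (t_1[\rho^n\mapsto q])(a\mapsto u[\rho^n\mapsto q])$. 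The induction hypothesis supplies the $\innerparred$-step on the head subterm, and on the argument of an application the Substitution Lemma for $\Rrightarrow$ (applied with $q \Rrightarrow r$) gives $u[\rho^n\mapsto q] \Rrightarrow u'[\rho^n\mapsto r]$; reapplying the same congruence rule finishes both cases. Note that even if the head subterm becomes an object term after substitution (because it was $\rho^n$ and $q$ is an object), nothing breaks, since $\innerparred$ never contracts a redex at the head.

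The only case with real bookkeeping is $\text{cong}_{\text{OBJ}}^{\innerparred}$, where $t \equiv \mkObject{a_i \mapsto \varnothing, b_j \mapsto t_j}$, $s \equiv \mkObject{a_i \mapsto \varnothing, b_j \mapsto s_j}$ and $t_j \Rrightarrow s_j$ for each $j$. Here $t[\rho^n\mapsto q] \equiv \mkObject{a_i\mapsto\varnothing, b_j \mapsto t_j[\rho^{n+1}\mapsto\inct{}{q}]}$ and $s[\rho^n\mapsto r] \equiv \mkObject{a_i\mapsto\varnothing, b_j \mapsto s_j[\rho^{n+1}\mapsto\inct{}{r}]}$, so it suffices to show $t_j[\rho^{n+1}\mapsto\inct{}{q}] \Rrightarrow s_j[\rho^{n+1}\mapsto\inct{}{r}]$ and then apply $\text{cong}_{\text{OBJ}}^{\innerparred}$. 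This follows from the Substitution Lemma for $\Rrightarrow$ applied to $t_j \Rrightarrow s_j$ together with $\inct{}{q} \Rrightarrow \inct{}{r}$, the latter being the preliminary increment lemma applied to $q \Rrightarrow r$. The main, and only mild, obstacle is precisely this de Bruijn coordination in the object case — making sure the increments and the shift of the substituted index by one line up — but it is entirely routine given the swap lemmas already in hand, so I anticipate no genuine difficulty.
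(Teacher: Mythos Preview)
Your proposal is correct and follows the same inductive structure as the paper's proof: induction on the derivation of $t \innerparred s$, with the $\rho$ case split in three, the DOT and APP cases handled by the induction hypothesis plus the Substitution Lemma for $\Rrightarrow$ on the argument, and the OBJ case reduced to the Substitution Lemma for $\Rrightarrow$ on the subterms. The one point on which you are actually more careful than the paper is the OBJ case: you correctly observe that substitution into an object shifts the index and increments the substituted term, so one needs $\inct{}{q} \Rrightarrow \inct{}{r}$, which you isolate as a preliminary increment-preservation lemma; the paper's proof writes $t_i[\rho^{n+1}\mapsto q] \Rrightarrow t_i'[\rho^{n+1}\mapsto r]$ without the increments and without commenting on this step, so your version is in fact cleaner here.
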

\begin{proof}
By induction on $\innerparred$:
\begin{enumerate}
    \item cong$_{\text{OBJ}}^{\innerparred}$:
    If $\mkObject{a_1 \mapsto \varnothing, \ldots, a_k \mapsto \varnothing, b_1 \mapsto t_1, \ldots, b_n \mapsto t_n} \innerparred \mkObject{a_1 \mapsto \varnothing, \ldots, a_k \mapsto \varnothing, b_1 \mapsto t_1', \ldots, b_n \mapsto t_n'}$ and 
    $t_i \Rrightarrow t_i'$, then with the Substitution Lemma $t_i[\rho^{n+1} \mapsto q] \Rrightarrow t_i'[\rho^{n+1} \mapsto r]$, and consequently, $\mkObject{a_1 \mapsto \varnothing, \ldots, a_k \mapsto \varnothing, b_1 \mapsto t_1, \ldots, b_n \mapsto t_n}[\rho^n \mapsto q] \innerparred \mkObject{a_1 \mapsto \varnothing, \ldots, a_k \mapsto \varnothing, b_1 \mapsto t_1', \ldots, b_n \mapsto t_n'}[\rho^n \mapsto r]$.
    \item cong$_{\rho}^{\innerparred}$:
    If $\rho^i \innerparred \rho^i$, then 
    \begin{enumerate}
        \item if $i<n$, $\rho^i[\rho^n \mapsto q] \equiv \rho^i \innerparred \rho^i \equiv \rho^i[\rho^n \mapsto r]$,
        \item if $i=n$, $\rho^i[\rho^n \mapsto q] \equiv q \innerparred r \equiv \rho^i[\rho^n \mapsto r]$,
        \item if $i>n$, $\rho^i[\rho^n \mapsto q] \equiv \rho^{i-1} \innerparred \rho^{i-1} \equiv \rho^i[\rho^n \mapsto r]$,
    \end{enumerate}
    \item cong$_{\text{DOT}}^{\innerparred}$: 
    If $t.a \innerparred t'.a$ and $t \innerparred t'$, then by induction hypothesis $t[\rho^n \mapsto q] \innerparred t'[\rho^n \mapsto r]$, hence $t.a[\rho^n \mapsto q] \innerparred t'.a[\rho^n \mapsto r]$.
    \item cong$_{\text{APP}}^{\innerparred}$:
    If $t(a \mapsto u) \innerparred t'(a \mapsto u')$,
    and $u \Rrightarrow u'$ and $t \innerparred t'$, then by induction hypothesis $t[\rho^n \mapsto q] \innerparred t'[\rho^n \mapsto r]$, and by Substitution Lemma $u[\rho^n \mapsto q] \Rrightarrow u'[\rho^n \mapsto r]$. Hence $t(a \mapsto u)[\rho^n \mapsto q] \innerparred t'(a \mapsto u')[\rho^n \mapsto r]$.
\end{enumerate}
\end{proof}

\begin{lemma}[Standardizing Reductions, Lemma~\ref{lemma:normal-reduction:standardizing}]
    For any $\varphi$-terms $t, r, s$ such that $t \innerparred r \headred s$, there exists $\varphi$-term $q$, such that $t \headredmany q \innerparred s$.
\end{lemma}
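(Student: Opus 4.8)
The plan is to induct on the derivation of the single head step $r \headred s$, in each case inverting $t \innerparred r$ to recover enough of the structure of $t$ to fire a head reduction, and then re-assembling. The congruence cases are routine: if $r \headred s$ is by cong$_{\text{DOT}}^h$, then $r \equiv r'.a$, $s \equiv s'.a$ with $r' \headred s'$, and since the only $\innerparred$-rule with a dot-term conclusion is cong$_{\text{DOT}}^{\innerparred}$, we get $t \equiv t'.a$ with $t' \innerparred r'$; the induction hypothesis supplies $q'$ with $t' \headredmany q' \innerparred s'$, and $q \equiv q'.a$ works. The cong$_{\text{APP}}^h$ case is the same, inverting via cong$_{\text{APP}}^{\innerparred}$ to get $t \equiv t'(a \mapsto u_0)$ with $t' \innerparred r'$ and $u_0 \Rrightarrow u$, and taking $q \equiv q'(a \mapsto u_0)$, which reduces internally to $s$ by cong$_{\text{APP}}^{\innerparred}$.

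For the base cases DOT$_c^\varphi$ and APP$_c$, the term $r$ has the form $v.c$ (resp.\ $v(c \mapsto u)$) with $v$ an object term; inverting $t \innerparred r$ (first cong$_{\text{DOT}}^{\innerparred}$ / cong$_{\text{APP}}^{\innerparred}$, then cong$_{\text{OBJ}}^{\innerparred}$, which is forced since only it concludes an object term) shows $t \equiv t'.c$ (resp.\ $t'(c \mapsto u_0)$) with $t'$ an object having exactly the same void attributes and attached-attribute labels as $v$, each attached term $t'_j$ satisfying $t'_j \Rrightarrow v_j$. Hence the same head rule fires on $t$: for DOT$_c^\varphi$ we obtain $t \headred t'.\varphi.c$ and $t'.\varphi.c \innerparred v.\varphi.c \equiv s$ by two uses of cong$_{\text{DOT}}^{\innerparred}$; for APP$_c$ we obtain $t \headred \mkObject{\ldots, c \mapsto \inct{}{u_0}, \ldots}$, and this $\innerparred s$ by cong$_{\text{OBJ}}^{\innerparred}$, using $t'_j \Rrightarrow v_j$ together with the routine fact that $\Rrightarrow$ commutes with $\inct{}{\cdot}$ (so $\inct{}{u_0} \Rrightarrow \inct{}{u}$).

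The hard part will be the DOT$_c$ base case, where $r \equiv v.c$, $v \equiv \mkObject{\ldots, c \mapsto v_c, \ldots}$, and $s \equiv v_c[\rho^0 \mapsto v]$. Inversion as above gives $t \equiv t'.c$ with $t' \equiv \mkObject{\ldots, c \mapsto t'_c, \ldots}$, $t'_c \Rrightarrow v_c$, and $t' \innerparred v$. Since $c \in \attr(t')$ the rule DOT$_c$ fires: $t \headred t'_c[\rho^0 \mapsto t']$. The obstacle is that $t'_c \Rrightarrow v_c$ is a full parallel step, not an internal one, so the Substitution Lemma for $\innerparred$ does not apply directly and there is no head sub-derivation to recurse on; instead I would feed $t'_c \Rrightarrow v_c$ to the Main Lemma to obtain $r''$ with $t'_c \headredmany r'' \innerparred v_c$, then iterate the Substitution Lemma for $\headred$ along $t'_c \headredmany r''$ to get $t'_c[\rho^0 \mapsto t'] \headredmany r''[\rho^0 \mapsto t']$, and finally apply the Substitution Lemma for $\innerparred$ (with $r'' \innerparred v_c$ and $t' \innerparred v$) to get $r''[\rho^0 \mapsto t'] \innerparred v_c[\rho^0 \mapsto v] \equiv s$. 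Composing, $t \headred t'_c[\rho^0 \mapsto t'] \headredmany r''[\rho^0 \mapsto t'] \innerparred s$, so $q \equiv r''[\rho^0 \mapsto t']$ works. There is no circularity, since the Main Lemma and both Substitution Lemmas are established before, and independently of, this lemma.
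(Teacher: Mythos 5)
Your proposal is correct and follows essentially the same route as the paper: induction on the derivation of $r \headred s$, inverting $t \innerparred r$ in each case, handling the congruence cases by the induction hypothesis and the DOT$_c^\varphi$/APP$_c$ cases by firing the same head rule, with the Main Lemma doing the real work in the DOT$_c$ case. The only difference is local to that case: the paper first applies the substitution lemma for full parallel reduction to get $t'_c[\rho^0 \mapsto t'] \Rrightarrow t_c[\rho^0 \mapsto r']$ and then invokes the Main Lemma on that single step, whereas you invoke the Main Lemma on $t'_c \Rrightarrow v_c$ first and then push the substitution through with the $\headred$- and $\innerparred$-substitution lemmas; both assemblies are valid and rest on the same previously established lemmas.
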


\begin{proof}
By induction on the structure of $r \headred s$:
\begin{enumerate}
    \item $r\equiv p.a$ and $s \equiv p'.a$ with $p\headred p'$. Since $t \innerparred r$, $t \equiv p''.a$ with $p'' \innerparred p$. We have $p'' \innerparred p \headred p'$, and, by induction hypothesis, $p'' \headredmany q' \innerparred p'$. With congruence reduction rules, for $q\equiv q'.a$, $t \headredmany q \innerparred s$.
    \item $r\equiv p(a\mapsto u)$ and $s \equiv p'(a\mapsto u)$ with $p\headred p'$. Since $t \innerparred r$, $t \equiv p''(a\mapsto u')$ with $p'' \innerparred p$ and $u' \Rrightarrow u$. We have $p'' \innerparred p \headred p'$, and, by induction hypothesis, $p'' \headredmany q' \innerparred p'$. With congruence reduction rules, for $q\equiv q'(a\mapsto u')$, $t \headredmany q \innerparred s$.
    \item $r \equiv r'.c$ where $r' \equiv \mkObject{a_1 \mapsto \varnothing, \ldots, a_k \mapsto \varnothing, c \mapsto t_c, b_1 \mapsto t_1, \ldots, b_n \mapsto t_n}$, and $s \equiv t_c[\rho^0 \mapsto r']$. Since $t \innerparred r$, $t \equiv t'.c$ with $t' \innerparred r'$. Hence, $t' \equiv \mkObject{a_1 \mapsto \varnothing, \ldots, a_k \mapsto \varnothing, c \mapsto t_c', b_1 \mapsto t_1', \ldots, b_n \mapsto t_n'}$ with $t_c' \Rrightarrow t_c$ and $t_i' \Rrightarrow t_i$ for $i \in \{1, \ldots, n\}$. This implies that $t \headred t'_c[\rho^0 \mapsto t']$. By Substitution Lemma, $t'_c[\rho^0 \mapsto t'] \Rrightarrow t_c[\rho^0 \mapsto r']$, and by the Main Lemma, $t'_c[\rho^0 \mapsto t'] \headredmany p \innerparred t_c[\rho^0 \mapsto r']$ for some $p$. For $q\equiv p$, $t \headredmany q \innerparred s$.
    \item $r \equiv r'.c$ where $r' \equiv \mkObject{\ldots}$, $c \notin \attr(r')$, $\varphi \in \attr(r')$, and $s \equiv r'.\varphi.c$. Since $t \innerparred r$, $t \equiv t'.c$ with $t' \innerparred r'$, and $t' \equiv \mkObject{\ldots}$, $c \notin \attr(t')$, and $\varphi \in \attr(t')$. This implies that $t \headred t'.\varphi.c$. Since $t'\innerparred r'$, $t'.\varphi.c \innerparred r'.\varphi.c$. So, for $q\equiv t'.\varphi.c$, $t \headredmany q \innerparred s$.
    \item $r \equiv r'(c\mapsto u)$ where $r' \equiv  \mkObject{a_1 \mapsto \varnothing, \ldots, a_k \mapsto \varnothing, c \mapsto \varnothing, b_1 \mapsto t_1, \ldots, b_n \mapsto t_n}$, and $s \equiv \mkObject{a_1 \mapsto \varnothing, \ldots, a_k \mapsto \varnothing, c \mapsto \inct{}{u}, b_1 \mapsto t_1, \ldots, b_n \mapsto t_n}$. Since $t \innerparred r$, $t \equiv t'(c\mapsto u')$ with $t' \equiv \mkObject{a_1 \mapsto \varnothing, \ldots, a_k \mapsto \varnothing, c \mapsto \varnothing, b_1 \mapsto t_1', \ldots, b_n \mapsto t_n'}$, where $t_i' \Rrightarrow t_i$, and $u' \Rrightarrow u$. This implies that $t \headred \mkObject{a_1 \mapsto \varnothing, \ldots, a_k \mapsto \varnothing, c \mapsto \inct{}{u'}, b_1 \mapsto t_1', \ldots, b_n \mapsto t_n'} \equiv q$. So, $t \headredmany q \innerparred s$.
\end{enumerate}
\end{proof}

\end{document}